\newcommand{\tcr}[1]{#1}
\newtheorem{theorem}{Theorem}
\newtheorem{lemma}{Lemma}
\theoremstyle{definition}
\newtheorem{definition}{Definition}
\newtheorem{remark}{Remark}
\def\bm#1{\mbox{\boldmath $#1$}}
\def\E{\mathbb E}
\def\AMSE{{\rm AMSE}}
\def\Cov{{\rm Cov}}
\begin{document}
	
	
	\renewcommand{\baselinestretch}{1.5}
	
	
	\markboth{\hfill{\footnotesize\rm JING ZHOU AND HUI ZOU} \hfill}
	{\hfill {\footnotesize\rm TESTING HETEROSCEDASTICITY} \hfill}

	$\ $\par
	
 \thispagestyle{plain}
 
	\fontsize{12}{14pt plus.8pt minus .6pt}\selectfont \vspace{0.8pc}
	\centerline{\large\bf HIGH-DIMENSIONAL NEWEY-POWELL TEST }
	\centerline{\large\bf VIA APPROXIMATE MESSAGE PASSING}
	\vspace{.4cm} 
	\centerline{Jing Zhou \footnote[1]{Department of Mathematics, University of Manchester. Email: jing.zhou@manchester.ac.uk. The manuscript was written during a research visit to the University of Minnesota (Twin Cities). The first revision was completed at the University of East Anglia, and the second at the University of Manchester. } and Hui Zou\footnote[7]{School of Statistics, University of Minnesota (Twin Cities). Email: zouxx019@umn.edu
}} 
	\vspace{.4cm} 
	\centerline{\it University of Manchester and University of Minnesota Twin Cities}
 
	\vspace{.55cm} \fontsize{12}{11.5pt plus.8pt minus.6pt}\selectfont
	

	\begin{quotation}
     \noindent {\it Abstract:}
\tcr{We propose a high-dimensional extension of the heteroscedasticity test proposed in \citet{newey1987asymmetric}. Our test is based on expectile regression in the proportional asymptotic regime where $n/p \to \delta \in (0,1]$. The asymptotic analysis of the test statistic uses the Approximate Message Passing (AMP) algorithm, from which we obtain the limiting distribution of the test and establish its asymptotic power. The numerical performance of the test is validated through an extensive simulation study. As real-data applications, we present the analysis based on ``international economic growth" data \citep{belloni2011inference}, which is found to be homoscedastic, and ``supermarket" data \citep{lan2016testing}, which is found to be heteroscedastic.}
  \vspace{9pt}\\
  \noindent {\it Key words and phrases:}
Heteroscedasticity, hypothesis testing, high-dimensional regression, approximate message passing, expectile regression
		\par
\end{quotation}\par
	
	\def\thefigure{\arabic{figure}}
	\def\thetable{\arabic{table}}
	
	\renewcommand{\theequation}{\thesection.\arabic{equation}}

	\fontsize{12}{14pt plus.8pt minus .6pt}\selectfont










	\section{Introduction}
\tcr{In many modern economic applications, datasets often contain a large number of covariates $p$ relative to the sample size $n$. When performing high-dimensional regression on such data, parameter estimation typically assumes homoscedastic errors for simplicity. However, heteroscedasticity is common in practice, and ignoring it can lead to inefficient estimates and invalid inference. Despite its practical importance, testing for heteroscedasticity remains both underexplored and challenging in high-dimensional settings.
\\
Classical heteroscedasticity tests, such as those proposed in \citet{breusch1979simple, white1980heteroskedasticity,koenker1982robust}, were developed under the assumption of fixed $p$, and are no longer valid when $p > n$, which is the primary focus of this paper. To address this gap, we propose a high-dimensional heteroscedasticity test in the proportional regime where $n/p \to \delta \in (0,1]$. Our approach extends the classical test of \citet{newey1987asymmetric}, which considers the following heteroscedastic linear model
\begin{equation}\label{eq: linear model}
    Y = X\beta_0 + \sigma(X \gamma_0)\varepsilon,
\end{equation}
where $X = (X_1, \ldots, X_p) \in \mathbb{R}^p$, $\beta_0, \gamma_0 \in \mathbb{R}^p$ are unknown parameter vectors, and $Y, \varepsilon \in \mathbb{R}$, with $\varepsilon$ independent of $X$. The function $\sigma(\cdot)$ captures heteroscedasticity, allowing the error variance to vary with the linear combination $X\gamma_0$.
\\
To detect nonconstant variance, we use the expectile as a distributional statistic, evaluated at various levels $\tau \in (0,1)$, defined as
\begin{equation}\label{eq:u_tau definition}
    u_\tau(Y) - \mathbb{E}(Y) = \frac{2\tau - 1}{1 - \tau} \int_{[u_\tau(Y), \infty)} (y - u_\tau(Y)) \, dF_Y(y).
\end{equation}
\\
In model \eqref{eq: linear model}, the $\tau$th expectile of $Y$ given $X$ is
$
u_\tau(Y \mid X) = X\beta_0 + u_\tau(\sigma(X\gamma_0)\varepsilon).
$
A key insight is that under heteroscedasticity, the population parameter $\beta_0$ is affected by an additional contribution from $\sigma(X\gamma_0)$, scaled by the expectile of the error term $\varepsilon$. In contrast, under homoscedasticity, $\beta_0$ remains invariant across different values of $\tau$. This observation becomes even clearer when a local linear approximation of $\sigma(\cdot)$ in \eqref{eq: linear approximation of linear model} is considered, and it motivates a testing framework based on distributional statistics \citep{newey1987asymmetric, koenker1982robust}, where the hypothesis concerns the influence of $\gamma_0$. }

 Compared to previous work  \citep{newey1987asymmetric}, this paper addresses two additional challenges commonly encountered in contemporary datasets: (i) the number of parameters $p$ exceeds the sample size $n$ and diverges at a linear rate: $n/p \rightarrow \delta \in (0,1]$; and (ii) the $p$-vectors $\beta_0$ and $\gamma_0$ are sparse. This linear rate regime was first discussed in the early AMP literature \citet{donoho2009message, donoho2010message} and then formally stated in \citet[Theorem~1]{bayati2011dynamics}. The AMP algorithm has attracted significant attention for the theoretical analysis of regularized M-estimators and low-rank matrix estimation. Subsequent work often separately discusses $\delta \in (0,1]$ for sparse regression with regularization \citep[for example,][]{su2017false,weinstein2017power,weinstein2020power,zhou2020detangling,bradic2016robustness} and $\delta \in (1, \infty)$ for unstructured regression coefficient vectors \citep{donoho2016high,bradic2016robustness,sur2019likelihood,sur2019modern,candes2020phase}. We refer to \citet{feng2022unifying} for a thorough literature review. In addition to the AMP framework, \citet{el2013robust,el2018impact} introduced an alternative way of analyzing the estimators in this $n, p$ proportional regime, with conclusions agreeing with the AMP analysis. Specifically, the analysis addresses unregularized and ridge regression, where a sparse structure is not assumed for the parameter vector.
  In this work, we focus on sparse regression and address the $n/p \to \delta \in (0,1]$ setting.
    
    In the past decade, fundamental research for high-dimensional linear regression has focused on sparse regularization \citep{tibshirani1996regression, candes2007dantzig, zou2006adaptive, fan2008sure, zhang2010nearly}. We refer to \citet{buhlmann2011statistics} and \citet{fan2020statistical} for a systematic review of this topic. Most existing literature in this field assumes homoscedastic regression errors, i.e., $\sigma(\cdot)$ is a constant, as opposed to the fact that heteroscedasticity is commonly observed in real data, see, for example, \citet{wang2012quantile} and \citet{daye2012high} on genomic data and \citet{belloni2014pivotal} on economic data. 

    There are natural solutions to address heteroscedasticity in model estimation. 
    \citet{cattaneo2018inference,jochmans2022heteroscedasticity} addressed this by splitting the $p$ covariate vector $X$ into a fixed low-dimensional subvector $X_s$ of main interest with unknown parameter $\beta_s$ and a nuisance subvector $X_{p-s}$ with nonnegligible size $q =p-s$. Their high dimensionality refers to a setting where $n > p$ but $p$ is considerably larger than in the classic setting, where $p/n \to 0$.
    The asymptotic normality is further derived for $\hat\beta_s$, and the main focus is on heteroscedastic-robust asymptotic covariance estimators. Addressed by \citet[Section~3.2, 3.3]{jochmans2022heteroscedasticity}, the consistency of the heteroscedasticity-robust covariance estimator requires the length of the nuisance parameter to satisfy $\lim_n \sup q/n < 1$, and \tcr{handling} the case of $1/2 <\lim_n \sup q/n < 1$ is already challenging.
    Another approach considers replacing the least-squares loss function with asymmetric loss functions, such as the quantile loss functions \citep{koenker1978regression, koenker_2005} 
    and the expectile loss functions 
    \citep{newey1987asymmetric}. 
    A thorough review of quantile regression can be found in  \citet{koenker2017quantile}. High-dimensional quantile regression, addressing $p > n$ setting, has been studied by several authors \citep{wu2009variable,belloni2011l1,wang2012quantile,zheng2013adaptive,fhdqr,smoothqr}. \citet{gu2016high} studied high-dimensional expectile regression and proposed COSALES, which couples two expectile loss functions at expectile level $0.5$ and another level of interest. By this construction, COSALES can simultaneously estimate parameter vectors $\beta_0$ and $\gamma_0$ in \eqref{eq: linear approximation of linear model}. 
    Additionally, \citet{zhou2021cross} proposed a cross-fitted residual regression for estimating heteroscedasticity in \eqref{eq: linear approximation of linear model}. 
    
  Despite recent efforts in estimating heteroscedasticity through methods such as quantile regression, expectile regression, or cross-refitted regression, testing for the presence of heteroscedasticity remains largely unexplored in the high-dimensional literature. Since quantile and expectile loss functions are robust to heteroscedastic regression errors, using these two asymmetric loss functions to construct heteroscedastic tests is natural and thus taken in this paper. Moreover, as previously mentioned, the Newey-Powell test is shown to be superior to the quantile-based test in low-dimensional settings \citep{newey1987asymmetric}. Therefore, we focus on the high-dimensional Newey-Powell test in this paper. In fact, the contrast between expectile regression and quantile regression is even more significant in high dimensions, owing to the smoother loss function in expectile regression. \citet{gu2016high} shows that fitting a high-dimensional expectile regression model is as simple as fitting Lasso regression. In contrast, serious efforts have been dedicated to handling the nonsmooth loss function in high-dimensional quantile regression, including ADMM \citep{fhdqr} and smoothing approximation \citep{smoothqr}.

    \tcr{
   Given the above considerations, we extend the Newey-Powell test \citep{newey1987asymmetric} to sparse high-dimensional settings where $n/ p \to \delta \in (0, 1]$, $n, p\to \infty$, and $\ell_1$-regularization or other sparse regularization is necessary to obtain good estimators. 
    The heteroscedasticity of regression error is tested via $H_{0,j}: \gamma_{0, j} = 0$ versus $H_{1, j}: \gamma_{0, j} \neq 0$, $j = 1, \ldots, p$. 
    By using the AMP theory, we establish the joint asymptotic normality of multiple expectile estimators, which we then use to construct the test statistic. We analytically examine the size and power of the proposed test.
}
     
    The rest of the paper is organized as follows. In Section~\ref{sec: sec 2}, we first introduce the notation and assumptions for the high-dimensional heteroscedastic linear model in \eqref{eq: linear approximation of linear model}, following the expectile regression in Section~\ref{ssec: expectile}. The AMP algorithm is introduced in Section~\ref{sec: AMP algorithm}, and the main technical results are included in Section~\ref{ssec:state evolution for multiple estimator}. Section~\ref{sec: test} presents the proposed heteroscedasticity test. The analytical expression of the power function is presented in Theorem~\ref{thm:power function of test}. We discuss a practically important decorrelation step for using AMP in applications in Section~\ref{sec:decorrelation}.  Section~\ref{sec:numerical} contains simulation studies and real data examples from \citet{belloni2011inference} and \citet{lan2016testing}. \tcr{Section~\ref{sec: remarks} concludes the paper. The assumptions, proofs of lemmas and theorems, and auxiliary lemmas and definitions are provided in Appendix~\ref{appendix: assumptions}, Appendix~\ref{appendix: proofs}, and Appendix~\ref{appendix: auxiliary}, respectively.}
    
\section{Heteroscedastic linear regression}\label{sec: sec 2}
    We restate the well-studied form of the heteroscedastic model from \citet{koenker1982robust} and \citet{newey1987asymmetric}
     \begin{equation}\label{eq: linear approximation of linear model}
        Y = X\beta_0 + \big(1 + X\gamma_0\big)  \varepsilon,
    \end{equation}
     where $\beta_0, \gamma_0 \in \mathbbm R^p$ are unknown parameter vectors. In addition, we have $n$ independent and identically distributed (i.i.d.) copies of the pair $(Y, X)$ denoted by $(Y_i, X_i), i= 1, \ldots, n$. Further, we denote $n$-vectors $\bm Y = (Y_1, \ldots, Y_n)^\top$, $\bm \varepsilon = (\varepsilon_{1}, \ldots, \varepsilon_{n})^\top$; the design matrix $\bm X = (X_1, \ldots\, X_n)^\top$. The components of the vector $\bm\varepsilon$ satisfy Assumption~\ref{cond:error}.

    \begin{enumerate}[label=(A1)]
      \item  \label{cond:error} 
       $\varepsilon_1, \ldots, \varepsilon_n$ are i.i.d. random variables with mean zero, a finite $(2\kappa - 2)$th moment for $\kappa\ge 2$.
     \end{enumerate}
    The constant $\kappa$ depends on the loss function used to estimate the unknown parameters $\beta_0,\gamma_0$; in our case, the expectile loss function in \eqref{eq: expectile loss}. This statement will become clear in Section~\ref{ssec:state evo} when we discuss the AMP in detail.
    Further, following \citet{bayati2011dynamics}, the empirical measure of a $p$-vector is defined as putting $1/p$ weight on each component of the vector. Then, the unknown parameter vector $\gamma_0$ complies with Assumption~\ref{cond:gamma}.
     \begin{enumerate}[label=(A2)]
     \item \label{cond:gamma} 
     The empirical measure of the components of the $p$-vector $\gamma_0$, when $p\to \infty$, converges weakly to the probability measure of a random variable $\Gamma_0$ with bounded $(2\kappa-2)$th moment for $\kappa \ge 2$. 
    \end{enumerate}
     This assumption precedes a standard Assumption~\ref{cond:beta} on the parameter vector $\beta_0$ in the Approximate Message Passing (AMP) literature. 
     In practice, the proposed test is valid for $\beta$ with finite $p$ components; see Section~\ref{sec: test}, and the hypothesis in \eqref{eq:hypothesis} tests the components $\gamma_j$'s for finite $p$. We also demonstrate applying the proposed test to two real data examples with finite $p$ in Section~\ref{ssec: real data}. 
     \begin{enumerate}[label=(A3)]
     \item \label{cond:beta}
     The empirical measure of the components of the $p$-vector $\beta_0$, when $p\to \infty$, converges weakly to the probability measure of a random variable $B_0$ with bounded $(2\kappa-2)$th moment for $\kappa\ge 2$. 
    \end{enumerate}
     Notice $X\gamma_0 = \sqrt{n} X  \frac{\gamma_0}{\sqrt{n}},$ where
     the scaled design $\sqrt{n} X $ and parameter vector $\frac{\gamma_0}{\sqrt{n}}$ agree with the assumptions
     in \citet{koenker1982robust}.
      Further, the sequence $\gamma_n = \frac{\gamma_0}{\sqrt{n}}$ restricts the calculation to local alternatives and guarantees well approximation of sequences of multiplicative heteroscedastic models. 
    
    Following \citet[Theorem~1(iii)]{newey1987asymmetric}, the $\tau$th expectile of $Y$ given $X$ can be expressed as  
   \begin{equation}\label{eq: conditional expectile of Y given X}
      u_\tau(Y \mid X) = X\beta_0 + (1 + X\gamma_0)u_\tau(\varepsilon) = X \beta_\tau + u_\tau(\varepsilon),
   \end{equation}  
    where $\beta_\tau = \beta_0 + \gamma_0 u_\tau(\varepsilon)$ is the population parameter vector corresponding to the $\tau$th expectile of $Y$, approximated by $X\beta_\tau$. Here, $u_\tau(\varepsilon)$ is the $\tau$th expectile of the error term, as defined in \eqref{eq:u_tau definition}, and will be denoted simply as $u_\tau$ henceforth. The parameter $\beta_\tau$ is fixed and depends on the scalar $u_\tau$. In this spirit, we rewrite the model as follows
    \begin{equation}\label{eq: hetero linear model}
    Y = X \beta + \varepsilon,
    \end{equation}
    where $\beta = \beta_0 + \varepsilon \gamma_0$ and the error $\varepsilon$ satisfies Assumption~\ref{cond:error}. In addition, we assume the ratio $n / p \to (0, 1]$ when $n, p \to \infty$.
    Let $s = \|\beta\|_0$, where $\|\cdot\|_0$ denotes the $\ell_0$-norm of a vector. Obviously, $s \le  \|\beta_0\|_0+\|\gamma_0\|_0$. The way of parametrization in \eqref{eq: hetero linear model} has two reasons: (1) we want to express the model to fit it in the AMP framework for asymptotic analysis, where an independent error term stands alone; (2) the error term $\varepsilon$ determines the location change of $\beta$ from $\beta_0$. 
    Identifying $\gamma_0$ relies on making use of different spots of the error distribution and requires a pair of regularized estimators. This is technically challenging due not only to the non-trivial asymptotic behavior of regularized estimators when $p>n$, but also to the correlation structure among pairs of estimators whose population parameters do not have the same center. 
    

\subsection{Expectile regression under heteroscedasticity}\label{ssec: expectile}
    
The expectile regression \citep{newey1987asymmetric} estimates the $\tau$th conditional expectile of $Y$ by minimizing an asymmetric quadratic loss function defined as
\begin{equation}\label{eq: expectile loss}
     \rho_{\tau}(x; u_\tau) =|\tau - I\{x \leq u_\tau\}|  (x - u_\tau)^2 ,
\end{equation}
where $\tau \in (0, 1)$ is the expectile level and $u_\tau$ denotes the $\tau$th expectile of error $\varepsilon$. which is the solution to the nonlinear equation \citep[Eq.~(2.7)]{newey1987asymmetric}.
A special case is $\tau  = 1/2$, where the expectile loss function reduces to the least-squares loss. 
Under our model, the conditional expectile of $Y$ at expectile level $\tau$ follows $X(\beta_0 + \gamma_0 u_\tau) + u_\tau$ \citep{newey1987asymmetric} and the coefficient vector $\beta_\tau = \beta_0 + \gamma_0 u_\tau$, where $u_\tau$ is the expectile of error $\varepsilon$ at $\tau$ in \eqref{eq:u_tau definition}. Since $\varepsilon$ has mean zero (Assumption~\ref{cond:error}), expectile regression at $\tau = 1/2$ does not identify the combination of $\gamma_0$ and $\beta_0$. Instead, $\beta_{1/2}$ only estimates $\beta_0$.

The proposed heteroscedasticity test is constructed based on the estimator of the parameter vector $\beta$ via minimizing the  empirical expectile loss function. Specifically, we consider the $\ell_1$-regularized expectile estimator defined as follows
\begin{eqnarray}\label{eq: expectile estimator n<p}
    \widehat\beta_\tau (\lambda) = \arg\min_{\beta \in \mathbbm R^p}\sum_{i = 1}^n \rho_{\tau} (Y_i -X_i\beta; u_\tau) + \lambda\|\beta\|_1,
\end{eqnarray}
where $\|\cdot\|_1$ denotes the $\ell_1$-norm and $\lambda$ denotes the tuning parameter. 
Due to technical restrictions of the AMP algorithm (see Section~\ref{sec: AMP algorithm}), we do not jointly optimize the intercept term $u_\tau$ and the slope vector $\beta$. Instead, we treat the term $u_\tau$ as a parameter of the expectile loss function. Recentering $Y$ by the intercept $u_\tau$ is important to guarantee the following AMP analysis when considering robust loss functions. In our simplest homoscedastic error case, plugging in $u_\tau$ reduces the conditional $\tau$-mean in \eqref{eq: conditional expectile of Y given X} to only $X\beta_\tau$. This intercept term is only ignorable when we address the conditional mean of $Y$ given X by the Assumption~\ref{cond:error}, the zero mean of $\varepsilon$. In homoscedastic settings, the error expectile $u_\tau$ can be estimated by solving  
\begin{equation}\label{eq: u_tau estimate}
    \hat u_\tau - \frac{1}{n}\sum_{i = 1}^n r_i = [(2\tau - 1)/(1 - \tau)] \frac{1}{n}\sum_{i = 1}^n (r_i - \hat u_\tau)  I\{r _i \ge \hat u_\tau \},
\end{equation}
where $r_i = Y_i - \widehat\beta_\tau X_i$'s are obtained using an initial estimator $\widehat\beta$ such as the regularized least-squares estimators, the regularized Huber estimator, etc. However, estimating $u_\tau$ becomes challenging under heteroscedasticity since solving \eqref{eq: u_tau estimate} estimates the $\tau$'s expectile of $\sigma(X\gamma_0) \varepsilon$ instead of $\varepsilon$. We observe that $u_\tau$ is reflected in the intercept term of $\gamma_0u_\tau$ in the composition $\beta_\tau = \beta_0 + \gamma_0 u_\tau$. Since the COupled Sparse Asymmetric LEast Squares (COSALES) \citep{gu2016high} estimates $\beta_0$ and $\gamma_0 u_\tau$ simultaneously for any $\tau \in (0, 1), \tau \neq 1/2$, we use the intercept estimator of $\gamma_0 u_\tau$ from the COSALES  as a surrogate for the true $u_\tau$ in \eqref{eq:u_tau definition}. Further, to evaluate the estimation error of $u_\tau$ on the test, we compared the numerical results using the true and estimated $u_\tau$. 
We observe that using the estimated $u_\tau$ from the COSALES does not affect the test results. 
\tcr{We also investigated the performance of the test when plugging in \eqref{eq: u_tau estimate} under the null; see Table~5.} The numerical performance appears similar.


Importantly, when the regression errors are i.i.d., i.e., $\sigma(\cdot) = 1$, the slope $\beta$ does not vary across different expectile levels 
in the classical setting with fixed $p$ and $n\to \infty$.  We propose to test the violation of homoscedasticity under diverging $n, p$ based on the ground of common slope under homoscedasticity at different expectile levels $\tau_k, k = 1, \ldots, K$.

\section{Approximate message passing for expectile regression}\label{sec: AMP algorithm}
The Approximate Message Passing (AMP) analysis starts by regularizing the loss functions $\rho$ by using the proximal mapping operator defined as 
\begin{eqnarray}\label{eq: proximal loss function}
  \mbox{Prox}(z; b) = \arg \min_{x \in \mathbbm R} \{ b\rho(x) + \frac{1}{2}(x-z)^2 \},  
\end{eqnarray}
where the parameter $b >0$ is a positive scalar that controls the step size moving towards the minimum of the loss function $\rho$. 
In our case, the proximal operator for expectile loss \eqref{eq: expectile loss} can be obtained following the derivation in Section~\ref{sssec: proximal operator calculation} and can be expressed as follows
\begin{eqnarray}\label{eq:proximal expectile}
\mbox{Prox}_{\tau}(z;b) &=&
\left\{\begin{array}{ll}
   \big(z+ 2b (1-\tau)u_\tau \big)\big({2b(1-\tau) + 1}\big)^{-1} , & z\leq u_\tau  \nonumber\\
  \big({z+ 2b \tau u_\tau}\big)\big({2b\tau +1 }\big), & z > u_\tau
\end{array}\right. \\
&=& \frac{z - 2bu_\tau (\tau  - I\{z \leq u_\tau\})}{1- 2b(\tau - I\{z \leq u_\tau\})}.
\end{eqnarray}


With the analytical expressions of the proximal operator for the expectile loss function \eqref{eq:proximal expectile}, we further work out the expression of the effective score function \citep{donoho2016high} defined as
\begin{eqnarray}
\label{eq:Gtilde}
\widetilde G(z;b) = b  \partial\rho(x)|_{x = {\rm Prox}(z;b)}, \mbox{ with } b>0 \tcr{,}
\end{eqnarray}
where $\partial\rho$ is the subgradient of $\rho$ defined as
$\partial\rho(x) = \{y: \rho(u)\ge \rho(x) + y(u-x), \mbox{for all } u \}$.
This leads to the subgradient of the expectile loss function in \eqref{eq: derivative expectile loss} as
\begin{equation}\label{eq: derivative expectile loss}
    \partial\rho_{\tau}(z) = 2 |I\{z\leq u_\tau\} - \tau| (z- u_\tau).
\end{equation}
One could interpret the effective score function by associating it with the score function. By plugging \eqref{eq:proximal expectile} in \eqref{eq: derivative expectile loss} and using the definition of the effective score function in \eqref{eq:Gtilde}, we obtain the effective score function as follows 
\begin{eqnarray}\label{eq:effective score expectile}
\widetilde G_{\tau}(z;b) = 
\left\{\begin{array}{ll}
   \frac{2b(1-\tau)}{2b(1-\tau) + 1 } (z - u_\tau) , & z \leq u_\tau \\
  \frac{2b\tau}{2b\tau + 1} (z- u_\tau) , & z >u_\tau
\end{array}\right. 
\end{eqnarray}
%
\tcr{To analyze regularized estimators under a sparse structure, specifically, when $ s/p \to \omega$, where $s$ denotes the number of nonzero components of $\beta$, \citet{bradic2016robustness,zhou2020detangling} incorporated sparsity by introducing the rescaled effective score function}
\begin{equation}\label{eq:scaled eff score}
    G(z; b) = \delta \omega^{-1} \widetilde{G}(z; b).
\end{equation}

\subsection{The approximate message passing algorithm}\label{ssec: AMP algorithm}
We consider the Approximate Message Passing (AMP) algorithm to approximate the estimators in 
\eqref{eq: expectile estimator n<p} in the sense that the averaged difference between the estimators obtained from the AMP algorithm and the estimators by minimizing the loss functions converges to zero almost surely, see \citet[p.~2562, Eq.~(23) and below]{zhou2020detangling} for a summary of discussions in the relevant references \citep{bayati2011lasso,donoho2016high,sur2019likelihood,bradic2016robustness,huang2020asymptotic}.

The AMP algorithm is an iterative algorithm indexed by $t$ consisting of three steps that belong to the general recursion in \citet{bayati2011dynamics}. We will describe the general recursion when stating the theory in Section~\ref{ssec:state evolution for multiple estimator}. Let the matrix $\bm X$ satisfy Assumption~\ref{cond:design}.
\begin{enumerate}[label = (A4)]
\item \label{cond:design}
     A standard Gaussian design: $X_i, i = 1, \ldots, n$ are i.i.d. copies of $X$, where $X^\top \in \mathbbm R^p$ with $X_{j}\sim N(0,1/n), j=1,\ldots,p$ independent and identically distributed components. 
\end{enumerate}

Since $\beta_\tau$ is a sparse vector with most components being zero, the algorithm starts with an initialization $\widehat\beta_{(0)} = 0, t = 0$ and proceed at $t \ge 1$ following steps:
\begin{enumerate}
    \item[] \textbf{Step 1. Adjust residuals} \\
    The adjusted residuals $z_{(t)}$ are obtained by
     \begin{eqnarray}\label{eq:adjust_resid}
    \lefteqn{z_{(t)} =  {\bm Y} - {\bm X}\widehat\beta_{(t)} + }
      \\& n^{-1} G(z_{(t-1)}; b_{(t-1)})  \nonumber
      \displaystyle\sum_{j=1}^p I\left\{ \eta\big(\widehat\beta_{(t-1),j} + {X}_{j} G(z_{(t-1)}; b_{(t-1)}); \theta_{(t-1)}\big) \not= 0\right\},
    \end{eqnarray}
    where $G$ is defined in \eqref{eq:scaled eff score} and $\eta(\cdot; \theta_{(t-1)})$ is the soft-thresholding operator with parameter $\theta_{(t-1)}$ defined as $\eta(\beta; \theta) = \mbox{sgn}(\beta)\max(|\beta| - \theta,0)$. 
    \item[] \textbf{Step 2. Update the parameter for effective score function}\\
    The scalar parameter $b_{(t)}$ is chosen such that the average of the effective score function has slope 1, i.e.,
    $
    \frac{1}{n} \sum_{i = 1}^n \partial_1 G (z_{i, (t)}; b_{(t)}) = 1
    $, where $\partial_1 G(z; b)$ denotes the derivative of $G$ w.r.t. the first argument $z$ in case of differentiable function $G$.
    Hence, for the expectile regression with calculations in Section~\ref{sssec: update b calculation}, the parameter $b_{(t)}$ is updated by solving
\begin{equation}\label{eq: update b}
    \frac{\omega}{\delta} =  \frac{2b(1-\tau)}{2b(1-\tau) + 1 } \frac{\#\{i: z_{(t), i} \leq u_\tau\}}{n} +  \frac{2b\tau}{2b\tau + 1 } \frac{\#\{i: z_{(t), i} > u_\tau\}}{n},
\end{equation}
where $\omega$ is estimated by the fraction of the nonnull components in an initial estimator $\widehat\beta$, $u_\tau$ is replaced by plugging in an estimator from the COSALES algorithm, see discussions under \eqref{eq: u_tau estimate}. The update $b_{(t)}$ at step $t$ is taken to be the first value on a fine grid that solves \eqref{eq: update b}.

   \item[] \textbf{Step 3. Update the estimators for $\beta$}\\
 Update the coefficient estimator for $\beta$ as follows
    \begin{eqnarray}\label{eq:pseudo_data and eq:esti}
     \widehat\beta_{(t+1)} = \eta (\widetilde\beta_{(t)}; \theta_{(t)}), \mbox{ where }   \widetilde\beta_{(t)} = \widehat\beta_{(t)} + {\bm X}^\top G(z_{(t)}; b_{(t)}).
    \end{eqnarray}
The estimator $\widehat\beta$ is an approximation for regularized estimators in \eqref{eq: expectile estimator n<p}, where the soft-thresholding operator $\eta(\cdot; \theta)$ is the proximal operator for the regularizer $\|\beta\|_1$. The tuning parameter $\theta$ is chosen to minimize the mean squared error (MSE) of $\widehat\beta$. As suggested in \citet{donoho1994minimax, donoho1998minimax, donoho2009message,bayati2011lasso}, we take $\theta_{(t)}  = \alpha \bar\zeta_{{\rm emp}(t)}$ for scalar $\alpha$ and 
\begin{equation}\label{eq: zeta est}
    \bar\zeta^2_{{\rm emp}, (t)} = \frac{1}{n} \sum_{i = 1}^n G(z_{i, (t)}; b_{(t)})^2.
\end{equation}
In practice, one chooses a grid of values for $\alpha$, then obtains the corresponding estimator $\widehat\beta$ and the estimated MSE for each value of $\alpha$ on the grid. The optimal $\alpha$ is chosen to be the one with the minimum estimated MSE. The details can be found in \citet[Algorithm~1]{zhou2020detangling}. We also implement practical stopping rules: the maximum number of iterations is set to \( T = 100 \), and convergence is declared when  
$\mbox{MSE}(\widehat{\beta}_{(t+1)}, \widehat{\beta}_{(t)}) = \frac{1}{p} \sum_{j=1}^p (\widehat{\beta}_{j,(t+1)} - \widehat{\beta}_{j,(t)})^2 < \mbox{tol}$.
In the simulations, the tolerance level $\mbox{tol}$ is set to $10^{-8}$ for high sparsity settings and $10^{-7}$ for medium sparsity settings.

Details of the population version of the parameter $\bar\zeta_{{\rm emp}, (t)}^2$ can be found in Section~\ref{ssec:state evo}.
The estimator $\widetilde\beta$ is of major interest in this project since it is connected to the debiased estimator \citep{javanmard2014hypothesis,van2014asymptotically} and follows a Gaussian distribution asymptotically \citep{bayati2011lasso, bayati2013estimating}. We will discuss this in Section~\ref{ssec:state evolution for multiple estimator}. 
\end{enumerate}

\subsection{State evolution}\label{ssec:state evo}
When $p, n \to \infty$, the averaged-over-components performance of estimators $\widetilde\beta_{(t)}$ and $\widehat\beta_{(t)}$ from the AMP algorithm can be described using two important parameters $\bar\zeta_{(t)}^2$, $\bar\sigma_{(t)}^2$. Assume that the following limit exists $\bar\sigma_{(0)}^2 = \lim_{p\to \infty}(p \delta)^{-1} \|q_{(0)}\|^2 < \infty$.
With such initialization, the state evolution recursion for $t\in \{0\} \bigcup \mathbbm{N}$ is defined as 
\begin{equation}\label{eq:state_evolution_zeta}
      \bar\zeta_{(t)}^2 = \E [G(\varepsilon  + \bar\sigma_{(t)}Z; b_{(t)})^2],
  \end{equation}
\begin{equation}\label{eq:state_evolution_sigma}
    \bar\sigma_{(t)}^2 = \delta^{-1} \E[ (\eta(B + \bar\zeta_{(t-1)}Z; \theta_{(t - 1)} ) -  B )^2],
  \end{equation}
   where $B$ denotes the limiting form of the components of $\beta$ in \eqref{eq: hetero linear model} and $Z \sim N(0, 1)$ is a standard Gaussian random variable that is independent of $B$ and $\varepsilon$. The parameter $\bar\zeta_{(t)}$ is the limiting version of \eqref{eq: zeta est} when $n, p \to \infty$ \citep[Section~B.2.3]{zhou2020detangling}. 
   
   The readers might wonder if \eqref{eq:state_evolution_zeta} and \eqref{eq:state_evolution_sigma} still hold under heteroscedasticity since the relevant literature has not discussed this issue and often assumes homoscedasticity. It is worth mentioning that the state evolution recursion in \eqref{eq:state_evolution_zeta} and \eqref{eq:state_evolution_sigma} is valid as long as the empirical distribution of $\varepsilon$ and the components of $\beta$ converge weakly \citep[Theorem~2]{bayati2011dynamics}. We will discuss this in detail in Section~\ref{ssec:state evolution for multiple estimator}.
 
  Using the two parameters, we can analyze the limiting performance of the estimator $\widehat\beta$ from the AMP algorithm.
  A typical example is analyzing the asymptotic MSE (AMSE), defined as 
   $\AMSE(\widehat\beta_{(t+1)}, \beta) = \lim_{p\to\infty} \frac{1}{p}\sum_{j = 1}^p(\widehat\beta_{(t+1),j} - \beta_j)^2$.
It is easy to show that AMSE complies with 
\begin{eqnarray*}
\AMSE(\widehat\beta_{(t+1)}, \beta) =
\lim_{p \to \infty} \frac{1}{p} \sum_{j = 1}^p \Big(\eta(\widetilde\beta_{(t),j} ; \theta_{(t)}) - \beta_{j}\Big)^2 
\stackrel{a.s.}{=} \E[\{\eta(B + \bar\zeta_{(t)}Z; \theta_{(t}) ) -  B \}^2],
\end{eqnarray*}
which corresponds to \citet[Eq.~(3.4)]{bradic2016robustness} with $\widetilde\beta_{(t),j}$ in (\ref{eq:pseudo_data and eq:esti}). 
An important observation from \eqref{eq:state_evolution_sigma} is that, at each iteration step $t$, the estimator $\widetilde\beta_{(t),j}$ is obtained by adding a Gaussian noise $N(0, \bar\zeta^2_{(t)})$ to the $j$th component of the true parameter vector $\beta_{ j}$. This suggests that $\widetilde\beta_{(t)} \sim N(\beta, \bar\zeta^2_{(t)}I_{p\times p})$ for finite marginals. Similar results are also obtained for the Lasso estimator in \citet{bayati2011lasso}, which is a special case of the $\ell_1$-regularized expectile regression when $\tau = 1/2$. We will come back to the asymptotic normality of $\widetilde\beta_{(t)}$ in Section~\ref{ssec:state evolution for multiple estimator} after introducing the main technical Lemma for multiple estimators $\widetilde\beta_{k, (t)}, k= 1,1 \ldots, K$.

\subsection{Asymptotic analysis for multiple estimators}\label{ssec:state evolution for multiple estimator}
The heteroscedasticity test for expectile regression is based on the difference of the estimators at different expectile levels $\tau_k, k = 1, \ldots, K$. The test statistic is based on asymptotic analysis for multiple estimators $\widehat\beta_{k}$'s among which the correlation is nontrivial when estimated using the same dataset \citep[Lemma~1]{zhou2020detangling}. Since estimating the correlation requires information on the unknown parameter vector $\beta$, we further extend the discussion on the correlations to the adjusted residuals $z_{k, (t)}$'s in the main technical result---Lemma~\ref{lemma:converge_cov} and construct an adjusted-residual-based estimator for the covariance between $\widetilde\beta_{k}$'s. Further, we want to address that the technical results, including the general recursion and Lemma~\ref{lemma:converge_cov} in this section, hold for similar loss functions, such as the quantile loss function. 

We describe the general recursions in \citet{bayati2011dynamics} for the completeness of the paper and a better connection of Lemma~\ref{lemma:converge_cov} and the AMP algorithm in Section~\ref{ssec: AMP algorithm}. Under the heteroscedastic model \eqref{eq: hetero linear model}, the coefficient vector $\beta_k = \beta_0 + u_k \gamma_0$, where $u_k$ is a constant featuring the strength of heteroscedasticity which depends on $\varepsilon$. For expectile regression, $u_k$ is the error expectile. By Assumptions~\ref{cond:gamma} and \ref{cond:beta}, the empirical distribution of the parameter vector $\beta_k$ converges weakly to $B_k = B_0 + u_k \Gamma_0$. In cases where $u_k = 0$ or homoscedasticity such that $\Gamma_0 = 0$ degenerates, $B_k = B_0$.

In addition, let a noise vector $\bm{\varepsilon} \in \mathbbm R^n$ and the coefficient vector $\beta_k \in \mathbbm R^p$ satisfy Assumptions~\ref{cond:error}, \ref{cond:beta}. By fixing the initial condition $q_{k, (0)}$ and recall the convention $m_{k,(-1)} = 0$, the general recursion for $t\ge 0$ is defined as
\begin{eqnarray}\label{eq: general recursion}
    &&h_{k, (t+1)} = {\bm X}^\top m_{k, (t)} - \xi_{1,k, (t)}q_{k, (t)},\ \ \ m_{k, (t)} = g_{1,k, (t)}(d_{k, (t)}, \bm\varepsilon)  \nonumber\\
   &&d_{k, (t)} = {\bm X} q_{k, (t)} - \xi_{2,k,(t)} m_{k, (t-1)},\ \ \  q_{k,(t)} = g_{2,k,(t)}(h_{k,(t)}, \beta_k) 
\end{eqnarray}
where 
\begin{equation}\label{eq: effective score general}
    \xi_{1,k, (t)} = n^{-1} \sum_{i = 1}^n \partial_1 g_{1,k,(t)}(d_{k,(t), i}, \varepsilon_i),
\end{equation}
\begin{equation}\label{eq: mse general}
    \xi_{2, (t)} = (\delta p)^{-1}\sum_{j = 1}^p\partial_1 g_{2,k,(t)}(h_{k,(t), j}, \beta_{k,j}).
\end{equation}
As a special case, the AMP algorithm takes
\begin{eqnarray*}
    &h_{k, (t+1)} = \beta_{k} - {\bm X}^\top G_k(z_{k, (t)}; b_{k, (t)}) - \widehat\beta_{k, (t)}, \\
    &q_{k, (t)} = \widehat\beta_{k, (t)} - \beta_k, \qquad z_{k,(t)} = \bm\varepsilon - d_{k, (t)},
\end{eqnarray*}
which indicates 
\begin{equation}\label{eq: general recursion for AMP--residual}
    d_{k,(t)} = \bm\varepsilon - z_{k,(t)}, \qquad m_{k,(t)} = - {G_k}(z_{k, (t)}; b_{k, (t)})
\end{equation}
with the functions $g_{1,k,(t)}(x_1,x_2) = - G_k(x_2-x_1; b_{k,(t)})$, and $g_{2, k, (t)}(x_1, \beta) = \eta(\beta - x_1; \theta_{k, (t)}) - \beta$. Notice $\widehat\beta_{k, (t)}$ the regularized estimator in \eqref{eq:pseudo_data and eq:esti}, and 
\begin{equation}\label{eq: h_k}
    h_{k, (t+1)} = \beta_k - \widetilde\beta_{k, (t)}.
\end{equation}
This detail is mathematically trivial but allows us to apply Lemma~\ref{lemma:converge_cov} directly to obtain the asymptotic normality of any pairs $(\widetilde\beta_{k_1, (t)}, \widetilde\beta_{k_2,(t)})$, as well as later obtain the power function for the proposed hypothesis test in Theorem~\ref{thm:power function of test}. 

Before presenting the main technical Lemma, we assume a moment condition Assumption~\ref{cond:lemma_convergence}, which also complements Assumptions~\ref{cond:error}, \ref{cond:gamma}, and \ref{cond:beta}. Recall in Assumptions~\ref{cond:error}, \ref{cond:gamma}, and \ref{cond:beta}, we state that the empirical measures of the components of the vectors $\beta_0$, $\gamma_0$, and $\bm \varepsilon$ converge weakly to the probability measures of random variables $B_0$, $\Gamma_0$, and $\varepsilon$, respectively. Let $\widehat{f}_{\beta_0(p)}$, $\widehat{f}_{\gamma_0(p)}$, $\widehat{f}_{\bm\varepsilon(p)}$ denote the empirical measures of $\beta_0$, $\gamma_0$, $\bm\varepsilon$, i.e., putting equal weights on the components of the vectors. Specifically, for the sequence of vectors $\bm \varepsilon(p)$, we index the sequence by $p$ to unify the notation following \citet{bayati2011dynamics}. This sequence index is legit since our asymptotic theory is derived under $n/p \to \delta \in (0, 1]$. Then, Assumption~\ref{cond:lemma_convergence} can be stated as follows
\begin{enumerate}[label = (A5)]
 \item \label{cond:lemma_convergence}
 For some $\kappa >1 $, let the empirical measures $\widehat{f}_{\beta_0(p)}$, $\widehat{f}_{\gamma_0(p)}$, and $\widehat{f}_{\bm\varepsilon(p)}$ converge weakly to the probability measures $f_{B_0}$, $f_{\Gamma_0}$, and $f_\varepsilon$. 
 In addition, assume the following $(2\kappa - 2)$th moment conditions hold
 \begin{enumerate}
         \item $\lim_{p \to \infty} \E_{\widehat{f}_{\beta_0(p)}}(B_0^{2\kappa - 2}) = \E_{f_{B_0}}(B_0^{2\kappa - 2}) < \infty$
         \item $\lim_{p \to \infty} \E_{\widehat{f}_{\gamma_0(p)}}(\Gamma_0^{2\kappa - 2}) = \E_{f_{\Gamma_0}}(\Gamma_0^{2\kappa - 2}) < \infty$
         \item $\lim_{p \to \infty} \E_{\widehat{f}_{\bm\varepsilon (p)}}(\varepsilon^{2\kappa - 2}) = \E_{f_{\varepsilon}}(\varepsilon^{2\kappa - 2}) < \infty$
         \item $\lim_{p \to \infty} \E_{\widehat{f}_{q_{k}(p)}}(B_k^{2\kappa - 2}) < \infty$.
     \end{enumerate} 
\end{enumerate}
The constant $\kappa$ depends on the pseudo-Lipschitz order of function $g_{1, k, (t)}$, see Definition~\ref{def:pseudo_lipschitz} for the definition of pseudo-Lipschitz functions and Lemma~\ref{lemma:converge_cov} for associating the order $\kappa$ with the function $g_1 = \tilde\psi_{c'}, \tilde\psi_{c''}$. As described above, in the special case where the general recursion \eqref{eq: general recursion} is the AMP algorithm, function $g_1$ takes a more specific form, that is, the negative effective score function $-G_k(\cdot; \cdot)$.

\begin{lemma}
\label{lemma:converge_cov}
Let the sequences of design matrices $\{\bm X(p)\}$, location coefficient vectors $\{\beta_0(p)\}$, heteroscedasticity parameter vectors $\{\gamma_0(p)\}$, error vectors $\{\bm{\varepsilon}(p)\}$, initial condition vectors $\{q_k(p)\}$ be weakly convergent sequences for $K$ recursions satisfying Assumptions \ref{cond:error}--\ref{cond:lemma_convergence}. Let $\{\bar\sigma_{k, (t)}^2, \bar\zeta_{k, (t)}^2\}$ be defined uniquely by the recursions in \eqref{eq:state_evolution_zeta} and \eqref{eq:state_evolution_sigma}. These are the state evolution parameters for the $k$th estimation with initialization $\bar\sigma^2_{k, (0)} = \lim_{n\to \infty}(p\delta)^{-1} \|q_{k,(0)}\|^2$. Further, we assume a constant $u_k$ for each of the $K$ recursions specifying the signal strength of the heteroscedasticity parameter, such that the parameter vector \eqref{eq: hetero linear model} has the composition 
$\beta_k = \beta_0 + u_k \gamma_0$.
Assume Lemma 1 in \citet{bayati2011dynamics} holds individually for each of the $K$ recursions. Additionally, for all pseudo-Lipschitz functions $\tilde\psi_{\rm c}: \mathbbm R^{t+2} \to \mathbbm R$ of order $\kappa_{\rm c}$ for some $1 \leq \kappa_{\rm c} = \kappa/2$ with $\kappa$ as in \ref{cond:lemma_convergence} and the iteration index $t \geq 0, t\in \mathbbm Z$,
\begin{eqnarray}\label{eq:correlation of betas convergence}
{\lim_{p\to \infty} \frac{1}{p}\sum_{j = 1}^p \tilde\psi_{\rm c}(h_{k_1, (1), j},\ldots, h_{k_1, (t+1), j}, \beta_{k_1, j})
 \tilde\psi_{\rm c}(h_{k_2, (1), j}, \ldots, h_{k_2, (t+1), j}, \beta_{k_2, j})} \stackrel{a.s.}{=} \nonumber\\
 \E[\tilde\psi_{\rm c}(\bar\zeta_{k_1, (0)}Z_{k_1, (0)},\ldots, \bar\zeta_{k_1, (t)}Z_{k_1, (t)}, B_{k_1})
 \tilde\psi_{\rm c}(\bar\zeta_{k_2, (0)}Z_{k_2, (0)},\ldots, \bar\zeta_{k_2, (t)}Z_{k_2, (t)}, B_{k_2})]. \nonumber\\
\end{eqnarray}
And, for all pseudo-Lipschitz functions $\tilde\psi_{\rm c'},\tilde\psi_{\rm c''} : \mathbbm R^{t+2} \to \mathbbm R$ of order $\kappa_{\rm c'}, \kappa_{\rm c''}$ for some $1 \leq \kappa_{\rm c'}, \kappa_{\rm c''} \leq \kappa/2$, where $\kappa_c' + \kappa_c'' = \kappa$ with $\kappa$ as in \ref{cond:lemma_convergence},
\begin{eqnarray}\label{eq:correlations of error convergence}
{\lim_{n\to \infty} \frac{1}{n}\sum_{i = 1}^n \tilde\psi_{\rm c'}(d_{k_1, (0), i},\ldots, d_{k_1,(t), i}, \varepsilon_i)
 \tilde\psi_{\rm c''}(d_{k_2, (0), i}, \ldots, d_{k_2, (t), i}, \varepsilon_i)} \stackrel{a.s.}{=} \nonumber\\
 \E[\tilde\psi_{\rm c'}(\bar\sigma_{k_1, (0)}\hat Z_{k_1, (0)},\ldots, \bar\sigma_{k_1, (t)}\hat Z_{k_1, (t)}, \varepsilon)
 \tilde\psi_{\rm c''}(\bar\sigma_{k_2, (0)}\hat Z_{k_2, (0)},\ldots, \bar\sigma_{k_2, (t)}\hat Z_{k_2, (t)}, \varepsilon)]. \nonumber\\
\end{eqnarray}
where $(Z_{k, (0)}, \ldots, Z_{k, (t)}), (\hat Z_{k, (0)}, \ldots, \hat Z_{k, (t)})$ are multivariate Gaussian with, $k = k_1, k_2$, is a $(t+1)$-dimensional zero-mean multivariate standard normal vector independent of $B_k$, $\varepsilon$; at iteration $t$, $(Z_{k_1, (t)}, Z_{k_2, (t)})$ and $(\hat Z_{k_1, (t)}, \hat Z_{k_2, (t)})$ are standard normal 2-vectors with covariance not necessarily equal to zero.

Further, for $0<t_1, t_2<t$, with $\stackrel{P}{\to}$ denoting convergence in probability, it holds that 
\begin{eqnarray}\label{eq: lemma 1c relevant}
    \frac{1}{n} \sum_{i = 1}^n m_{k_1, (t_1), i}  m_{k_2, (t_2), i} \stackrel{P}{\to}\lim_{p\to \infty} \frac{1}{p}\sum_{j = 1}^p h_{k_1, (t_1+1), j}  h_{k_2, (t_2+1), j}.
\end{eqnarray}
\end{lemma}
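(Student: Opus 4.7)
The plan is to reduce the multi-recursion statement to the single-recursion result of \citet[Lemma 1]{bayati2011dynamics} by constructing a joint \emph{stacked} recursion. For each iteration $t$, concatenate the $K$ state vectors $(h_{k,(t)}, d_{k,(t)}, m_{k,(t)}, q_{k,(t)})_{k=1}^{K}$ into single vectors. Because all $K$ recursions share the same Gaussian design $\bm X$ (Assumption~\ref{cond:design}) and the same noise $\bm\varepsilon$, and because each $\beta_k = \beta_0 + u_k\gamma_0$ is a linear combination of two converging empirical measures (Assumptions~\ref{cond:gamma}, \ref{cond:beta}, \ref{cond:lemma_convergence}), the stacked sequences still form a converging sequence in the sense required by \citet{bayati2011dynamics}. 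The nonlinearities $g_{1,k,(t)}$ and $g_{2,k,(t)}$ act componentwise in the row/column index, so their stacked counterparts inherit the same pseudo-Lipschitz order. Applying the Bayati--Montanari recursion lemma to the stacked system then yields a joint state evolution whose limits are multivariate Gaussian vectors $(Z_{k_1,(t)}, Z_{k_2,(t)})$ and $(\hat Z_{k_1,(t)}, \hat Z_{k_2,(t)})$ with off-diagonal covariances
\begin{equation*}
\bar\sigma_{k_1,k_2,(t)}^{2} = \delta^{-1}\E\!\left[\big(\eta(B_{k_1}+\bar\zeta_{k_1,(t-1)}Z_{k_1,(t-1)};\theta_{k_1,(t-1)})-B_{k_1}\big)\big(\eta(B_{k_2}+\bar\zeta_{k_2,(t-1)}Z_{k_2,(t-1)};\theta_{k_2,(t-1)})-B_{k_2}\big)\right],
\end{equation*}
and analogously for $\bar\zeta_{k_1,k_2,(t)}^{2}$, with diagonal entries coinciding with the individual state-evolution parameters already defined through \eqref{eq:state_evolution_zeta} and \eqref{eq:state_evolution_sigma}.

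Next I would derive \eqref{eq:correlation of betas convergence} and \eqref{eq:correlations of error convergence}. The single-recursion Bayati--Montanari lemma guarantees almost-sure convergence of $\frac{1}{p}\sum_j \psi(h_{k,(1),j},\ldots,h_{k,(t+1),j},\beta_{k,j})$ to the corresponding Gaussian expectation, for any pseudo-Lipschitz $\psi$. Applied to the stacked recursion, the same statement holds with $\psi$ taken to be the \emph{product} $\psi(\cdot,\cdot) = \tilde\psi_{\rm c}(h_{k_1,\cdot,j},\beta_{k_1,j})\tilde\psi_{\rm c}(h_{k_2,\cdot,j},\beta_{k_2,j})$, provided this product is itself pseudo-Lipschitz of some order controllable by the $(2\kappa-2)$th moment in Assumption~\ref{cond:lemma_convergence}. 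This is exactly where the order constraint $\kappa_{\rm c}=\kappa/2$ (respectively $\kappa_{\rm c'}+\kappa_{\rm c''}=\kappa$ for the $\varepsilon$-side statement) enters: a direct product bound shows that the product of two pseudo-Lipschitz functions of orders $\kappa_{\rm c'}$ and $\kappa_{\rm c''}$ is pseudo-Lipschitz of order $\kappa_{\rm c'}+\kappa_{\rm c''}=\kappa$, and Assumption~\ref{cond:lemma_convergence} gives exactly the $(2\kappa-2)$th moments needed to guarantee integrability of the limiting expectation and uniform integrability of the empirical averages. The statement for $d_{k,(t),i}$ and $\varepsilon_i$ in \eqref{eq:correlations of error convergence} follows identically after noting $d_{k,(t)} = \bm\varepsilon - z_{k,(t)}$ from \eqref{eq: general recursion for AMP--residual}, so the joint convergence of $d_{k_1,(t)},d_{k_2,(t)}$ is just the $n$-side (rather than $p$-side) conclusion of the same stacked recursion lemma.

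The main obstacle I anticipate is the careful bookkeeping around pseudo-Lipschitz orders. In \citet{bayati2011dynamics}, the single-function statement is proved by a conditioning argument that tracks orthogonal projections of $\bm X$ onto previously-queried directions; extending it to products across $k_1\neq k_2$ requires either viewing the product as a single test function (and checking its order) or invoking Cauchy--Schwarz and applying the single-function version to $\tilde\psi_{\rm c'}^{2}$ and $\tilde\psi_{\rm c''}^{2}$. The constraint $\kappa_{\rm c'}+\kappa_{\rm c''}=\kappa$ is precisely what makes the former route work under the moment Assumption~\ref{cond:lemma_convergence}; verifying the pseudo-Lipschitz bound for the product, and checking that the relevant moments of $B_k = B_0 + u_k\Gamma_0$ inherit from those of $B_0$ and $\Gamma_0$ via Minkowski, are the key technical steps. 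Finally, \eqref{eq: lemma 1c relevant} is obtained as a direct consequence of the stacked conditioning argument: the defining relation $h_{k,(t+1)} = \bm X^\top m_{k,(t)} - \xi_{1,k,(t)}q_{k,(t)}$, combined with the fact that in the Bayati--Montanari conditional law $\bm X^\top m_{k,(t)}$ is a Gaussian with covariance determined by the inner products $\langle m_{k_1,(t_1)}, m_{k_2,(t_2)}\rangle/n$, yields the claimed identity between the $p$-dimensional and $n$-dimensional empirical inner products, with the correction term $\xi_{1,k,(t)}q_{k,(t)}$ vanishing in the limit by the almost-sure convergence of $\xi_{1,k,(t)}$ to its state-evolution value and the boundedness of the second moments of $q_k$.
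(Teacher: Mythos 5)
Your overall strategy --- view the $K$ recursions as one joint system driven by the common $\bm X$ and $\bm\varepsilon$, control the cross terms by treating the product $\tilde\psi_{\rm c'}\tilde\psi_{\rm c''}$ as a single pseudo-Lipschitz test function of order $\kappa_{\rm c'}+\kappa_{\rm c''}=\kappa$, and read off \eqref{eq: lemma 1c relevant} from the conditional Gaussian law of $\bm X^\top m_{k,(t)}$ --- identifies the right ingredients, and the product-of-pseudo-Lipschitz-functions bound together with the Minkowski argument for the moments of $B_k=B_0+u_k\Gamma_0$ are exactly the technical lemmas the paper isolates. However, the central step ``applying the Bayati--Montanari recursion lemma to the stacked system'' is a genuine gap, not a reduction. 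Lemma~1 of \citet{bayati2011dynamics} is stated and proved for a single recursion with iterates $h_{(t)}\in\mathbbm R^p$, $d_{(t)}\in\mathbbm R^n$ and componentwise scalar nonlinearities; once you concatenate the $K$ state vectors, the stacked nonlinearities act on $K$-dimensional blocks and the conditioning/projection argument in their proof must track the joint Gaussian law of the pair $({\bm X}^\top m_{k_1,(t)}^\perp,{\bm X}^\top m_{k_2,(t)}^\perp)$ rather than a single projection. That joint state evolution is precisely the content of Lemma~\ref{lemma:converge_cov}, so invoking it as already available is circular. The paper instead re-runs the Bayati--Montanari induction for the coupled pair: at each step it conditions on $\mathcal{D}_{k_1,(t),(t)}\cup\mathcal{D}_{k_2,(t),(t)}$, verifies the moment condition of their Theorem~3 (strong law for triangular arrays) for the centered product $\tilde\psi_{\rm c'}\tilde\psi_{\rm c''}-\E_{\bm X}[\,\cdot\,]$, and then passes to the population expectation via their Lemma~4 --- none of which is delivered for free by the single-recursion lemma.

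A second, smaller gap is in your treatment of \eqref{eq: lemma 1c relevant}. Knowing that ${\bm X}^\top m_{k_1,(t_1)}$ and ${\bm X}^\top m_{k_2,(t_2)}$ are conditionally jointly Gaussian with covariance governed by $\langle m_{k_1,(t_1)},m_{k_2,(t_2)}\rangle/n$ does not by itself give concentration of the empirical inner product $\frac1p\sum_j h_{k_1,(t_1+1),j}h_{k_2,(t_2+1),j}$ on that quantity: one must also show by induction that the parallel components $\sum_r\delta'_{k,r}h_{k,(r+1)}$ contribute exactly $\frac1n\langle m^{\|}_{k_1,(t_1)},m^{\|}_{k_2,(t_2)}\rangle$, and that the cross terms between the parallel part of one recursion and the fresh orthogonal Gaussian of the other vanish by independence. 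The paper handles this with an explicit two-index induction combined with an angle-preservation inequality for Gaussian projections (Lemma~\ref{lemma: angle preserve}) and the norm stability of $\tilde{\bm X}^\top m^\perp$; your sketch mentions neither step, and without them the claim that \eqref{eq: lemma 1c relevant} is a ``direct consequence'' does not close.
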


Lemma~\ref{lemma:converge_cov} generalizes \citet[Lemma~1(b)]{bayati2011dynamics} by considering the correlation structure between multiple recursions. Further, Lemma~\ref{lemma:converge_cov} extends the analysis to account for error heteroscedasticity, which has not been addressed in the relevant literature. When $k_1 = k_2$ and the heteroscedastic parameter vector $\gamma_0 = \bm 0$ is a null vector, Lemma~\ref{lemma:converge_cov} reduces to \citet[Lemma~1(b)]{bayati2011dynamics}. Further, the covariance matrices of the standard normal 2-vectors $(Z_{k_1, (t)}, Z_{k_2, (t)})$ and $(\hat Z_{k_1, (t)}, \hat Z_{k_2, (t)})$ will be involved in constructing the test statistic in Section~\ref{sec: test}. 
\begin{remark}
We thank a reviewer for drawing our attention to Theorem~1 and Lemma~2 in \citet{javanmard2013state}. Their result agrees with our conclusion in \eqref{eq:correlation of betas convergence}, which shows the asymptotic approximation of the covariance of the structure of multiple estimators. Our additional contributions lie in establishing \eqref{eq:correlations of error convergence} and \eqref{eq: lemma 1c relevant}, which are of independent interest to statistics and econometrics, as they are important for establishing Theorem~\ref{thm: covariance construction}, \eqref{eq: cov converge}, and demonstrate that the correlation between estimators is linked to the correlation of residuals.
\end{remark}

To guarantee an estimable test statistic, we propose an adjusted-residual-based estimator in Theorem~\ref{thm: covariance construction}, which is directly obtainable from the AMP algorithm. The construction of the covariance matrix estimators is included in Section~\ref{sssec: covariance construction proof} by directly applying \eqref{eq: lemma 1c relevant}. It is worth noting that \eqref{eq: lemma 1c relevant} is a contribution to the existing AMP literature. The significance of \eqref{eq: lemma 1c relevant} is that it shows the correlation between estimators (noticing $h_{k,(t+1)} = \beta - \widetilde\beta_{k, (t)}$ in \eqref{eq: h_k}) from two sequences is asymptotically equivalent to the product of two effective score functions. This avoids the problem of lacking information about the true parameter vector $\beta$. Additionally, Theorem~\ref{thm: covariance construction} validates that $Z_{k_1}$ and $Z_{k_2}$ are correlated by \eqref{eq: cov exist}.

\begin{theorem}\label{thm: covariance construction}
    Under the assumptions in Lemma~\ref{lemma:converge_cov} {and $\kappa = 2$} for any two recursions $k_1, k_2 = 1, \ldots, K$, 
    with probability 1, it holds that 
    \begin{equation}\label{eq: cov exist}
      \lim_{p\to \infty}  \frac{1}{p}\sum_{j = 1}^p \big(\widetilde\beta_{k_1, (t),j} - \beta_{k_1, j}\big)\big(\widetilde\beta_{k_2, (t), j} - \beta_{k_2, j}\big) = \bar\zeta_{k_1, (t)} \bar\zeta_{k_2, (t)}\Cov(Z_{k_1}  ,Z_{k_2}),
    \end{equation}
    \begin{eqnarray}\label{eq: cov converge}
        \lefteqn{\lim_{n\to\infty}\frac{1}{n} \sum_{i=1}^n G_{k_1}(z_{k_1, (t), i}; b_{k_1, (t)})  G_{k_2}(z_{k_2, (t), i}; b_{k_2, (t)})} &\nonumber
         \\ 
         & \qquad\qquad= \E[G_{k_1}(\varepsilon + \bar\sigma_{k_1}\hat Z_{k_1}; b_{k_1, (t)}) G_{k_2}(\varepsilon + \bar\sigma_{k_2}\hat Z_{k_2}; b_{k_2, (t)})].
    \end{eqnarray}
The estimator constructed as follows
\begin{equation}\label{eq: cov est}
    \widehat\zeta_{k_1, k_2, (t)} = \frac{1}{n}\sum_{i=1}^n G_{k_1}(z_{k_1,(t), i}; b_{k_1, (t)})  G_{k_2}(z_{k_2, (t), i}; b_{k_2, (t)})
\end{equation}
is a consistent estimator of the covariance $\bar\zeta_{k_1, (t)}\bar\zeta_{k_2, (t)} \Cov(Z_{k_1, (t)}, Z_{k_2, (t)})$.
\end{theorem}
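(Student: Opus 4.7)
The plan is to derive all three assertions as direct applications of Lemma~\ref{lemma:converge_cov}, choosing appropriate pseudo-Lipschitz test functions and then using the correspondence between the general-recursion iterates $(h_k, m_k)$ and the AMP iterates $(\widetilde\beta_k, G_k(z_k;b_k))$ established via \eqref{eq: h_k} and \eqref{eq: general recursion for AMP--residual}.

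\textbf{Step 1: the covariance identity \eqref{eq: cov exist}.} First I would rewrite the left-hand side using \eqref{eq: h_k}, which gives $\widetilde\beta_{k,(t),j}-\beta_{k,j}=-h_{k,(t+1),j}$; thus
\begin{equation*}
\tfrac{1}{p}\sum_{j=1}^p \bigl(\widetilde\beta_{k_1,(t+1),j}-\beta_{k_1,j}\bigr)\bigl(\widetilde\beta_{k_2,(t+1),j}-\beta_{k_2,j}\bigr) \;=\; \tfrac{1}{p}\sum_{j=1}^p h_{k_1,(t+2),j}\,h_{k_2,(t+2),j}.
\end{equation*}
Then I would apply \eqref{eq:correlation of betas convergence} with the linear test function $\tilde\psi_{\rm c}(x_1,\dots,x_{t+1},\beta)=x_{t+1}$, which is Lipschitz and therefore pseudo-Lipschitz of order $1\le \kappa/2$ when $\kappa=2$. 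The right-hand side becomes $\E[\bar\zeta_{k_1}Z_{k_1}\cdot \bar\zeta_{k_2}Z_{k_2}]=\bar\zeta_{k_1}\bar\zeta_{k_2}\Cov(Z_{k_1},Z_{k_2})$ because the $Z_k$'s are mean-zero Gaussian; this is the claimed identity.

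\textbf{Step 2: the effective-score convergence \eqref{eq: cov converge}.} Here I would invoke \eqref{eq:correlations of error convergence} with
\begin{equation*}
\tilde\psi_{\rm c'}(d_0,\dots,d_t,\varepsilon)=G_{k_1}(\varepsilon-d_t;b_{k_1,(t)}), \quad \tilde\psi_{\rm c''}(d_0,\dots,d_t,\varepsilon)=G_{k_2}(\varepsilon-d_t;b_{k_2,(t)}).
\end{equation*}
Since $z_{k,i,(t)}=\varepsilon_i-d_{k,i,(t)}$ by \eqref{eq: general recursion for AMP--residual}, the LHS becomes the desired empirical average. On the RHS, the Gaussian vector $\hat Z_k$ is symmetric, so replacing $-\hat Z_k$ with $\hat Z_k$ leaves the law unchanged and yields the stated expectation.

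\textbf{Step 3: consistency of the estimator \eqref{eq: cov est}.} This follows from combining the previous two steps via \eqref{eq: lemma 1c relevant} at $t_1=t_2=t$. By Step 1 the left-hand side of \eqref{eq: lemma 1c relevant} equals $\bar\zeta_{k_1,k_2,(t)}$; by \eqref{eq: general recursion for AMP--residual} we have $m_{k,(t)}=-G_k(z_{k,(t)};b_{k,(t)})$, so the signs cancel in the product and the right-hand side of \eqref{eq: lemma 1c relevant} equals $\lim_{n\to\infty}\frac1n\sum_i G_{k_1}(z_{k_1,i,(t)};b_{k_1,(t)})G_{k_2}(z_{k_2,i,(t)};b_{k_2,(t)})$, which is exactly $\lim_n\widehat\zeta_{k_1,k_2,(t)}$. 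Combined with Step 2, this also validates the stated expectation form of the limit.

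\textbf{Main obstacle.} The key technical point is verifying that the test functions built from the effective score satisfy the pseudo-Lipschitz hypothesis of Lemma~\ref{lemma:converge_cov} under the specialization $\kappa=2$ (so that each of $\kappa_{\rm c'},\kappa_{\rm c''}$ is at most $1$ and $\kappa_{\rm c}\le 1$). Inspecting \eqref{eq:effective score expectile}, $G_\tau(z;b)$ is piecewise affine in $z$ with two pieces that agree at $z=u_\tau$ and bounded slopes $\tfrac{2b(1-\tau)}{2b(1-\tau)+1}$ and $\tfrac{2b\tau}{2b\tau+1}$; hence $G_\tau(\cdot;b)$ is globally Lipschitz, which implies pseudo-Lipschitz of order $1$. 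The moment condition \ref{cond:lemma_convergence} with $\kappa=2$ then guarantees finiteness of the second moments of $\varepsilon,B_k$ appearing in the limit expectations, so the hypotheses of Lemma~\ref{lemma:converge_cov} are met and all three claims follow without further analysis.
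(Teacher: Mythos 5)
Your proposal is correct and follows essentially the same route as the paper's proof: both obtain \eqref{eq: cov exist} by plugging the (negative of the) last coordinate into \eqref{eq:correlation of betas convergence}, obtain \eqref{eq: cov converge} by taking $\tilde\psi_{\rm c'},\tilde\psi_{\rm c''}$ to be the effective score functions composed with $\varepsilon - d_t$ and invoking Gaussian symmetry, and justify \eqref{eq: cov est} by combining these with \eqref{eq: lemma 1c relevant} at $t_1=t_2=t$ via $m_{k,(t)}=-G_k(z_{k,(t)};b_{k,(t)})$. Your explicit verification that the piecewise-affine effective score is globally Lipschitz (hence pseudo-Lipschitz of order $1$, matching $\kappa=2$) is a detail the paper leaves implicit, but it does not change the argument.
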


Theorem~\ref{thm: covariance construction} first validates our claim in Lemma~\ref{lemma:converge_cov} that $Z_{k_1, (t)}, Z_{k_2, (t)}$ are correlated. The correlation between $Z_{k_1, (t)}, Z_{k_2, (t)}$ describes the averaged correlation between the components of $\widetilde\beta_{k_1, (t)}$ and $\widetilde\beta_{k_2, (t)}$. A similar conclusion holds for $\hat Z_{k_1, (t)}, \hat Z_{k_2, (t)}$. Further, the covariance between $Z_{k_1, (t)}, Z_{k_2, (t)}$ can be approximated by using the adjusted residuals without any information about $\beta$. 

The last piece of information before moving to the test statistic is the joint asymptotic distribution of the pair $(\widetilde\beta_{k_1}, \widetilde\beta_{k_2})$.
\begin{lemma}[Joint normality]\label{lemma:normality}
 Under the assumptions in Lemma~\ref{lemma:converge_cov} for any two recursions $k_1, k_2 = 1, \ldots, K$ and for any pseudo-Lipschitz function $\psi: \mathbbm R \to \mathbbm R$ of order $\kappa_c$ in \eqref{eq:correlation of betas convergence}, the following convergence for estimators from the AMP algorithm holds with probability 1,
 \begin{equation}\label{eq: weak convergence of betatilde}
\lim_{p\to \infty}\frac{1}{p}\sum_{j = 1}^p
  \psi(\widetilde\beta_{k_1, (t), j})\psi(\widetilde\beta_{k_2, (t), j}) = 
 \E\big[\psi(B_{k_1}+ \bar\zeta_{k_1, (t)}Z_{k_1}) \psi(B_{k_2} + \bar\zeta_{k_2, (t)} Z_{k_2})\big],
 \end{equation}
 where $(Z_{k_1}, Z_{k_2})$ is a standard normal random 2-vector.  
\end{lemma}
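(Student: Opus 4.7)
\textbf{Proof plan for Lemma~\ref{lemma:normality}.} The plan is to reduce the claim to a direct application of the pseudo-Lipschitz convergence \eqref{eq:correlation of betas convergence} in Lemma~\ref{lemma:converge_cov}, using the identity \eqref{eq: h_k} that ties the AMP iterate $\widetilde\beta_{k,(\cdot)}$ to the $h$ variables of the general recursion. Concretely, \eqref{eq: h_k} shifted by one iteration gives $h_{k,(t+2)} = \beta_k - \widetilde\beta_{k,(t+1)}$, so componentwise
\begin{equation*}
\widetilde\beta_{k,(t+1),j} \;=\; \beta_{k,j} - h_{k,(t+2),j}, \qquad j=1,\ldots,p,\ \ k\in\{k_1,k_2\}.
\end{equation*}
This rewrites the empirical average of interest as a function of the variables already controlled by Lemma~\ref{lemma:converge_cov}.

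Next, I would define the test function $\tilde\psi_{\rm c}\colon\mathbb{R}^{t+3}\to\mathbb{R}$ by
\begin{equation*}
\tilde\psi_{\rm c}(x_1,\ldots,x_{t+2},b) \;:=\; \psi(b-x_{t+2}),
\end{equation*}
and verify that $\tilde\psi_{\rm c}$ inherits the pseudo-Lipschitz property of order $\kappa_{\rm c}$ from $\psi$; this is immediate because only two coordinates enter nontrivially and the pseudo-Lipschitz condition is preserved under translation, sign flip, and extension by dummy arguments. Applying \eqref{eq:correlation of betas convergence} with iteration index $t+1$ in place of $t$ and with $\tilde\psi_{\rm c}$ as chosen, the left-hand side becomes $p^{-1}\sum_{j=1}^p \psi(\widetilde\beta_{k_1,(t+1),j})\,\psi(\widetilde\beta_{k_2,(t+1),j})$, matching the LHS of \eqref{eq: weak convergence of betatilde}. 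The right-hand side produced by Lemma~\ref{lemma:converge_cov} is
\begin{equation*}
\E\bigl[\psi(B_{k_1}-\bar\zeta_{k_1,(t+1)}Z_{k_1,(t+1)})\,\psi(B_{k_2}-\bar\zeta_{k_2,(t+1)}Z_{k_2,(t+1)})\bigr],
\end{equation*}
with $(Z_{k_1,(t+1)},Z_{k_2,(t+1)})$ a centered bivariate normal vector (whose covariance is identified in Theorem~\ref{thm: covariance construction}) independent of $(B_{k_1},B_{k_2})$.

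Finally, I would invoke the sign-symmetry of centered bivariate Gaussians: $(-Z_{k_1,(t+1)},-Z_{k_2,(t+1)})\stackrel{d}{=}(Z_{k_1,(t+1)},Z_{k_2,(t+1)})$, and this pair is independent of $(B_{k_1},B_{k_2})$, so the minus signs inside the expectation may be removed by relabeling $Z_{k_\cdot,(t+1)}$ as $Z_{k_\cdot}$, yielding exactly the right-hand side of \eqref{eq: weak convergence of betatilde}; the almost-sure mode of convergence is inherited from \eqref{eq:correlation of betas convergence}. The only real obstacle is bookkeeping: aligning iteration indices between the AMP iterates $\widetilde\beta_{k,(\cdot)}$ and the recursion variables $h_{k,(\cdot)}$, and checking that the composition $(x,b)\mapsto\psi(b-x)$ preserves the pseudo-Lipschitz order. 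No new probabilistic ingredient beyond Lemma~\ref{lemma:converge_cov} is required; the statement is essentially a change-of-variables corollary of it.
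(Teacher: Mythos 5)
Your proposal is correct and follows essentially the same route as the paper's own (two-line) proof: apply \eqref{eq:correlation of betas convergence} with the composed test function $(x_{\rm last},y)\mapsto\psi(y-x_{\rm last})$ and translate via \eqref{eq: h_k}; you are merely more explicit about verifying the pseudo-Lipschitz property and removing the minus sign by Gaussian sign-symmetry, both of which the paper leaves implicit. The only discrepancy is an off-by-one in the iteration index (your version lands on $\bar\zeta_{k,(t+1)}$ rather than $\bar\zeta_{k,(t)}$), but this traces to an indexing inconsistency between \eqref{eq: h_k} and the lemma statement in the paper itself rather than to any flaw in your argument.
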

The proof of Lemma~\ref{lemma:normality} is a direct application of Lemma~\ref{lemma:converge_cov}, see Section~\ref{sssec: bivariate normality convergence}. 
When writing in the matrix form, the conditional joint asymptotic normality 
     $ \bm{\tilde\xi} - \bm{\xi} \sim N\Big((u_{k_1}, u_{k_2}) \otimes \gamma_0 , \Sigma_{(t)} \otimes \bm I_{p\times p}\Big)$ holds for $k_1, k_2 = 1, \ldots, K$,
where 
$\bm{\tilde \xi} = (\widetilde\beta_{k_1, (t)},\widetilde\beta_{k_2, (t)})$, $\bm\xi = (1,1)\otimes \beta_0$, $\bm I_{p\times p}$ is a $p\times p$ identity matrix. The covariance matrix $\Sigma_{(t)}$ has $(k_1, k_2)$th component $\Sigma_{k_1, k_2, (t)} = \bar\zeta_{k_1, k_2, (t)}$, which can be estimated by \eqref{eq: cov est}.

Under homoscedasticity where $\gamma_0 = \bm 0$, $\widetilde\beta_{k_1}$ and $\widetilde\beta_{k_2}$ have a common center $\beta_0$. In addition, variance components $\bar\zeta_{k, (t)}^2, k = 1, \ldots, K$ are inflated under heteroscedasticity. This phenomenon can be observed by the inflation of the standard deviation $\bar\sigma_{k, (t)}^2$ in \eqref{eq:state_evolution_zeta}, which is caused by the extra heteroscedastic parameter in \eqref{eq:state_evolution_sigma}.

\section{Heteroscedasticity test}\label{sec: test}
Although the asymptotic normality of $\bm{\widetilde\xi}$ holds for all iterations $t = 0, 1, \ldots$ of the AMP algorithm, we use the estimators $\bm{\widetilde\xi}$ at $t \to \infty$ to construct test statistics. When the iteration $t \to \infty$, the $l_2$-norm of the difference of the AMP output and the estimator in \eqref{eq: expectile estimator n<p} converges to zero almost surely, see \citet[Theorem~2]{bradic2016robustness}. In practice, we consider a stopping rule and define the algorithm's convergence in Section~\ref{ssec: AMP algorithm} under \eqref{eq: zeta est}. At the stop, we obtain an estimator approximating the estimator defined in \eqref{eq: expectile estimator n<p}. Hence, the iteration index $t$ is dropped in the following sections.
We consider testing the components of the vector $\gamma_0$ 
\begin{equation}\label{eq:hypothesis}
    H_{0, j}: \gamma_{0, j} = 0 \mbox{ versus } H_{1, j}: \gamma_{0, j} \neq 0, j = 1, \ldots, p.
\end{equation}
Lemma~\ref{lemma:normality} states that the $K$ expectile estimators follow a multivariate Gaussian distribution asymptotically with a common center $\beta_0$ under homoscedasticity. And the rows of $\bm{\widetilde\xi}$ are i.i.d. following a joint Gaussian distribution with covariance matrix $\Sigma$.

For $\gamma_{0, j}$, the test statistic is defined as
\begin{equation}\label{eq: test stat}
    T_j = \frac{\Delta \bm{\tilde\xi}_j^\top }{\sqrt{\Delta \Sigma \Delta^\top}}, j = 1, \ldots, p.
\end{equation}
Following \citet[Section~4.1]{newey1987asymmetric}, we consider the contrast between two expectile levels for the testing problem, i.e., $\bm{\tilde\xi} = (\widetilde\beta_{\tau_1}, \widetilde\beta_{\tau_2})$, $\Delta = (1, -1)$, and the components of the covariance matrix $\Sigma$ are replaced by their estimators in \eqref{eq: cov est}. Then, by Lemma~\ref{lemma:normality}, the test statistic $T_j$ follows a standard normal distribution $N(0, 1)$. For the two-sided alternative in \eqref{eq:hypothesis}, the $p$-value can be calculated by
\begin{equation}\label{eq:pvalue H_0}
    P_{j} = 2\{1 - \Phi(|\frac{\Delta \bm{\tilde\xi}_j^\top }{\sqrt{\Delta \Sigma \Delta^\top}} |) \},
\end{equation}
with rejection region $R = (-\infty, \Phi^{-1}(\alpha/2)) \cup (\Phi^{-1}(1 - \alpha/2), \infty)$, where $\Phi$ denotes the cumulative distribution function of a standard normal distribution $N(0,1)$. The null hypothesis is rejected if $P_j \le \alpha$ at nominal level $\alpha$.
To analyze the power function on the population scale when $n, p \to \infty$, we consider for any $j$th component 
    $\vartheta(\gamma_{0, j}) = I\big( T_j \in R, \Gamma_0 = \gamma_{0, j}\big)$.
Then, the limiting expression of the function $\vartheta$ is obtained in Theorem~\ref{thm:power function of test} with proof in Appendix~\ref{appendix: proof of convergence of power function}.

\begin{theorem}\label{thm:power function of test}
    The function denoted by $\vartheta(\gamma_{0, j}), j = 1, \ldots, p$ complies with 
    \begin{eqnarray}\label{eq:power function cal}
       &&\lim_{p\to\infty}\frac{1}{p} \sum_{j = 1}^p \vartheta(\gamma_{0, j}) \stackrel{a.s.}{=} \\
&&\E_{\Gamma_0}\Big[1 - \Phi\big(\Phi^{-1}(1 - \alpha/2) - \Gamma_0\frac{\Delta (u_{\tau_1}, u_{\tau_2})^\top}{\sqrt{\Delta \Sigma \Delta^\top}}\big)  + 
  \Phi\big(\Phi^{-1}(\alpha/2) - \Gamma_0\frac{\Delta (u_{\tau_1}, u_{\tau_2})^\top}{\sqrt{\Delta \Sigma \Delta^\top}}\big)
  \Big].\nonumber
    \end{eqnarray}
\end{theorem}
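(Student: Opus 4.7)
The plan is to reduce the empirical rejection frequency on the left-hand side to an expectation against the bivariate Gaussian limit supplied by Lemma~\ref{lemma:normality}, and then evaluate that expectation conditionally on $\Gamma_0$ using the Gaussian CDF. First I would unpack the rejection event: since $R_j = (-\infty, \Phi^{-1}(\alpha/2))\cup(\Phi^{-1}(1-\alpha/2),\infty)$, I write
\begin{equation*}
\vartheta(\gamma_{0,j}) = I\{T_j > \Phi^{-1}(1-\alpha/2)\} + I\{T_j < \Phi^{-1}(\alpha/2)\},
\end{equation*}
where by \eqref{eq: h_k} the test statistic satisfies
\begin{equation*}
T_j \;=\; \frac{\widetilde\beta_{\tau_1,j} - \widetilde\beta_{\tau_2,j}}{\sqrt{\Delta\Sigma\Delta^\top}} \;=\; \frac{(u_{\tau_1}-u_{\tau_2})\gamma_{0,j} - h_{\tau_1,(t+1),j} + h_{\tau_2,(t+1),j}}{\sqrt{\Delta\Sigma\Delta^\top}},
\end{equation*}
using $\beta_{\tau_k,j} = \beta_{0,j}+u_{\tau_k}\gamma_{0,j}$.

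Next, because the indicator is not pseudo-Lipschitz, I would sandwich $x\mapsto I\{x>c\}$ between two continuous pseudo-Lipschitz functions $\psi^{\pm}_\epsilon$ of order $1$ that agree with the indicator outside a strip of width $\epsilon$ around $c$ (and analogously for the lower tail). Lemma~\ref{lemma:converge_cov}, applied jointly to the two recursions $k_1=\tau_1$, $k_2=\tau_2$ with the pseudo-Lipschitz test function
\begin{equation*}
(h_{\tau_1,(t+1),j}, h_{\tau_2,(t+1),j}, \beta_{\tau_1,j}, \beta_{\tau_2,j}) \;\longmapsto\; \psi^{\pm}_\epsilon\!\left(\tfrac{(u_{\tau_1}-u_{\tau_2})\gamma_{0,j} - h_{\tau_1,(t+1),j} + h_{\tau_2,(t+1),j}}{\sqrt{\Delta\Sigma\Delta^\top}}\right),
\end{equation*}
yields almost-sure convergence of $p^{-1}\sum_j\psi^{\pm}_\epsilon(T_j)$ to the expectation of $\psi^{\pm}_\epsilon$ evaluated at $T^\star := [\Delta(u_{\tau_1},u_{\tau_2})^\top \Gamma_0 + \bar\zeta_{\tau_1,(t)}Z_{\tau_1} - \bar\zeta_{\tau_2,(t)}Z_{\tau_2}]/\sqrt{\Delta\Sigma\Delta^\top}$, where $(Z_{\tau_1},Z_{\tau_2})$ is the bivariate standard normal of Lemma~\ref{lemma:converge_cov} independent of $\Gamma_0,B_0$. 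Letting $\epsilon\downarrow 0$, the squeeze passes to the indicator because $T^\star$ has a continuous density at the thresholds: $\Delta\Sigma\Delta^\top = \bar\zeta_{\tau_1}^2+\bar\zeta_{\tau_2}^2 - 2\bar\zeta_{\tau_1,\tau_2,(t)} > 0$ for $\tau_1\ne\tau_2$ (strict by Cauchy--Schwarz applied to \eqref{eq: cov exist}), so the linear combination of the Gaussians is non-degenerate.

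Conditioning on $\Gamma_0$, the residual noise $\bar\zeta_{\tau_1}Z_{\tau_1}-\bar\zeta_{\tau_2}Z_{\tau_2}$ is $N(0,\Delta\Sigma\Delta^\top)$, so
\begin{equation*}
T^\star \mid \Gamma_0 \;\sim\; N\!\left(\Gamma_0 \,\frac{\Delta(u_{\tau_1},u_{\tau_2})^\top}{\sqrt{\Delta\Sigma\Delta^\top}},\; 1\right),
\end{equation*}
giving immediately
\begin{equation*}
\Pr\!\big(T^\star > \Phi^{-1}(1-\alpha/2)\mid\Gamma_0\big) = 1-\Phi\!\Big(\Phi^{-1}(1-\alpha/2)-\Gamma_0\tfrac{\Delta(u_{\tau_1},u_{\tau_2})^\top}{\sqrt{\Delta\Sigma\Delta^\top}}\Big),
\end{equation*}
and symmetrically for the lower tail. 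Summing these two conditional probabilities and taking the outer expectation over $\Gamma_0$ produces exactly the right-hand side of \eqref{eq:power function cal}.

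The main obstacle is the smoothing step: Lemma~\ref{lemma:converge_cov} delivers almost-sure convergence only along pseudo-Lipschitz test functions, so I must carefully construct the bracketing family $\psi^{\pm}_\epsilon$ and verify the moment conditions needed to apply the lemma at order $\kappa/2$ (here order $1$ suffices because the argument is bounded by a linear-plus-constant in $(h_{\tau_1,(t+1),j},h_{\tau_2,(t+1),j},\gamma_{0,j})$, and Assumptions~\ref{cond:gamma}--\ref{cond:lemma_convergence} provide the needed moments). A minor secondary step is to justify replacing the denominator $\sqrt{\Delta\Sigma\Delta^\top}$ by its plug-in version \eqref{eq: cov est} at no cost: Theorem~\ref{thm: covariance construction} already supplies almost-sure convergence of the estimator $\widehat\zeta_{k_1,k_2,(t)}$, so a standard continuous-mapping argument at the end converts the population-parameter claim into one using the estimated denominator.
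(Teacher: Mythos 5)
Your proposal follows essentially the same route as the paper's proof: decompose the rejection indicator into its two tails, recenter by $\gamma_{0,j}\,\Delta(u_{\tau_1},u_{\tau_2})^\top/\sqrt{\Delta\Sigma\Delta^\top}$, pass the empirical average to an expectation over the bivariate Gaussian limit via Lemma~\ref{lemma:converge_cov} (equivalently Lemma~\ref{lemma:normality}), and evaluate the resulting standard-normal tail probabilities conditionally on $\Gamma_0$. The one place you go beyond the paper is the bracketing of the indicator by pseudo-Lipschitz functions $\psi^{\pm}_{\epsilon}$ together with the check that $\Delta\Sigma\Delta^\top>0$ --- the paper applies the lemma to the indicator functions directly, which are not pseudo-Lipschitz --- so your version is, if anything, more careful than the published argument.
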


Theorem~\ref{thm:power function of test} can be used to confirm the size and power of the proposed test. Assume the nominal significance level of the test is $\alpha$. The size of the test $ P(T_j \in R | \Gamma_0 = 0)$ is approximated by 
\begin{eqnarray*}
   \lefteqn{\frac{\lim_{p\to \infty}\frac{1}{p}\sum_{j = 1}^p I(T_j \in R ,\Gamma_0 = 0)}{P(\Gamma_0 = 0)}}&&\\
   &\stackrel{a.s.}{=}&
   \frac{\E_{\Gamma_0}\Big[1 - \Phi\big(\Phi^{-1}(1 - \alpha/2) - 0\big)  + 
  \Phi\big(\Phi^{-1}(\alpha/2) - 0\big)
  \mid \Gamma_0 = 0\Big] P(\Gamma_0 = 0)}{P(\Gamma_0 = 0)}\\
  &=& \alpha,
\end{eqnarray*}
where the conditional expectation is obtained by \eqref{eq:power function cal} at $\Gamma_0 = 0$. This confirms that the size of a nominal $\alpha$ test is $\alpha$ asymptotically.

Further, the empirical size of the test is defined in \eqref{eq: empirical type I} and used to evaluate the performance of the proposed test. As presented in Tables~\ref{table: simulation test results tau0.2}, \ref{table: simulation test results tau0.6}, the empirical size, averaged over the components of $\gamma_0$, is approximately equal to the nominal level $\alpha$. 
Similarly, the power of the test is obtained by considering the nonnull components $\gamma_{0,j}$'s. However, from the right-hand-side of \eqref{eq:power function cal}, we notice that the power of the test depends on the averaged signal strength of $\Gamma_0$, the expectiles $(u_{\tau_1}, u_{\tau_2})$, and the covariance matrix $\Sigma$. The empirical power of the test defined in \eqref{eq: empirical power} is also used to evaluate the numerical performance of the test in Section~\ref{sec:numerical}.

\section{Decorrelation in AMP}\label{sec:decorrelation}
It is well known that the standard Gaussian design assumption, which is stated in Assumption~\ref{cond:design}, is important for the AMP framework. Researchers have made several attempts to relax this assumption.  One direction is to relax the Gaussian distribution assumption. For example, \citet{universality2015Bayati,chen2021universality} showed that the AMP framework could be extended to other random matrices; \citet{rangan2019vector,fan2022approximate} discussed the extension of AMP algorithms to right rotationally invariant design. \citet{li2023spectrum} showed the statistical applications of these algorithms.
However, the independence assumption is still essential in these works. A more relevant direction for this work is to relax the independence assumption in order to handle general design matrices with arbitrary covariance matrices. 
Consider a general design matrix in the sparse high-dimensional heteroscedastic linear model: 
$
  \tilde{\bm Y} =  \tilde{\bm X} \beta + \tilde{\bm\varepsilon},     
$
where $ \tilde{\bm Y} \in \mathbbm{R}^n$, $ \tilde{\bm X} \in \mathbbm{R}^{n\times p}$, $ \tilde{\bm\varepsilon} \in \mathbbm{R}^n$, and the rows of $\tilde{\bm X}$ are i.i.d. from a multivariate Gaussian distribution $N_p(0, {\tilde\Sigma})$. 
Several papers have discussed incorporating a general Gaussian design with strictly positive definite covariance matrices. 
For example, \citet[Definition~4.5, Corollary~4.6]{donoho2016high}, \citet[Section~3.1.1]{candes2020phase}, \citet[Section~2]{zhao2022asymptotic}, \citet{huang2022lasso} have proposed a method to decorrelate the design matrix using the nonsingular covariance matrix $\bm{\tilde{\Sigma}}$ by applying the transformation $(\bm{ \tilde X\tilde{\Sigma}}^{-1/2})/\sqrt{n}$. This decorrelation allows for satisfying Assumption~\ref{cond:design} and expands the applicability of AMP-based methods and theories. However, this decorrelation approach has limitations when dealing with sparse parameter vectors $\beta$ because the null components of $\beta$ become nonnull under the linear transformation $\sqrt{n}\bm{\tilde{\Sigma}}^{1/2}\beta$ and the corresponding back transformation. Additionally, in practical applications, the covariance matrix $\bm{\tilde{\Sigma}}$ is often unknown, and estimating it becomes challenging when $p>n$ unless certain structure assumptions are made. These challenges restrict the use of this approach in real applications.

In applications, we consider a different method for decorrelation in the AMP.  The validity of this type of decorrelation has been tested numerically in \citet[Table~3]{zhou2023automatic}. We utilize a ``generalized puffer transformation" on the data, resulting in transformed variables denoted as $\bm{{Y}} = F \tilde{\bm{Y}}$ and $\bm{{X}} = F \tilde{\bm{X}}$, where the transformed data satisfy the model $
    \bm{ Y}= \bm{ X}\beta + F { \tilde{\bm\varepsilon}}.
$
It is important to note that in our setting, $p>n$. To perform the transformation, we utilize matrices $\bm{U}\in\mathbbm{R}^{n\times n}$, $\bm{D}\in\mathbbm{R}^{n\times n}$, and ${\bm{V}}^\top\in\mathbbm{R}^{n\times p}$ obtained from the singular value decomposition $\bm{\tilde{X} = UDV^\top}$.
The ``generalized puffer transformation" was originally proposed in \citet{jia2015preconditioning}, where the authors aimed to decorrelate the design matrix to bypass the so-called ``irrepresentable condition" for the Lasso regression.  
The ``generalized puffer transformation" uses $F = \sqrt{p/n} \bm{U \hat D U}^\top$ where $\bm{ \hat D}$ is a diagonal matrix with diagonal elements $\bm{\hat D}_{ii} = I({\bm D}_{ii} \le 1/\sqrt{n})\sqrt{n}+I({\bm D}_{ii} > 1/\sqrt{n})/{\bm D}_{ii}, i=1,\ldots,n$. Subsequently, we apply the AMP algorithm and conduct the hypothesis test using the transformed data $(\bm{{Y}}, \bm{{X}})$.

Unlike the decorrelation step in \citet{candes2020phase,donoho2016high,zhao2022asymptotic}, our approach does not involve a transformation of $\beta$. Thus, it can preserve sparse structures. 
When $p> n$, we cannot guarantee that $\bm{X}$ precisely satisfies Assumption~\ref{cond:design}, but the correlation is largely reduced, as observed in the numerical study. 
Moreover, on the transformed data, the error term becomes $F \tilde{\bm\varepsilon}$.
Under the null hypothesis that the model \ref{eq: linear approximation of linear model} is homoscedastic, we have verified that $F\tilde{\bm{\varepsilon}}$ is independent of the transformed design matrix $\bm X = \sqrt{p/n}\bm{U V^\top}$. This observation suggests that the ``generalized puffer transformation" does not alter the null hypothesis. This explains why the proposed test retains its validity size, as demonstrated in our simulation study. For further details, please refer to Section~\ref{sec:6.2}.

\section{Numerical performance}\label{sec:numerical}
\subsection{Simulation}\label{ssec: simulation}

We consider simulation settings where $p = 500$, $n= 250$. The samples $X_i$ are generated from a standard Gaussian design, as specified in Assumption~\ref{cond:design}. The nonnull components of the true parameter vector $\beta_0$ are generated using a standard normal $N(0, 1)$ distribution and then multiplied by 5. Each setting is replicated $R = 400$ times. The parameter vector $\beta_0$ and design matrix $\bm X$ are fixed across $R$ replications using the same seed number, while the regression errors $\bm\varepsilon$ vary in each replication. 
We evaluate the test under six different error distributions: $N(0,1 )$, $t_3$, $\mbox{Lognormal}(0,1)$, $0.9N(-0.2, 0.25) + 0.1N(1.8, 0.01)$, $0.9N(0.2, 0.25) + 0.1N(-1.8, 0.01)$, and $0.95 N(0, 0.25) + 0.05 N(0, 4)$. The first three error distributions are randomly generated, centered, and scaled to have a standard deviation of 0.5. We aim to assess the test's performance using symmetric errors (represented by $N(0,1)$), heavy-tailed errors (represented by $t_3$), and fat-tailed errors (represented by $\mbox{Lognormal}(0,1)$). Additionally, we consider three mixture normal distributed errors. The first two, $0.9N(-2, 0.25) + 0.1N(18, 0.01)$ and $0.9 N(2, 0.25) + 0.1N(-18, 0.01)$, model clustered outliers on the right and left sides of the main error distribution, respectively. The last distribution---$0.95 N(0, 0.25) + 0.05 N(0, 4)$---adds additional variability to $N(0, 0.25)$.
For heteroscedasticity, we assume 5 components of $\gamma_0$ are nonnulls $(3, 1, -5, -5, -3)^\top$, and the rest components are all zeros. Similar to $\beta_0$, the components of $\gamma_0$ are fixed throughout $R$ replications using the same seed number. 


The individual test is evaluated by the averaged size (or false positive) proportion and power (or true positive proportion). Both proportions are obtained by averaging over simulation replications and the components of $\gamma_0$, which agrees with the construction of the power function in Theorem~\ref{thm:power function of test}. At significance level $\alpha$, the size can be estimated via 
\begin{equation}\label{eq: empirical type I}
\mbox{FP}(\alpha) =(p-s)^{-1}\sum_{j \in S^{\rm c}}{\sum_{r=1}^R I\{P_{r,j} \leq \alpha \}}/{R} \qquad\mbox{ (empirical size),}
\end{equation}
and the power can be estimated by
\begin{equation}\label{eq: empirical power}
\mbox{TP}(\alpha) =s^{-1}{\sum_{j \in S}{\sum_{r=1}^R I\{P_{r,j} \leq \alpha \}}/{R}} \qquad \mbox{(empirical power)}.
\end{equation}
For homoscedastic cases, the components of $\gamma_0$ are all zeros and $\|\gamma_0\|_0 = 0$; hence only FP values are calculated. 
For heteroscedasticity cases, five components of $\gamma_0$ are nonnull with magnitude $(3, 1, -5, -5, -3)^\top$. The signal strength of heteroscedasticity $u_\tau \gamma_0$ depends on the error expectile $u_\tau$. Since the errors are scaled to a standard deviation of 0.5, the heteroscedasticity signal strength is largely weakened by $u_\tau$.  

We consider two sparsity levels for fixed ratio $\delta = 0.5$, $p = 500$, $n = 250$.
\begin{enumerate}
    \item (high-sparsity) $\beta_0$ has $s= 5$ nonnull components.
    \item (medium-sparsity) $\beta_0$ has $s = 40$ nonnull components.
\end{enumerate}

The above two settings intend to investigate the impact of sparsity. Further, we consider homoscedastic and heteroscedastic regression errors for each sparsity level, where the errors are generated from six different distributions. Simulation results are presented in Tables~\ref{table: simulation test results tau0.2} and \ref{table: simulation test results tau0.6}. The test statistic in \eqref{eq: test stat} is constructed using pairs of expectile estimators $\bm\xi = (\widetilde\beta_{\tau_1}, \widetilde\beta_{\tau_2})$, where seven pairs---$\tau_2 = 0.8$, $\tau_1 = 0.1, 0.2, 0.6, $ and $\tau_2= 0.9$, $\tau_1 = 0.1, 0.2, 0.6, 0.8$---are tested. Since the test results for four pairs are similar, we only report two pairs---$(\tau_1, \tau_2) = (0.2, 0.8)$ and $(0.6, 0.8)$. Table~\ref{table: simulation test results tau0.2} uses expectile levels at $(\tau_1, \tau_2)= (0.2, 0.8)$; Table~\ref{table: simulation test results tau0.6} uses expectile levels at $(\tau_1, \tau_2) = (0.6, 0.8)$. 

    In addition, under homoscedasticity of the regression error, the test statistic in \eqref{eq: test stat} can be seen as samples from a standard normal $N(0,1)$ distribution. To confirm the theory, we also inspected the normality of the test statistic constructed using different pairs of expectile estimators. \tcr{Example QQ-plots are included in Figure~1; the examples are randomly chosen among simulation replications to avoid ``cherry-picking".}

\begin{table}[htb!]
	\centering
	\bgroup
	\def\arraystretch{1.2}
	\begin{adjustbox}{max width=\textwidth}
	\begin{tabular}{c c c c | c c c c c c} \Xhline{.8pt}
 \rowcolor[gray]{.9}
 \multicolumn{10}{c}{$\alpha = 0.05, \tau_1 = 0.2, \tau_2 = 0.8$}\\ \hline
      & \multicolumn{3}{c|}{homoscedastic} & \multicolumn{6}{c}{heteroscedastic} \\ \hline
  $\varepsilon$ & $N(0,1)$ & $t_3$ & $\mbox{Lognormal}(0,1)$ &  $N(0,1)$ & $t_3$ & $\mbox{Lognormal}(0,1)$ & $N(0,1)$ & $t_3$ & $\mbox{Lognormal}(0,1)$ \\ \hline
  & \multicolumn{3}{c|}{FP(size)} & \multicolumn{3}{c}{FP(size)} & \multicolumn{3}{c}{TP(power)}\\ \hline
  High-sparsity & 0.05 & 0.05 & 0.05 & 0.05 & 0.05 & 0.05 & 0.50 & 0.32 & 0.34\\
  Medium-sparsity & 0.06 & 0.06 & 0.05 & 0.05 & 0.06 & 0.06 &  0.09 & 0.09 & 0.09\\ \hline 
    & \multicolumn{3}{c|}{homoscedastic} & \multicolumn{6}{c}{heteroscedastic} \\ \hline
  $\varepsilon$ & mixNormal 1 & mixNormal 2 & mixNormal 3 & mixNormal 1 & mixNormal 2 & mixNormal 3 & mixNormal 1 & mixNormal 2 & mixNormal 3 \\ \hline
  & \multicolumn{3}{c|}{FP(size)} & \multicolumn{3}{c}{FP(size)} & \multicolumn{3}{c}{TP(power)}\\ \hline
  High-sparsity & 0.06 & 0.05 & 0.05 & 0.06 & 0.05 & 0.05  & 0.60 & 0.32 & 0.34\\
  Medium-sparsity & 0.06 & 0.05 & 0.06  & 0.06 & 0.05 & 0.07  & 0.19 & 0.08 & 0.10\\ \hline 
     \end{tabular}
     \end{adjustbox}
    \caption{Test ($\alpha = 0.05$) results for different distributions of $\varepsilon$ under homoscedastic and heteroscedastic variance. The pair of expectile levels is $(\tau_1, \tau_2)= (0.2,0.8)$. The top half presents results for $\varepsilon$ following $N(0,1)$, $t_3$, $\mbox{Lognormal}(0,1 )$; the bottom half for $0.9N(-0.2, 0.25) + 0.1N(1.8, 0.01)$ (mixNormal 1), $0.9N(0.2, 0.25) + 0.1N(-1.8, 0.01)$ (mixNormal 2), $0.95N(0, 0.25) + 0.05 N(0, 4)$ (mixNormal 3). Each simulation setting is replicated $R = 400$ times. The average FP proportions are calculated for homoscedastic variance, where the heteroscedastic parameter vector $\gamma_0 = \bm 0$. For heteroscedastic variance, where $\gamma_0$ has 5 nonnull components $(3, 1, -5, -5, -3)$, the average FP and TP proportions are calculated. 
	}
	\label{table: simulation test results tau0.2}
	\egroup
 \end{table}

 \begin{table}[htb!]
	\centering
	\bgroup
	\def\arraystretch{1.2}
	\begin{adjustbox}{max width=\textwidth}
	\begin{tabular}{c c c c | c c c c c c} \Xhline{.8pt}
 \rowcolor[gray]{.9}
 \multicolumn{10}{c}{$\alpha = 0.05, \tau_1 = 0.6, \tau_2 = 0.8$}\\ \hline
      & \multicolumn{3}{c|}{homoscedastic} & \multicolumn{6}{c}{heteroscedastic} \\ \hline
 $\varepsilon$ & $N(0,1)$ & $t_3$ & $\mbox{Lognormal}(0,1)$ &  $N(0,1)$ & $t_3$ & $\mbox{Lognormal}(0,1)$ & $N(0,1)$ & $t_3$ & $\mbox{Lognormal}(0,1)$  \\ \hline
  & \multicolumn{3}{c|}{FP(size)} & \multicolumn{3}{c}{FP(size)} & \multicolumn{3}{c}{TP(power)}\\ \hline
  High-sparsity & 0.05 & 0.05 & 0.05 & 0.05 & 0.05 & 0.05 & 0.24 & 0.17 & 0.17\\
  Medium-sparsity & 0.06 & 0.06 & 0.05 & 0.06 & 0.06 & 0.06 & 0.05 & 0.06 & 0.06\\ \hline 
     & \multicolumn{3}{c|}{homoscedastic} & \multicolumn{6}{c}{heteroscedastic} \\ \hline
  $\varepsilon$  & mixNormal 1 & mixNormal 2 & mixNormal 3  & mixNormal 1 & mixNormal 2 & mixnormal 3 & mixNormal 1 & mixNormal 2 & mixnormal 3\\ \hline
  & \multicolumn{3}{c|}{FP(size)} & \multicolumn{3}{c}{FP(size)} & \multicolumn{3}{c}{TP(power)}\\ \hline
  High-sparsity & 0.05 & 0.04 & 0.05 & 0.05 & 0.04 & 0.04  & 0.28 & 0.13 & 0.16\\
  Medium-sparsity & 0.06 & 0.04 & 0.04 & 0.06 & 0.04 & 0.06  & 0.10 & 0.05 & 0.09\\ \hline 
     \end{tabular}
     \end{adjustbox}
    \caption{Test ($\alpha = 0.05$) results for different distributions of $\varepsilon$ under homoscedastic and heteroscedastic variance. The pair of expectile levels is $(\tau_1, \tau_2)= (0.6,0.8)$. The top half presents results for $\varepsilon$ following $N(0,1)$, $t_3$, $\mbox{Lognormal}(0,1 )$; the bottom half for $0.9N(-0.2, 0.25) + 0.1N(1.8, 0.01)$ (mixNormal 1), $0.9N(0.2, 0.25) + 0.1N(-1.8, 0.01)$ (mixNormal 2), $0.95N(0, 0.25) + 0.05 N(0, 4)$ (mixNormal 3). ach simulation setting is replicated $R = 400$ times. The average FP proportions are calculated for homoscedastic variance, where the heteroscedastic parameter vector $\gamma_0 = \bm 0$. For heteroscedastic variance, where $\gamma_0$ has 5 nonnull components $(3, 1, -5, -5, -3)$, the average FP and TP proportions are calculated.  
	}
	\label{table: simulation test results tau0.6}
	\egroup
 \end{table}


We make several observations from Tables~\ref{table: simulation test results tau0.2} and \ref{table: simulation test results tau0.6}. 
When comparing Table~\ref{table: simulation test results tau0.2} and Table~\ref{table: simulation test results tau0.6}, we observe that using the expectile level pair $(\tau_1, \tau_2) = (0.2, 0.8)$ leads to higher test power, which agrees with Theorem~\ref{thm:power function of test} that shows the power of the test depends on the difference $(u_{\tau_2 } - u_{\tau_1})$. Placing two expectile levels far enough would preserve the signal strength of $\gamma_0$ to a certain extent. However, we do not suggest using two very extreme expectiles, since estimating them is more challenging than estimating those closer to the center.

When comparing the results for different error distributions, we note that the average $\mbox{FP}$ proportions are close to the nominal $\alpha$, which complies with Theorem~\ref{thm:power function of test}. The average $\mbox{FP}$ proportions are slightly greater than the nominal level for the mixture normal distribution $0.9N(-0.2, 0.25) + 0.1 N(1.8, 0.01)$. At the same time, this error distribution tends to have the highest empirical power among all six error distributions. 
In medium-sparsity settings, where the number of nonnull components of $\beta_0$ is 40, estimating $\beta_{\tau}$ becomes more challenging, and it is not surprising that the power of the test in this setting dropped compared to Setting~2, where the number of nonnull components of $\beta_0$ is 5. The estimation variances diminish the signal strength when $s=40$. This setting is very challenging for all tests, not just for ours.

\subsection{Check the decorrelation proposal}\label{sec:6.2}
We use simulations to check the validity of the decorrelation step proposed in Section~\ref{sec:decorrelation}. We assume an autoregressive ($AR(1)$) covariance matrix for $\bm{\tilde X} \sim N(0, \bm{\tilde \Sigma})$, where the components $\bm{\tilde \Sigma}_{ij} = \rho^{|i - j|}$ for $i, j = 1, \ldots, p$. The error distributions considered are $N(0,1)$ and $t_3$. To assess the impact of correlation on test performance, we consider $\rho = 0.3$ and $0.7$.

We see in Tables~\ref{table: simulation decorrelation 0.2} and \ref{table: simulation decorrelation 0.6} that the test is still valid numerically using the transformed data by the ``generalized puffer transformation".  We have checked that on the transformed data, the error is independent of the covariates, which suggests that the null hypothesis remains unchanged after the transformation. Simulation results demonstrate that the size of the test is approximately equal to the nominal $\alpha$. The power of the test decreases as the ``generalized puffer transformation" reduces the signal-to-noise ratio. Furthermore, the power of the test decreases when the correlation parameter $\rho$ increases from 0.3 to 0.7.

\begin{table}[htb!]
	\centering
	\bgroup
	\begin{adjustbox}{max width=\textwidth}
	\begin{tabular}{c c c | c c c c | c c | c c c c} \Xhline{.8pt}
 \rowcolor[gray]{.8}
 \multicolumn{13}{c}{$\alpha = 0.05, \tau_1 = 0.2, \tau_2 = 0.8$}\\ \hline
\rowcolor[gray]{.9} & \multicolumn{6}{c|}{$\rho = 0.3 $} &  \multicolumn{6}{c}{$\rho = 0.5$}  \\ \hline
      & \multicolumn{2}{c|}{homoscedastic} & \multicolumn{4}{c|}{heteroscedastic} & \multicolumn{2}{c|}{homoscedastic} & \multicolumn{4}{c}{heteroscedastic}  \\ \hline
 $\varepsilon$ & $N(0,1)$ & $t_3$ & $N(0,1)$ & $t_3$ & $N(0,1)$ & $t_3$ & $N(0,1)$ & $t_3$ & $N(0,1)$ & $t_3$ & $N(0,1)$ & $t_3$   \\ \hline
  & \multicolumn{2}{c|}{FP(size)} & \multicolumn{2}{c}{FP(size)} & \multicolumn{2}{c|}{TP(power)} 
  & \multicolumn{2}{c|}{FP(size)} & \multicolumn{2}{c}{FP(size)} & \multicolumn{2}{c}{TP(power)}\\ \hline
  High-sparsity & 0.05 & 0.05 & 0.05 & 0.05 & 0.40 & 0.26  & 0.05 & 0.05 & 0.05 & 0.05 & 0.33 & 0.23\\
  Medium-sparsity &  0.06 & 0.05 & 0.06 & 0.06 & 0.13 & 0.13 & 0.07 & 0.07 & 0.06 & 0.06 & 0.12 & 0.11\\ \hline
     \end{tabular}
     \end{adjustbox}
    \caption{Test ($\alpha = 0.05$) results are presented for the decorrelation procedure under both homoscedastic and heteroscedastic variance. The pair of expectile levels is $(\tau_1, \tau_2) = (0.2, 0.8)$. The covariance matrix for $\bm{\tilde X} \sim N(0, \bm{\tilde \Sigma})$ has components $\bm{\tilde \Sigma}_{ij} = \rho^{|i - j|}$ for $i, j = 1, \ldots, p$. The error distributions considered are $N(0,1)$ and $t_3$. ach simulation setting is replicated $R = 400$ times. The average FP proportions are calculated for homoscedastic variance, where the heteroscedastic parameter vector $\gamma_0 = \bm 0$. For heteroscedastic variance, where $\gamma_0$ has 5 nonnull components $(3, 1, -5, -5, -3)$, the average FP and TP proportions are calculated. 
	}
	\label{table: simulation decorrelation 0.2}
	\egroup
 \end{table}
 \begin{table}[htb!]
	\centering
	\bgroup
	\begin{adjustbox}{max width=\textwidth}
	\begin{tabular}{c c c | c c c c | c c | c c c c} \Xhline{.8pt}
 \rowcolor[gray]{.8}
 \multicolumn{13}{c}{$\alpha = 0.05, \tau_1 = 0.6, \tau_2 = 0.8$}\\ \hline
\rowcolor[gray]{.9} & \multicolumn{6}{c|}{$\rho = 0.3 $} &  \multicolumn{6}{c}{$\rho = 0.5$}  \\ \hline
      & \multicolumn{2}{c|}{homoscedastic} & \multicolumn{4}{c|}{heteroscedastic} & \multicolumn{2}{c|}{homoscedastic} & \multicolumn{4}{c}{heteroscedastic}  \\ \hline
 $\varepsilon$ & $N(0,1)$ & $t_3$ & $N(0,1)$ & $t_3$ & $N(0,1)$ & $t_3$ & $N(0,1)$ & $t_3$ & $N(0,1)$ & $t_3$ & $N(0,1)$ & $t_3$   \\ \hline
  & \multicolumn{2}{c|}{FP(size)} & \multicolumn{2}{c}{FP(size)} & \multicolumn{2}{c|}{TP(power)} 
  & \multicolumn{2}{c|}{FP(size)} & \multicolumn{2}{c}{FP(size)} & \multicolumn{2}{c}{TP(power)}\\ \hline
  High-sparsity & 0.05 & 0.06 & 0.05 & 0.06 & 0.16 & 0.12 & 0.06 & 0.05 & 0.05 & 0.05 & 0.13 & 0.10 \\
  Medium-sparsity &  0.06 & 0.05 & 0.06 & 0.06 & 0.08 & 0.09 & 0.07 & 0.07 & 0.06 & 0.07  & 0.07 & 0.08\\ \hline
     \end{tabular}
     \end{adjustbox}
    \caption{Test ($\alpha = 0.05$) results are presented for the decorrelation procedure under both homoscedastic and heteroscedastic variance. The pair of expectile levels is $(\tau_1, \tau_2) = (0.6, 0.8)$. The covariance matrix for $\bm{\tilde X} \sim N(0, \bm{\tilde \Sigma})$ has components $\bm{\tilde \Sigma}_{ij} = \rho^{|i - j|}$ for $i, j = 1, \ldots, p$. The error distributions considered are $N(0,1)$ and $t_3$. ach simulation setting is replicated $R = 400$ times. The average FP proportions are calculated for homoscedastic variance, where the heteroscedastic parameter vector $\gamma_0 = \bm 0$. For heteroscedastic variance, where $\gamma_0$ has 5 nonnull components $(3, 1, -5, -5, -3)$, the average FP and TP proportions are calculated.}
	\label{table: simulation decorrelation 0.6}
	\egroup
 \end{table}

\subsection{Real data examples}\label{ssec: real data}
\subsubsection{International economic growth data}
We apply the proposed test on the ``international economic growth" data used in \citet{belloni2011inference}, where the authors mainly focused on variable selection and estimation. We now want to test whether heteroscedasticity is present in this data. The dataset is available on the website \url{https://stuff.mit.edu/~vchern/NBER/} and consists of $n = 90$ samples. Further, we exclude the intercept term in the unprocessed dataset. The original dataset consists of 61 covariates measuring aspects such as education, trading openness, and science policies. We further expand the covariate set by including the polynomial terms up to order 4 of all 61 covariates. The expanded dataset consists of 244 covariates with a ratio $\delta$ approximately 0.37. The primary interest is the response variable---the national growth rate of GDP per capita. 
We apply the decorrelation procedure outlined in section~\ref{sec:decorrelation} before constructing the test. Similar to Section~\ref{ssec: simulation}, the test statistic is calculated using two $\ell_1$-regularized expectile estimators. We consider expectile estimators at level $\tau_2 = 0.8$ and at $\tau_1 = 0.2$. According to \citet{belloni2011inference}, the dataset is homoscedastic since no unmeasured underlying variables exist. Our test results show that none of the covariates cause heteroscedasticity across the different pairs. To assess the stability of the method when using pairs at different expectile levels, we also test $\tau_1= 0.4, 0.5, 0.6$. The test results agree with $\tau_1 = 0.2$ for this dataset.

\subsubsection{Supermarket data}
We also apply the proposed heteroscedasticity test to the ``supermarket data" used in \citet{lan2016testing}. The authors extended the $t$-test in their study and focused on testing $\beta_{0, j} \neq 0$ in a high-dimensional setting ($p \gg n$). The dataset consists of $n = 464$ daily records, where the response variable is the number of customers, and the covariates represent the sales volume of 6398 different products. Although the dataset is potentially heteroscedastic, the test proposed in \citet{lan2016testing} is robust to heteroscedastic errors.

Similarly, we utilize the fused Kolmogorov filter \citep{mai2015fused} to reduce the number of covariates to $p = 928$, achieving $\delta = n/p=0.5$. The ratio $\delta =n/p= 0.5$. Note that this reduced dimension is well beyond the theoretical limit $O(n/\log(n))$, which ensures the sure screening property of the fused Kolmogorov filter. Thus, we have high confidence that the screening step only removes noisy features and retains all important features. After applying the decorrelation procedure, we employ a pair of $\ell_1$-regularized expectile estimators at $(\tau_1, \tau_2) = (0.2, 0.8)$ to construct the test. The test results indicate that 61 covariates potentially contribute to heteroscedasticity. We also tested $\tau_1  = 0.4, 0.5, 0.6$ and $\tau_2= 0.8$. When using the pair $(\tau_1, \tau_2) = (0.6, 0.8)$, potentially 43 covariates contribute to heteroscedasticity. In addition, there are 34 common covariates detected by both $\tau_1 = 0.2$ and $\tau_1 = 0.6$. The other two pairs using $\tau_1 = 0.4, 0.6$ have more stable performance, overlapping 43 and 47 covariates with the pair using $\tau_1 = 0.2$.

\section{Concluding Remarks}\label{sec: remarks}
In this paper, we propose a high-dimensional Newey-Powell heteroscedasticity test based on contrasting expectile regression at two different expectile levels. The testing procedure is based on a bias-corrected estimator directly obtained from the AMP algorithm in Section~\ref{sec: AMP algorithm}, which is tracked by the state evolution recursion asymptotically when $n, p\to\infty$. We clarify that we want to offer an alternative debiasing framework in addition to the existing one, which requires adding a bias-correction term to the regularized estimators. Such a well-explored debiasing framework requires estimating sparse sample precision matrices and a thorough theoretical investigation for the bias-corrected estimator, which requires a considerable amount of work. The proposed test has stable performance when $\gamma_0$ is highly sparse, meaning that only a few covariates $X_j$ contribute to the heteroscedasticity. The numerical performance suggests that the error distributions also have an impact on the size and power of the test, which is worth further investigation. Although the AMP theory is developed for multiple levels, we did not explore contrasting multiple expectiles for testing heteroscedasticity in this work. This could be interesting for future research.

Upon examining Theorem~\ref{thm:power function of test}, we observe that the strength of the signal in $\gamma_0$ and the covariance of the two expectile estimators also impact the power of the test. An open question remains regarding the optimal choice of expectile levels.

While the current literature on the Approximate Message Passing (AMP) framework assumes an independent Gaussian design matrix, it is important to acknowledge that this assumption can be relaxed. However, a significant challenge in AMP remains unresolved: the lack of known theories to relax the assumption of independent covariates, accounting for both numerical and theoretical aspects. It is crucial to emphasize that this challenge is not unique to our work but is a broader issue in the AMP literature. To address this limitation in practical applications, we propose a ``generalized puffer transformation" to enhance the applicability of the proposed test to real-world data. Through a simulation study, we demonstrate that the resulting test maintains its validity by preserving its nominal size. However, a further theoretical investigation is necessary to understand and address this challenge, and it remains an open question for future research.

\section*{Acknowledgements}
The research presented in this paper was carried out on the High Performance Computing Cluster supported by the Research and Specialist Computing Support service at the University of East Anglia. Zou is supported in part by
NIH 1R01GM163244-01.

\section*{Appendix} \label{sec:appendix}
\appendix
\section{Assumptions}\label{appendix: assumptions}
For the sake of completeness, we summarize the assumptions used in our theory.
\begin{enumerate}[label=(A{{\arabic*}})]
\item 
$\varepsilon_1, \ldots, \varepsilon_n$ are i.i.d. random variables with mean zero, a finite $(2\kappa - 2)$th moment for $\kappa \ge 2$.
%

\item 
    {The empirical measure of the components of the $p$-vector $\gamma_0$, when $p\to \infty$, converges weakly to the probability measure of a random variable $\Gamma_0$ with bounded $(2\kappa-2)$th moment for $\kappa \ge 2$. }

\item 
     {The empirical measure of the components of the $p$-vector $\beta_0$, when $p\to \infty$, converges weakly to the probability measure of a random variable $B_0$} with bounded $(2\kappa-2)$th moment for $\kappa\ge 2$. 

\item 
     A standard Gaussian design: $X_i, i = 1, \ldots, n$ are i.i.d. copies of $X \in \mathbbm R^p$ with $X_{j}\sim N(0,1/n), j=1,\ldots,p$ independent and identically distributed components. 
     
\item 
     For some $\kappa>1$,
     \begin{enumerate}
         \item $\lim_{p \to \infty} \E_{\widehat{f}_{\beta_0(p)}}(B_0^{2\kappa - 2}) = \E_{f_{B_0}}(B_0^{2\kappa - 2}) < \infty$
         \item $\lim_{p \to \infty} \E_{\widehat{f}_{\gamma_0(p)}}(\Gamma_0^{2\kappa - 2}) = \E_{f_{\Gamma_0}}(\Gamma_0^{2\kappa - 2}) < \infty$
         \item $\lim_{p \to \infty} \E_{\widehat{f}_{\bm\varepsilon (p)}}(\varepsilon^{2\kappa - 2}) = \E_{f_{\varepsilon}}(\varepsilon^{2\kappa - 2}) < \infty$
         \item $\lim_{p \to \infty} \E_{\widehat{f}_{q_{k}(p)}}(B_k^{2\kappa - 2}) < \infty$.
     \end{enumerate} 
\end{enumerate}

\section{Proofs}\label{appendix: proofs}
\subsection{Proof of \eqref{eq:proximal expectile}}\label{sssec: proximal operator calculation}
\begin{proof}
    The proximal operator is defined as the minimizer of the objective function $b\rho(x) + \frac{1}{2}(x - z)^2$, which, by elementary calculus, is the root of the subgradient of the objective function. 
    The proximal operator is then obtained by solving an equation of $x$ as follows 
    \[
    b\partial\rho_{\tau}(x;u_\tau) + x - z = 0,
    \]
    where $\partial\rho_{\tau}$ is the gradient of the differentiable expectile loss function $\rho_{\tau}$ in \eqref{eq: derivative expectile loss}. 
    Calculations follow similar steps and arguments for the two cases, i.e., $x \leq u_\tau$ and $x > u_\tau$. We present the calculation steps for $x \leq u_\tau$ in detail.
    When $x \leq u_\tau$, $\partial\rho_{\tau}(x;u_\tau) = 2(1 - \tau) (x - u_\tau)$. Then, by plugging in $\partial\rho_{\tau}$ and rearranging $2b(1-\tau) (x- u_\tau) +x - z = 0$,
    we obtain \[x = \frac{z + 2b (1 - \tau)u_\tau}{2b(1-\tau) + 1}.\] The domain of $z$ is obtained by $\frac{z + 2b (1 - \tau)u_\tau}{2b(1-\tau) + 1} \leq u_\tau$ which is $z\leq u_\tau$.    
\end{proof}

\subsection{Proof of \eqref{eq: update b}}\label{sssec: update b calculation}
\begin{proof}
    By \eqref{eq:scaled eff score}, the subgradient of the effective score function $\tilde G_{\tau}(z; b)$ in \eqref{eq:effective score expectile} w.r.t. $z$ follows
    \[
     \partial_1 G_{\tau}(z; b) = \frac{\delta}{\omega}\Big\{\frac{2b(1-\tau)}{2b(1-\tau) + 1} I\{z \leq u_\tau\} + \frac{2b\tau}{2b\tau + 1} I\{z > u_\tau\}\Big\}.
    \]
    The parameter $b$ is updated by finding solutions $b >0$, s.t.
    \[
  1 = \frac{1}{n} \sum_{i = 1}^n \partial_1 G_{\tau}(z_i; b) = 
  \frac{\delta}{\omega} \Big\{\frac{1}{n}\sum_{i = 1}^n \frac{2b(1-\tau)}{2b(1-\tau) + 1} I\{z_i \leq u_\tau\} + \frac{2b\tau}{2b\tau + 1} I\{z_i > u_\tau\}\Big\}.   
    \]
\end{proof}

\subsection{Proof of Lemma~1}
\begin{proof}
    Lemma~1 generalizes \citet[Proof of Lemma~1]{bayati2011dynamics} by showing correlation properties for two sequences $k_1, k_2$. A similar proof strategy in \citet[Proof of Lemma~1]{bayati2011dynamics} is used by induction on the iteration index $t$. For $k = k_1, k_2$, let $\mathcal{B}_{k, (t)}$ be the properties for $d_{k, (t)}, q_{k,(t)}$ in \citet[Eq.~(3.15), (3.17), (3.19), (3.21), (3.23), (3.26)]{bayati2011dynamics}, and $\mathcal{H}_{k, (t+1)}$ be the properties for $m_{k, (t)}, h_{k, (t)}$ in \citet[Eq.~(3.14), (3.16), (3.18), (3.20), (3.22), (3.24), (3.25)]{bayati2011dynamics}. We assume that $\mathcal{B}_{k, (t)}$ and $\mathcal{H}_{k, (t+1)}$ hold for the chains $k_1, k_2$; this is a direct implication of Lemma~1 in \citet{bayati2011dynamics}. The proof consists of four steps
    \begin{enumerate}
        \item Assume $\mathcal{B}_{k_1, (0)}$ and $\mathcal{B}_{k_2, (0)}$ hold, then (3.27) \tcr{holds} for $t = 0$.
        \item Assume $\mathcal{H}_{k_1, (1)}$ and $\mathcal{H}_{k_2, (1)}$ hold, then (3.26) and (3.28) hold for $t = 0$.
        \item Assume $\mathcal{B}_{k_1, (t')}, \mathcal{B}_{k_2, (t')}$ hold. If (3.27) \tcr{holds} for $t' < t$, and $\mathcal{H}_{k_1, (t'')}, \mathcal{H}_{k_2, (t'')}$, (3.26), (3.28) hold for $t'' \le t$, then (3.27) \tcr{holds} for $t$.
        \item Assume $\mathcal{B}_{k_1, (t')}, \mathcal{B}_{k_2, (t')}$ hold. If (3.27) \tcr{holds} for $t' \le t$, and $\mathcal{H}_{k_1, (t'')}, \mathcal{H}_{k_2, (t'')}$, (3.26), (3.28) hold for $t'' \le t$, then (3.26) and (3.28) hold for $t+1$.
    \end{enumerate}
    Further, for $k = k_1, k_2$, we define $\mathcal{D}_{k, (t'), (t'')}$ to be the $\sigma$-algebra generated by $d_{k, (0)}, \ldots, d_{k, (t'-1)}$, $m_{k, (0)}, \ldots, m_{k, (t'-1)}$, $h_{k, (0)}, \ldots, h_{k, (t'')}$, $q_{k, (0)}, \ldots, q_{k, (t'')}$, $\bm\varepsilon$, and $\beta_k = \beta_0 + u_k \gamma_0$. Let $M_{k, (t)} = (m_{k, (0)}, \ldots, m_{k, (t-1)})$ and $Q_{k, (t)} = (q_{k, (0)}, \ldots, q_{k, (t-1)})$. We denote $m_{k, (t)}^{\|}$ the projection of $m_{k, (t)}$ onto the column space of $M_t$, which can be expressed as 
    $m_{k, (t)}^{\|} = \sum_{r = 0}^{t-1} \delta'_{k, r} m_{k, (r)}$ and $\delta'_{k, r}$'s are some fixed coefficients. Then, we define the orthogonal projection $m_{k, (t)}^\perp = m_{k, (t)} - m_{k, (t)}^{\|}$.
    Similarly, let $q_{k, (t)}^{\|}$ and $q_{k, (t)}^\perp$ be the parallel and orthogonal projections onto $Q_{k, (t)}$. 
    \begin{enumerate}
        \item Step~1. Assume $\mathcal{B}_{k_1, (0)}$ and $\mathcal{B}_{k_2, (0)}$ hold.\\
        By \citet[Proof of Lemma~1, Step~1~(a),~(b)]{bayati2011dynamics}, for $k = k_1, k_2$, let $\mathcal{D}_{k, (0), (0)}$ be generated by $\beta_k$, $q_{k, (0)}$, $\bm \varepsilon$, then 
        \begin{equation}\label{eq: d0 equivalence in distribution}
        d_{k, (0)} \mid \mathcal{D}_{k, (0), (0)} \stackrel{d}{=} \bm X q_{k, (0)}^\perp, 
        \end{equation}
        where the components $[\bm X q_{k, (0)}]_i \stackrel{d}{=} Z_k\|q_{k, (0)}\|/\sqrt{n}$. The standard normal random 2-vector $(Z_{k_1}, Z_{k_2})$ has a non-zero correlation due to the common design matrix $\bm X$.
        We first show that the following random variable satisfies the condition of the strong law of large numbers for triangular arrays, which is also stated in \citet[Theorem~3]{bayati2011dynamics}
        \[\tilde\psi_{\rm c'}(d_{k_1, (0), i}, \varepsilon_i)
      \tilde\psi_{\rm c''}(d_{k_2, (0), i}, \varepsilon_i) -
      \E_{\bm X}[\tilde\psi_{\rm c'}(d_{k_1, (0), i}, \varepsilon)
      \tilde\psi_{\rm c''}(d_{k_2, (0), i}, \varepsilon)].
      \]
      For $\rho \in (0,1)$, and we denote $c_{k, i} = ([{\bm X}q_{k, (0)}]_i, \varepsilon_i)$ and $\tilde{c}_{k, i} = ([\tilde{\bm X}q_{k, (0)}]_i, \varepsilon_i)$, $k = k_1, k_2$,
      \begin{eqnarray*}
        \lefteqn{\E\big|\tilde\psi_{\rm c'}(d_{k_1, (0), i}, \varepsilon_i)
         \tilde\psi_{\rm c''}(d_{k_2, (0), i}, \varepsilon_i) -
          \E_{\bm X}[\tilde\psi_{\rm c'}(d_{k_1, (0), i}, \varepsilon)
         \tilde\psi_{\rm c''}(d_{k_2, (0), i}, \varepsilon)]\big|^{2+\rho} }&&\\
         &=&
         \E_{\tilde{\bm X}}\Big|\tilde\psi_{\rm c'}(\tilde{c}_{k_1, i})
         \tilde\psi_{\rm c''}(\tilde{c}_{k_2, i})  - 
         \E_{\bm X}[\tilde\psi_{\rm c'}({c}_{k_1, i})
         \tilde\psi_{\rm c''}({c}_{k_2, i})\Big|^{2+\rho} \\
         &=&
          \E_{\tilde{\bm X}}\E_{\bm X}|\tilde\psi_{\rm c'}(\tilde{c}_{k_1, i})
         \tilde\psi_{\rm c''}(\tilde{c}_{k_2, i}) - \tilde\psi_{\rm c'}({c}_{k_1, i})
         \tilde\psi_{\rm c''}({c}_{k_2, i})|^{2+\rho}\\
         &=& \E |\tilde\psi_{\rm c'}(\tilde{c}_{k_1, i})\tilde\psi_{\rm c''}(\tilde{c}_{k_2, i}) - \tilde\psi_{\rm c'}(\tilde{c}_{k_1, i})\tilde\psi_{\rm c''}({c}_{k_2, i}) \\
         &&\qquad\qquad\qquad\qquad + 
        \tilde\psi_{\rm c'}(\tilde{c}_{k_1, i})\tilde\psi_{\rm c''}({c}_{k_2, i}) -
         \tilde\psi_{\rm c'}({c}_{k_1, i})\tilde\psi_{\rm c''}({c}_{k_2, i})
         |^{2+\rho} \\
         &\leq& 2^{2 + \rho}\big\{ \E|
         \tilde\psi_{\rm c'}(\tilde{c}_{k_1, i})\tilde\psi_{\rm c''}(\tilde{c}_{k_2, i}) - \tilde\psi_{\rm c'}(\tilde{c}_{k_1, i})\tilde\psi_{\rm c''}({c}_{k_2, i})
         |^{2+\rho} \\
         &&+
         \E|
         \tilde\psi_{\rm c'}(\tilde{c}_{k_1, i})\tilde\psi_{\rm c''}({c}_{k_2, i}) - \tilde\psi_{\rm c''}({c}_{k_1, i})\tilde\psi_{\rm c''}({c}_{k_1, i})
         |^{2+\rho}
         \big\}.
      \end{eqnarray*}
        The first two equalities use \eqref{eq: d0 equivalence in distribution}; $\tilde{\bm X}$ is an independent copy of $\bm X$. The inequality holds by Lemma~5. Next, we only focus on discussing the first term since the discussion for the second term follows the same arguments. 
        \begin{eqnarray}\label{eq: show SLLN applicable}
           \lefteqn{\E|
         \tilde\psi_{\rm c'}(\tilde A_{k_1, i})\tilde\psi_{\rm c''}(\tilde A_{k_2, i}) - \tilde\psi_{\rm c'}(\tilde A_{k_1, i})\tilde\psi_{\rm c''}(A_{k_2, i})
         |^{2+\rho}}&& \nonumber\\
         &\leq& \E\Big\{|\tilde\psi_{\rm c'}(\tilde A_{k_1, i})| |\tilde\psi_{\rm c''}(\tilde A_{k_2, i}) - \tilde\psi_{\rm c''}(A_{k_2, i})|\Big\}^{2+\rho} \nonumber\\
         &\leq & \E\Big\{L_1\big(1 + \|\tilde A_{k_1, i}\|^{\kappa_c'}\big) \\
         && L_2\big(
         \|[\tilde{\bm X} q_{k_2, (0)}]_i -[{\bm X} q_{k_2, (0)}]_i \|
         ) (\|\tilde A_{k_2, i}\|^{\kappa_c''-1} + \|\tilde A_{k_1, i}\|^{\kappa_c''-1} +1\big)\Big\}^{2+\rho}. \nonumber
        \end{eqnarray}
         The expectation is w.r.t. $\tilde{\bm X}$ and $\bm X$. The expectation $\E{|Xq_{k, (0)}|_i|^k} =(\frac{\|q_{k, (0)}\|}{\sqrt{n}})^k  \E|Z|^k$ is bounded by some constant for $k \geq 1$. Then, by grouping $[\tilde{\bm X} q_{k, (0)}]_i$'s and $\varepsilon_i$'s, \eqref{eq: show SLLN applicable} can further be simplified as a sum of a constant and 
         $|\varepsilon_i|^{k_c' + k_c'' -1}$ scaled by some constant. Then,   
         \[\frac{1}{n} \sum_{i =1}^n \E|
         \tilde\psi_{\rm c'}(\tilde A_{k_1, i})\tilde\psi_{\rm c''}(\tilde A_{k_2, i}) - \tilde\psi_{\rm c'}(\tilde A_{k_1, i})\tilde\psi_{\rm c''}(A_{k_2, i})
         |^{2+\rho} \leq c' n^{\rho/2}
         \] holds if 
         $\frac{1}{n} \sum_{i =1}^n |\varepsilon_i|^{(k_c' + k_c'' -1)(2+ \rho)} = \frac{1}{n} \sum_{i =1}^n |\varepsilon_i|^{(\kappa -1)(2+ \rho)} \leq c' n^{\rho/2}$ holds, which has been shown in \citet[proof of Lemma~1(b), Step~1]{bayati2011dynamics}. Then, by applying \citet[Theorem~3]{bayati2011dynamics}, we obtain 
        \begin{eqnarray*}
        \lefteqn{\lim_{n\to \infty} \frac{1}{n}\sum_{i = 1}^n \Big\{\tilde\psi_{\rm c'}(d_{k_1, (0), i}, \varepsilon_i)
      \tilde\psi_{\rm c''}(d_{k_2, (0), i}, \varepsilon_i)}&\\
      &\qquad \qquad \qquad -
      \E_{\bm X}[\tilde\psi_{\rm c'}(d_{k_1, (0), i}, \varepsilon)
      \tilde\psi_{\rm c''}(d_{k_2, (0), i}, \varepsilon)] \Big\}
      \stackrel{a.s.}{=} 0 \tcr{.}
        \end{eqnarray*}
   By taking $\psi(v) = \E_{\hat Z_{k_1}, \hat Z_{k_2}}[\tilde\psi_{\rm c'}(\hat Z_{k_1}\|q_{k_1, (0)}\|/\sqrt{n}, v)
      \tilde\psi_{\rm c''}(\hat Z_{k_2}\|q_{k_2, (0)}\|/\sqrt{n}, v)]$, where $v = \varepsilon$ in Lemma~6, and noticing $\psi(v)$ is a pseudo-Lipschitz function with order $\kappa_{c'} + \kappa_{c''} = \kappa$ by Lemma~3, we obtain 
        \begin{eqnarray*}
        \lefteqn{\lim_{n\to \infty} \frac{1}{n}\sum_{i = 1}^n \tilde\psi_{\rm c'}(d_{k_1, (0), i}, \varepsilon_i)
      \tilde\psi_{\rm c''}(d_{k_2, (0), i}, \varepsilon_i)}&\\
      &\qquad \qquad \stackrel{a.s.}{=} 
     \E_\varepsilon \E_{\hat Z_{k_1}, \hat Z_{k_2}}[\tilde\psi_{\rm c'}(\bar\sigma_{k_1, (0)}\hat Z_{k_1}, \varepsilon)
      \tilde\psi_{\rm c''}(\bar\sigma_{k_2, (0)}\hat Z_{k_2}, \varepsilon)].
        \end{eqnarray*}
    \item Step 2. Assume $\mathcal{H}_{k_1, (1)}$ and $\mathcal{H}_{k_2, (1)}$ hold.
    
    We first show that (3.28) holds. By \citet[Eq.(3.35)]{bayati2011dynamics}, it holds that for $k =k_1, k_2$
    \[
     h_{k, (1)}\mid \mathcal{D}_{k, (1), (0)} 
     \stackrel{d}{=} \tilde{\bm X} m_{k, (0)} + o_1(1) q_{k, (0)},
    \]
    where \tcr{$\tilde{\bm X} \stackrel{d}{=} \bm X$ is independent of $\mathcal{D}_{k, (1), (0)}$ and $\E[\bm X \mid \mathcal{D}_{k, (1), (0)}]$.} Further, the error term $o_1(1) q_{k, (0)}$ can be dropped in calculations, see for example \citet[proof of Lemma~1(b), (c)]{bayati2011dynamics}, \citet[proof of Lemma~1]{zhou2020detangling}. 
    We first show that the two limits $\lim_{p\to\infty}\frac{1}{p} \sum_{j = 1}^p h_{k_1, (1), j}  h_{k_2, (1), j}$ and $\lim_{n \to\infty} \frac{1}{n} \sum_{i = 1}^n m_{k_1, (0), i} m_{k_2, (0), i}$ are finite. Let $\theta_{m_{k_1, (0)}m_{k_2, (0)}}$ denote the angle between $m_{k_1, (0)}$ and $m_{k_2, (0)}$, then, with probability 1,
    \begin{eqnarray}\label{eq: lemma 1 limit finite}
          \lefteqn{\frac{1}{n} \sum_{i = 1}^n m_{k_1, (0), i} m_{k_2, (0), i} = 
          \frac{1}{n} \langle m_{k_1, (0)}, m_{k_2, (0)} \rangle } &&\\
          &=& \frac{1}{n} \|m_{k_1, (0)}\| \|m_{k_2, (0)}\|\cos \theta_{m_{k_1, (0)}m_{k_2, (0)}}
          \le  \frac{1}{n} \|m_{k_1, (0)}\| \|m_{k_2, (0)}\|  \nonumber\\
          &=&\frac{\|m_{k_1, (0)}\|}{\sqrt{n}} \frac{\|m_{k_2, (0)}\|}{\sqrt{n}} \to \bar\zeta_{k_1, (0)}\bar\zeta_{k_2, (0)}, n \to\infty, \nonumber
    \end{eqnarray}
    where $\bar\zeta_{k_1, (0)}$, $\bar\zeta_{k_2, (0)}$ are the square-root of the state evolution parameters in (3.19), (3.20) and are finite.

    Conditional on $\mathcal{D}_{k_1, (1), (0)} \vee \mathcal{D}_{k_2, (1), (0)}$,
    \begin{eqnarray}\label{eq: lemma 1 limit h}
         \lefteqn{\frac{1}{p}\sum_{j=1}^p h_{k_1, (1), j}  h_{k_2, (1), j}} && \\
    &\stackrel{d}{=}&\frac{1}{p} \sum_{j = 1}^p [\tilde{\bm X}^\top m_{k_1, (0)}]_j [\tilde{\bm X}^\top m_{k_2, (0)}]_j
    = \frac{1}{p} \langle \tilde{\bm X}^\top m_{k_1, (0)}, \tilde{\bm X}^\top m_{k_2, (0)} \rangle. \nonumber 
    \end{eqnarray}
    We first argue that $\lim_{p\to\infty}\frac{1}{p}\sum_{j=1}^p h_{k_1, (1), j}  h_{k_2, (1), j} < \infty$ almost surely.  
    \begin{eqnarray}\label{eq: limit h finite}
        \lefteqn{\frac{1}{p} \langle \tilde{\bm X}^\top m_{k_1, (0)}, \tilde{\bm X}^\top m_{k_2, (0)} \rangle }&&\\
        &\le& \frac{1}{p} \|\tilde{\bm X}^\top m_{k_1, (0)}\|\| \tilde{\bm X}^\top m_{k_2, (0)}\|
        =\frac{\|m_{k_1, (0)}\|}{\sqrt{n}}\frac{\|m_{k_2, (0)}\|}{\sqrt{n}}, p \to\infty. \nonumber
    \end{eqnarray}
    The last equality holds with probability 1 by Lemma~4 and by \eqref{eq: lemma 1 limit finite}, the above quantity is finite almost surely and converges to 
    $\bar\zeta_{k_1, (0)} \bar\zeta_{k_2, (0)}$. A similar argument is stated in \citet[Step~2 (c)]{bayati2011dynamics}.
    Next, we prove that the difference between $\frac{1}{p}\sum_{j=1}^p h_{k_1, (1), j}  h_{k_2, (1), j}$ and $\frac{1}{n} \sum_{i = 1}^n m_{k_1, (0), i} m_{k_2, (0), i}$ is negligible when $n, p\to\infty$. 
    By Lemma~8, with probability at least $1 - 6 \exp (-\frac{n}{2}(\frac{\epsilon_h^2}{2} - \frac{\epsilon_h^3}{3}))$, the following holds 
     \begin{eqnarray}\label{eq: bound h sequence}
      &&\frac{1}{p}(\frac{1+\epsilon_h}{1-\epsilon_h} \frac{\langle m_{k_1, (0)}, m_{k_2, (0)} \rangle}{\|m_{k_1, (0)}\| \|m_{k_2, (0)}\|} - \frac{2\epsilon_h}{1-\epsilon_h})\nonumber\\
       &&\leq 
       \frac{1}{p}\frac{\langle \tilde{\bm X}^\top m_{k_1, (0)}, \tilde{\bm X}^\top m_{k_2, (0)} \rangle}{\|\tilde{\bm X}^\top m_{k_1, (0)}\| \|\tilde{\bm X}^\top m_{k_2, (0)}\| } \\
       &&\leq
      \frac{1}{p}( 1 - \frac{\sqrt{1 - \epsilon_h^2}}{1 + \epsilon_h} + \frac{\epsilon_h}{1 + \epsilon_h} +
      \frac{1-\epsilon_h}{1+\epsilon_h}\frac{\langle m_{k_1, (0)} , m_{k_2, (0)} \rangle}{\|m_{k_1, (0)}\| \|m_{k_2, (0)}\|}),\nonumber
    \end{eqnarray}
    where $\epsilon_h >0$. We first address the lower bound of \eqref{eq: lemma 1 limit h}, which directly follows the lower bound of \eqref{eq: bound h sequence} as follows
    \begin{eqnarray}\label{eq: lower bound limit h}
      \lefteqn{\lim_{p\to\infty}\frac{1}{p}(\frac{1+\epsilon_h}{1-\epsilon_h} \frac{\langle m_{k_1, (0)}, m_{k_2, (0)} \rangle}{\|m_{k_1, (0)}\| \|m_{k_2, (0)}\|} - \frac{2\epsilon_h}{1-\epsilon_h})\|\tilde{\bm X}^\top m_{k_1, (0)}\| \|\tilde{\bm X}^\top m_{k_2, (0)}\|} && \nonumber \\
      &=& \lim_{p\to\infty}\bigg\{\frac{1+\epsilon_h}{1 - \epsilon_h}\frac{1}{n}\langle m_{k_1, (0)}, m_{k_2, (0)} \rangle \frac{\|\tilde{\bm X}^\top m_{k_1, (0)}\|/\sqrt{p}}{\|m_{k_1,(0)}\|/\sqrt{n}}\frac{\|\tilde{\bm X}^\top m_{k_2, (0)}\|/\sqrt{p}}{\|m_{k_2,(0)}\|/\sqrt{n}}  \nonumber \\
      && \qquad \qquad -\frac{2\epsilon_h}{1 - \epsilon_h}\frac{\|\tilde{\bm X}^\top m_{k_1, (0)}\|}{\sqrt{p}}\frac{\|\tilde{\bm X}^\top m_{k_2, (0)}\|}{\sqrt{p}} \bigg\}\nonumber\\
     &\stackrel{a.s.}{=}&  \frac{1+\epsilon_h}{1 - \epsilon_h}\frac{1}{n}\langle m_{k_1, (0)}, m_{k_2, (0)} \rangle 
     -\frac{2\epsilon_h}{1 - \epsilon_h}\frac{\|m_{k_1,(0)}\|}{\sqrt{n}}\frac{\|m_{k_2,(0)}\|}{\sqrt{n}}  ,
    \end{eqnarray}
    where the first almost sure convergence holds by Lemma~4. 
    The upper bound of \eqref{eq: lemma 1 limit h} follows similarly, and we obtain 
    \begin{eqnarray}\label{eq: upper bound limit h}
        \lefteqn{\lim_{p\to\infty}\frac{1}{p}\sum_{j=1}^p h_{k_1, (1), j}  h_{k_2, (1), j}} &&\\
      &\le& \frac{1-\epsilon_h}{1 + \epsilon_h}\frac{1}{n} \langle m_{k_1, (0)}, m_{k_2, (0)} \rangle + (1- \frac{\sqrt{1 - \epsilon_h^2}}{1 + \epsilon_h} + \frac{\epsilon_h}{1 + \epsilon_h})\frac{\|m_{k_1,(0)}\|}{\sqrt{n}}\frac{\|m_{k_2,(0)}\|}{\sqrt{n}}. \nonumber
    \end{eqnarray}
    Combining \eqref{eq: bound h sequence}, \eqref{eq: lower bound limit h}, \eqref{eq: upper bound limit h}, and let $\epsilon_h = O(n^{-1/2 + \delta})$ for some $\delta \in (0, 1/2)$, we obtain
    \begin{equation}\label{eq: proof step 2 covariance estimator convergence}
        \frac{1}{n} m_{k_1, (0), i} m_{k_2, (0), i} \stackrel{P}{\to}  \lim_{p\to\infty}\frac{1}{p}\sum_{j=1}^p h_{k_1, (1), j}  h_{k_2, (1), j} ,
    \end{equation}
    where $\stackrel{P}{\to}$ denotes convergence in probability.


The proof for (3.26) is similar to that for (3.27) and has been shown in \citet[Section~B.2.5]{zhou2020detangling} for homoscedastic linear models where $\beta_{k_1} = \beta_{k_2} = \beta_0$. We only show the proof for (3.26) in Step~2 to avoid being repetitive. Similar to Step~1, we could show that the condition in \citet[Theorem~3]{bayati2011dynamics} holds, resulting in the following convergence
\begin{eqnarray*} 
\lefteqn{\lim_{p\to \infty}\frac{1}{p}\sum_{j = 1}^p \Big\{\tilde\psi_{\rm c}(h_{k_1, (1),j}, \beta_{k_1, j})\tilde\psi_{\rm c}(h_{k_2, (1),j}, \beta_{k_2, j})}& \\
& \qquad\qquad\qquad -
\E_{\tilde{X}}[\tilde\psi_{\rm c}(h_{k_1,(1),j}, \beta_{k_1, j})\tilde\psi_{\rm c}(h_{k_2,(1),j}, \beta_{k_2, j})]\Big\} \stackrel{\rm a.s.}{=} 0.
\end{eqnarray*}

Then, using Lemma~6 for $v(u) = \beta_{0} + u \gamma_{0}$ and
\begin{eqnarray*}
   \psi(v) = \E_{\tilde{X}} \tilde\psi_{\rm c}(h_{k_1, (1), j}, v(u_{k_1}))\tilde\psi_{\rm c}(h_{k_2, (1), j}, v(u_{k_2})), 
\end{eqnarray*}
the following convergence holds
\begin{eqnarray*}
\lefteqn{ \lim_{p\to \infty}\frac{1}{p}\sum_{j = 1}^p
\E_{\tilde{X}} \Big[ \tilde\psi_{\rm c}(h_{k_1, (1), j}, \beta_{0,j}+ u_{k_1} \gamma_{0, j})\tilde\psi_{\rm c}(h_{k_2, (1), j}, \beta_{0,j}+ u_{k_2}\gamma_{0, j}) \Big] }\\
&\stackrel{\rm a.s.}{=}& \E_{B_0}\bigg[\E_{
(Z_{k_1, (0)}, Z_{k_2, (0)})}
\Big[\tilde\psi_{\rm c}(\|\frac{m_{k_r, (0)}}{\sqrt{n}} \| Z_{k_1, (0)}, B_0+u_{k_1}\Gamma_{0})\\
&&\qquad\qquad\qquad\tilde\psi_{\rm c}(\|\frac{m_{k_r, (0)}}{\sqrt{n}} \| Z_{k_2, (0)}, B_0+ u_{k_2}\Gamma_{0})
 \Big] \bigg] \tcr{.}\\
\end{eqnarray*}
 \item Step~3. Assume $\mathcal{B}_{k_1, (t')}, \mathcal{B}_{k_2, (t')}$ hold. 
 If (3.27) \tcr{holds} for $t' < t$, and $\mathcal{H}_{k_1, (t'')}, \mathcal{H}_{k_2, (t'')}$, (3.26), (3.28) hold for $t'' \le t$.\\
By \citet[Lemma~1(b) in Step~4]{bayati2011dynamics} with dropped error term, \tcr{given 
$\mathcal{D}_{k_1, (t), (t)} \vee \mathcal{D}_{k_2, (t), (t)} = \sigma(\mathcal{D}_{k_1, (t), (t)} \bigcup \mathcal{D}_{k_2, (t), (t)})$, }we have 
\begin{eqnarray*}
    \lefteqn{\tilde\psi_{c'}(d_{k_1, (0), i},\ldots, d_{k_1, (t), i}, \varepsilon_i) \tilde\psi_{c''}(d_{k_2, (0), i}, \ldots, d_{k_2, (t), i}, \varepsilon_i) 
    \stackrel{d}{=}}&\\
   &\qquad \tilde\psi_{c'}(d_{k_1, (0), i},\ldots, d_{k_1, (t-1), i}, 
    \big[\sum_{r= 0}^{t-1}\delta_{k_1, r} d_{k_1, (r)} +  \tilde{\bm X} q_{k_1, (t)}^\perp\big]_i, \varepsilon_i)\\
    &\tilde\psi_{c''}(d_{k_2, (0), i}, \ldots, d_{k_2, (t-1), i}, 
    \big[\sum_{r= 0}^{t-1}\delta_{k_2, r} d_{k_2, (r), i}+ \tilde{\bm X} q_{k_2, (t)}^\perp\big]_i,
    \varepsilon_i),
\end{eqnarray*}
where $\delta_{k, r}$'s are coefficients for $d_{k, (r)}, r = 1, \ldots, t-1$ for $k = k_1, k_2$. We define 
\[
c_{k, i}= (d_{k, (0), i},\ldots, d_{k, (t-1), i}, 
    \big[\sum_{r= 0}^{t-1}\delta_{k, r} d_{k, (r)} +  \tilde{\bm X} q_{k, (t)}^\perp\big]_i, \varepsilon_i).
\]

By \citet[Theorem~3]{bayati2011dynamics}, we obtain 
\begin{eqnarray}\label{eq: proof of main lemma SLLN}
  \lim_{n\to\infty} \frac{1}{n} \sum_{i=1}^n \Big[\tilde\psi_{c'}(c_{k_1, i})\tilde\psi_{c''}(c_{k_2, i}) - \E_{\tilde{\bm X}}\big[\tilde\psi_{c'}(c_{k_1, i})\tilde\psi_{c''}(c_{k_2, i})\big ]\Big] \stackrel{a.s.}{=} 0,
\end{eqnarray}
where $[\tilde{\bm X} q_{k, (t)}^\perp]_i \stackrel{d}{=} \hat{Z}_k \|q_{k, (t)}^\perp\|/\sqrt{n}$. Thus, \eqref{eq: proof of main lemma SLLN} can further be expressed as 
\begin{eqnarray*}
  \lim_{n\to\infty} \frac{1}{n} \sum_{i=1}^n \Big[\tilde\psi_{c'}(c_{k_1, i})\tilde\psi_{c''}(c_{k_2, i}) - \E_{(\hat Z_{k_1}, \hat Z_{k_2})}\big[\tilde\psi_{c'}(\tilde c_{k_1, i})\tilde\psi_{c''}(\tilde c_{k_2, i})\big ]\Big] \stackrel{a.s.}{=} 0,
\end{eqnarray*}
where $\tilde{c}_{k, i} =(d_{k, (0), i}, \ldots, d_{k, (t-1), i}, [\sum_{r = 0}^{t-1}\delta_{k, r} d_{k, (r)} + \hat Z_{k}\|q^\perp_{k, (t)}\|/\sqrt{n}]_i), k = k_1, k_2$.
And by \citet[proof of Lemma~1(b) for Step 3]{bayati2011dynamics}, we know that $d_{k, (t), i}$'s are Gaussian distributed random variables with variance $\bar\sigma_{k, (t)}$. Then, the following convergence holds
\begin{eqnarray*}
   \lefteqn{\lim_{n\to\infty} \frac{1}{n} \sum_{i=1}^n \Big[\tilde\psi_{c'}(c_{k_1, i})\tilde\psi_{c''}(c_{k_2, i})}& \\
   &- \E_{\hat Z}\big[\tilde\psi_{c'}(\bar\sigma_{k_1, (0)}\hat Z_{k_1, (0)}, \bar\sigma_{k_1, (t)} \hat Z_{k_1, (t)}, \varepsilon)\tilde\psi_{c''}(\bar\sigma_{k_2, (0)}\hat Z_{k_2, (0)}, \bar\sigma_{k_2, (t)} \hat Z_{k_2, (t)}, \varepsilon)\big ]\Big] \stackrel{a.s.}{=} 0.
\end{eqnarray*}
Lastly, by applying Lemma~6, we complete the proof showing (3.27) holds for iteration $t$.
  \item Step~4. Assume $\mathcal{B}_{k_1, (t')}, \mathcal{B}_{k_2, (t')}$ hold. If (3.27) \tcr{holds} for $t' \le t$, and $\mathcal{H}_{k_1, (t'')}, \mathcal{H}_{k_2, (t'')}$, (3.26), (3.28) hold for $t'' \le t$. \\
  Since the proof for (3.26) in Step~4 is similar to that for (3.27), and the complete proof for the homoscedastic regression models has been presented in \citet[Proof of Lemma~1]{zhou2020detangling}, the readers are suggested to refer to Step~2 for the main compositions of the proof. Next, we focus on presenting the detailed proof for (3.28). 
  Given $\mathcal{D}_{k_1, (t+1), (t)} \vee \mathcal{D}_{k_2, (t+1), (t)}$, and by \citet[Lemma~1(a), (b) in Step~4]{bayati2011dynamics}, we have
  \begin{equation}\label{eq: equal h in distribution}
  h_{k, (t+1)} \stackrel{d}{=} \sum_{r=0}^t\delta'_{k, (r)} h_{k, (r)} + \tilde{\bm X}^\top m_{k, (t)}^\perp, k = k_1, k_2.
  \end{equation}
By induction, for $t_1, t_2 <t$, the following holds 
\[
    \frac{1}{n} \sum_{i = 1}^n m_{k_1, (t_1), i} m_{k_2, (t_2), i} \stackrel{P}{\to}\lim_{p\to\infty} \frac{1}{p} \sum_{j=1}^p h_{k_1, (t_1+1), j} h_{k_2, (t_2+1), j}.
\]
Further, the correlation between $m_{k_1, (t)}$ and $m_{k_2, (t)}$ is mainly caused by the common components $\beta_0$, $\gamma_0$, $\bm\varepsilon$, and $\bm{X}$ which enters $m^{\|}_{k, (t)} = 
\sum_{r = 0}^{t - 1}\delta'_{k, r} m_{k, (r)}$ at $t= 0$. In addition, at each iteration step $t = 1, \ldots$, a common independent Gaussian random matrix $\bm{\tilde X}$ brings in correlation between $m_{k_1, (t)}^\perp = m_{k_1, (t)} - m_{k_1, (t)}^{\|}$ and $m_{k_2, (t)}^\perp = m_{k_2, (t)} - m_{k_2, (t)}^{\|}$. Thus, we argue the validity of the following independence, i.e., $m_{k_1, (t)}^\perp \perp m_{k_2, (t'')}$ and $m_{k_2, (t)}^\perp \perp m_{k_1, (t'')}$ for $t'' < t$, since $m_{k, (t'')}$ is a vector forming $m^{\|}_{k, (t)}$.

Then, we show that (3.28) holds for $t_1 = t$ and $t_2 < t$. By induction hypothesis and the

\begin{eqnarray*}
    \lefteqn{ \frac{1}{n}\sum_{i = 1}^n m_{k_1, (t), i}  m_{k_2, (t_2), i} }&&\\
    &=&\frac{1}{n}\sum_{i = 1}^n m_{k_1, (t), i}^{\|}  m_{k_2, (t_2), i} = \sum_{r = 0}^t \delta'_{k_1, (r)} 
      \frac{1}{n}\sum_{i = 1}^n m_{k_1, (r), i}  m_{k_2, (t_2), i}\\
     &\stackrel{P}{\to}& \sum_{r = 0}^t \delta'_{k_1, (r)} \lim_{p\to\infty} \frac{1}{p}\sum_{j = 1}^p h_{k_1, (r+1), j}   h_{k_2, (t_2+1), j} \\
     &= & 
    \lim_{p \to \infty} \frac{1}{p}\sum_{j=1}^p \big[\sum_{r = 0}^t \delta'_{k_1, (r)}h_{k_1, (r + 1)} \big]_j  \big[ h_{k_2, (t_2 + 1)} \big]_j \\
    && +  \lim_{p \to \infty} \frac{1}{p}\sum_{j=1}^p [\tilde{\bm X}^\top m_{k_1, (t)}^\perp \big]_j\big[ h_{k_2, (t_2 + 1)} \big]_j \\
     &\stackrel{d}{=}&
     \lim_{p \to \infty} \frac{1}{p}\sum_{j=1}^p \big[\sum_{r = 0}^t \delta'_{k_1, (r)} h_{k_1, (r + 1)} + \tilde{\bm X}^\top m_{k_1, (t)}^\perp\big]_j  \big[h_{k_2, (t_2 + 1)} \big]_j \\
    &=& \lim_{p \to \infty} \frac{1}{p}\sum_{j=1}^p h_{k_1, (t+1), j} h_{k_2, (t_2+1), j}.
\end{eqnarray*}
The first equality holds by the independence of $m_{k_1, (t)}^\perp$ and $m_{k_2, (t_2)}$ for $t_2 < t$; the second equality holds by the definition of $m_{k, (t)}^{\|}$; the convergence in probability holds by induction hypothesis; the fourth equality holds since the second term is null due to the independence of $\tilde{\bm X}^\top m_{k_1, (t)}^\perp$ and $h_{k_2, (t_2+1)}$ (see the argument below \eqref{eq: equal h in distribution} for detail); the equality in distribution is obtained by \eqref{eq: equal h in distribution}.


To complete the proof, we now show that (3.28) holds for $t_1, t_2  = t$. The proof technique combines \eqref{eq: proof step 2 covariance estimator convergence} and that for $t_1 = t, t_2 < t$. 
\begin{eqnarray}\label{eq: step 4 limit of h}
   \lefteqn{\lim_{p \to \infty} \frac{1}{p}\sum_{j=1}^p h_{k_1, (t+1), j} h_{k_2, (t+1), j}}&&\\
   &\stackrel{d}{=}&
     \lim_{p \to \infty} \frac{1}{p}\sum_{j=1}^p \big[\sum_{r = 0}^t \delta'_{k_1, (r)} h_{k_1, (r + 1)} + \tilde{\bm X}^\top m_{k_1, (t)}^\perp\big]_j  \big[\sum_{r = 0}^t \delta'_{k_2, (r)} h_{k_2, (r + 1)} + \tilde{\bm X}^\top m_{k_2, (t)}^\perp\big]_j  \nonumber\\
     &= & 
    \lim_{p \to \infty} \Big\{ \sum_{r, r' = 0}^t \delta'_{k_1, (r)}\delta'_{k_2, (r')} \frac{1}{p}\sum_{j=1}^p h_{k_1, (r + 1), j}  h_{k_2, (r'+ 1),j}  \nonumber\\
   &&\qquad\qquad\qquad\qquad+  \frac{1}{p}\sum_{j=1}^p [\tilde{\bm X}^\top m^\perp_{k_1, (t)}]_j [\tilde{\bm X}^\top m^\perp_{k_2, (t)}]_j 
    \Big\}. \nonumber
\end{eqnarray}
The first equivalence in distribution is by \eqref{eq: equal h in distribution}. The second and third terms on the right-hand side of the second equality vanished in the third equality due to the independence of $\tilde{\bm X}^\top m_{k, (t)}^\perp$ and $h_{k', (r+1)}$, $k, k' = k_1, k_2$, $r=0, \ldots, t$ (see the argument below \eqref{eq: equal h in distribution} for detail).
We now address the sequence
\begin{eqnarray}\label{eq: step 4 sequence m}
    \lefteqn{ \frac{1}{n}\sum_{i = 1}^n m_{k_1, (t), i} m_{k_2, (t), i}}&&\\
   &=&  \frac{1}{n}\sum_{i = 1}^n m_{k_1, (t), i}^{\|} m_{k_2, (t), i}^{\|} + \frac{1}{n} \sum_{i = 1}^n m_{k_1, (t), i}^\perp m_{k_2, (t), i}^\perp  \nonumber\\
   &\stackrel{a.s.}{=}&\sum_{r, r' = 0}^t \delta'_{k_1, (r)}\delta'_{k_2, (r')} \frac{1}{n}\sum_{i=1}^n m_{k_1, (r), i}  m_{k_2, (r'),i} \nonumber\\
   && \qquad \qquad \qquad + \lim_{p\to\infty} \frac{1}{p}\sum_{j=1}^p [\tilde{\bm X}^\top m^\perp_{k_1, (t)}]_j [\tilde{\bm X}^\top m^\perp_{k_2, (t)}]_j. \nonumber
\end{eqnarray}
The first equality holds by the definition of $m_{k, (t)}$. The first term $\frac{1}{n}\sum_{i = 1}^n m_{k_1, (t), i}^{\|} m_{k_2, (t), i}^{\|}$ in the right-hand-side of the first equality, by the definition of $m_{k, (t)}^{\|}$, can be written as \[\sum_{r, r' = 0}^t \delta'_{k_1, (r)}\delta'_{k_2, (r')} \frac{1}{n}\sum_{i=1}^n m_{k_1, (r), i}  m_{k_2, (r'),i}.\] 
The almost sure convergence, by Lemma~4, holds for 
\begin{eqnarray*}
   \frac{1}{p}\sum_{j=1}^p [\tilde{\bm X}^\top m^\perp_{k_1, (t)}]_j [\tilde{\bm X}^\top m^\perp_{k_2, (t)}]_j.
\end{eqnarray*}
By the induction hypothesis, the convergence in probability holds for the first term on the right-hand-side of the second equality of \eqref{eq: step 4 sequence m} and \eqref{eq: step 4 limit of h}. And we conclude that 
\[ \frac{1}{n}\sum_{i = 1}^n m_{k_1, (t), i} m_{k_2, (t), i} \stackrel{P}{\to}\lim_{p \to \infty} \frac{1}{p}\sum_{j=1}^p h_{k_1, (t+1), j} h_{k_2, (t+1), j}.\]
    \end{enumerate}
\end{proof}

\subsection{Proof of Lemma~\ref{lemma:normality}}\label{sssec: bivariate normality convergence}
\begin{proof}
    We first take $\tilde\psi_c$ in \eqref{eq:correlation of betas convergence} to be 
    $\tilde\psi_c(x_1,\ldots,x_{t+1}, y) = \tilde\psi(x_{t+1}, y)$. Then, \eqref{eq: weak convergence of betatilde} can be obtained by taking $\tilde\psi(x_{t+1}, y) = \psi(y -  x_{t+1})$ and using \eqref{eq: h_k}.
\end{proof}

\subsection{Proof of Theorem~\ref{thm: covariance construction}}\label{sssec: covariance construction proof}
We first prove \eqref{eq: cov exist} and \eqref{eq: cov converge}. By choosing $\widetilde\psi_c(x_1,\ldots, x_{t+1}, y) = 
- x$ in \eqref{eq:correlation of betas convergence}, we obtain
\begin{eqnarray*}
  \lefteqn{\lim_{p\to \infty} \frac{1}{p} (\widetilde\beta_{k_1, (t+1), j} - \beta_{k_1, j})(\widetilde\beta_{k_2, (t+1), j} - \beta_{k_2, j}) \stackrel{a.s.}{=}\E[(-\bar\zeta_{k_1, (t)} Z_{k_1, (t)})(- \bar\zeta_{k_2, (t)} Z_{k_2, (t)})]}&\\
   &\qquad \qquad \qquad\qquad= \bar\zeta_{k_1, (t)}\bar\zeta_{k_2, (t)} \E[Z_{k_1, (t)} Z_{k_2, (t)}] = \bar\zeta_{k_1, (t)}\bar\zeta_{k_2, (t)} \Cov(Z_{k_1}, Z_{k_2}) .
\end{eqnarray*}
The last equality holds since $(Z_{k_1, (t)}, Z_{k_2, (t)})$ is a standard normal random 2-vector. And we drop the iteration index $t$ for convenience.

Similarly, we choose $\tilde\psi_{c'}(x_{1}, \ldots, x_{t}, y)$ and $\tilde\psi_{c''}(x_{1}, \ldots, x_{t}, y)$ in \eqref{eq:correlations of error convergence} to be the rescaled effective score functions $G_{k_1}(y - x_{t}; \cdot)$ and $G_{k_2}(y - x_{t}; \cdot)$. By \eqref{eq: general recursion for AMP--residual}, we obtain
\begin{eqnarray*}
   \lefteqn{\lim_{n\to\infty} \frac{1}{n}\sum_{i = 1}^n 
   G_{k_1}(z_{k_1, i, (t)}; b_{k_1, (t)})  G_{k_2}(z_{k_2, i, (t)}; b_{k_2, (t)})}& \\
   &\qquad\qquad\qquad \stackrel{a.s.}{=} \E[G_{k_1}(\varepsilon - \bar\sigma_{k_1}\hat Z_{k_1}; b_{k_1, (t)}) G_{k_2}(\varepsilon - \bar\sigma_{k_2}\hat Z_{k_2}; b_{k_2, (t)})]\\
   & \qquad\qquad\qquad= \E[G_{k_1}(\varepsilon + \bar\sigma_{k_1}\hat Z_{k_1}; b_{k_1, (t)}) G_{k_2}(\varepsilon + \bar\sigma_{k_2}\hat Z_{k_2}; b_{k_2, (t)})] .
\end{eqnarray*}
The last equality holds by $Z \stackrel{d}{= } - Z$ for $Z\sim N(0,1)$.

To obtain \eqref{eq: cov est}, we first combine \eqref{eq: general recursion for AMP--residual}, \eqref{eq: h_k}, \eqref{eq: lemma 1c relevant} which leads to 
\begin{eqnarray*}
   \lefteqn{\frac{1}{n}\sum_{i = 1}^n 
   G_{k_1}(z_{k_1, i, (t)}; b_{k_1, (t)}) G_{k_2}(z_{k_2, i, (t)}; b_{k_2, (t)})}&& \\
    && \qquad\qquad \stackrel{P}{\to}\lim_{p\to \infty} \frac{1}{p} (\widetilde\beta_{k_1, (t+1), j} - \beta_{k_1, j})(\widetilde\beta_{k_2, (t+1), j} - \beta_{k_2, j}).
\end{eqnarray*}
Then, when $n, p$ are sufficiently large, we use \eqref{eq: cov est} to approximate the covariance of $\widetilde\beta_{k_1, (t+1)}$ and $\widetilde\beta_{k_2, (t+1)}$.

\subsection{Proof of Theorem~\ref{thm:power function of test}}\label{appendix: proof of convergence of power function}
\begin{proof}
\begin{eqnarray*}
   \lefteqn{\vartheta(\gamma_{0, j}) = I\{T_j \in R_j, \Gamma_0 = \gamma_{0, j}\} }&& \\
   &=& I \Big\{\frac{\Delta \bm{\tilde\xi}_j^\top }{\sqrt{\Delta \Sigma \Delta^\top}} > \Phi^{-1}(1 - \alpha/2)\Big\} + I\Big\{\frac{\Delta \bm{\tilde\xi}_j^\top }{\sqrt{\Delta \Sigma \Delta^\top}} < \Phi^{-1}(\alpha/2)\Big\}\\
   &=& I\Big\{\frac{\Delta (\bm{\tilde\xi}_j - (u_{\tau_1}, u_{\tau_2})\gamma_{0, j})^\top }{\sqrt{\Delta \Sigma \Delta^\top}} > \Phi^{-1}(1 - \alpha/2) - \gamma_{0, j}\frac{\Delta (u_{\tau_1}, u_{\tau_2})^\top}{\sqrt{\Delta \Sigma \Delta^\top}}\Big\}  \\
   &&+ I\Big\{\frac{\Delta(\bm{\tilde\xi}_j- (u_{\tau_1}, u_{\tau_2})\gamma_{0, j})^\top }{\sqrt{\Delta \Sigma \Delta^\top}} < \Phi^{-1}(\alpha/2) - \gamma_{0, j}\frac{\Delta (u_{\tau_1}, u_{\tau_2})^\top}{\sqrt{\Delta \Sigma \Delta^\top}}\Big\} \tcr{.}\\
  \end{eqnarray*}
 By Lemma~\ref{lemma:converge_cov} and define $\widetilde\Xi = (B_0 + u_{\tau_1}\Gamma_0 + \bar\zeta_{1}Z_1, B_0 + u_{\tau_2}\Gamma_0 + \bar\zeta_{2}Z_2)$, where $(\bar\zeta_1 Z_1, \bar\zeta_2 Z_2) \sim N(0, \Sigma)$, we obtain 
 \begin{eqnarray*}
   \lefteqn{\lim_{p\to\infty}\frac{1}{p} \sum_{j = 1}^p \vartheta(\gamma_{0, j}) \stackrel{a.s.}{=}}&& \\
  &&\E I\Big\{\frac{\Delta(\widetilde\Xi - (u_{\tau_1}, u_{\tau_2})\Gamma_0)^\top}{\sqrt{\Delta \Sigma \Delta^\top}} > 
  \Phi^{-1}(1 - \alpha/2) - \Gamma_0\frac{\Delta (u_{\tau_1}, u_{\tau_2})^\top}{\sqrt{\Delta \Sigma \Delta^\top}}
  \Big\}\\
  &&+ \E I\Big\{\frac{\Delta(\widetilde\Xi - (u_{\tau_1}, u_{\tau_2})\Gamma_0)^\top}{\sqrt{\Delta \Sigma \Delta^\top}} < 
  \Phi^{-1}(\alpha/2) - \Gamma_0\frac{\Delta (u_{\tau_1}, u_{\tau_2})^\top}{\sqrt{\Delta \Sigma \Delta^\top}}
  \Big\} \\
  &&= \E I \{ Z_3> 
  \Phi^{-1}(1 - \alpha/2) - \Gamma_0\frac{\Delta (u_{\tau_1}, u_{\tau_2})^\top}{\sqrt{\Delta \Sigma \Delta^\top}}
  \Big\} + \E I\{Z_3 < \Phi^{-1}(\alpha/2) - \Gamma_0\frac{\Delta (u_{\tau_1}, u_{\tau_2})^\top}{\sqrt{\Delta \Sigma \Delta^\top}}
  \Big\} \\
  &&= \E_{\Gamma_0}\Big[1 - \Phi\big(\Phi^{-1}(1 - \alpha/2) - \Gamma_0\frac{\Delta (u_{\tau_1}, u_{\tau_2})^\top}{\sqrt{\Delta \Sigma \Delta^\top}}\big) +\Phi\big(\Phi^{-1}(\alpha/2) - \Gamma_0\frac{\Delta (u_{\tau_1}, u_{\tau_2})^\top}{\sqrt{\Delta \Sigma \Delta^\top}}\big)
  \Big] \tcr{.}
    \end{eqnarray*}

\end{proof}

\section{Auxiliary definitions and lemmas}\label{appendix: auxiliary}
\begin{definition}{(Pseudo Lipschitz function)}\label{def:pseudo_lipschitz}
A function $\phi:\mathbbm R^m \to \mathbbm R$ is pseudo-Lipschitz of order $\kappa\geq 1$, if there exists a constant $L > 0$, such that for all $x, y \in \mathbbm R^m$
\begin{eqnarray*}
   |\phi(x) - \phi(y)| \leq L(1 + \| x\|^{\kappa - 1} + \|y\|^{\kappa -1})\|x - y\|. 
\end{eqnarray*}
\end{definition}

It follows that if $\phi$ is a pseudo Lipschitz function of order $\kappa$, then there exists a constant $L'$ such that for all $x\in \mathbbm R^m: |\phi(x)| \leq L' (1 + \|x\|^{\kappa})$.

\begin{lemma}\label{lemma: pseudo lipschitz}
    For any pseudo Lipschitz functions $\phi_{c'}, \phi_{c''}: \mathbbm R^m \to \mathbbm R$ of order $\kappa_{c'}$ and $\kappa_{c''}$, their product function $\phi_{c'}\phi_{c''}$ is a pseudo Lipschitz function of order $\kappa_{c'} + \kappa_{c''}$.
\end{lemma}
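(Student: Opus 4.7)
The plan is to reduce the identity
\[
\phi_{c'}(x)\phi_{c''}(x) - \phi_{c'}(y)\phi_{c''}(y) = \phi_{c'}(x)\bigl[\phi_{c''}(x) - \phi_{c''}(y)\bigr] + \bigl[\phi_{c'}(x) - \phi_{c'}(y)\bigr]\phi_{c''}(y)
\]
to the desired pseudo-Lipschitz bound by applying the pseudo-Lipschitz property to one factor and the growth bound $|\phi(z)| \le L'(1+\|z\|^\kappa)$ (noted immediately after Definition~\ref{def:pseudo_lipschitz}) to the other factor. Specifically, I would first bound
\[
|\phi_{c'}(x)|\,|\phi_{c''}(x) - \phi_{c''}(y)| \le L'_{c'}\bigl(1+\|x\|^{\kappa_{c'}}\bigr)\, L_{c''}\bigl(1 + \|x\|^{\kappa_{c''}-1} + \|y\|^{\kappa_{c''}-1}\bigr)\|x-y\|,
\]
and symmetrically
\[
|\phi_{c'}(x) - \phi_{c'}(y)|\,|\phi_{c''}(y)| \le L_{c'}\bigl(1 + \|x\|^{\kappa_{c'}-1} + \|y\|^{\kappa_{c'}-1}\bigr)\|x-y\|\, L'_{c''}\bigl(1+\|y\|^{\kappa_{c''}}\bigr).
\]

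Next I would expand the two products above and observe that each resulting coefficient in front of $\|x-y\|$ is a finite sum of monomials of the form $\|x\|^a\|y\|^b$ with $a+b \le \kappa_{c'} + \kappa_{c''} - 1$ (including the pure constants, pure $\|x\|^a$, and pure $\|y\|^b$). To fold mixed monomials into the standard pseudo-Lipschitz form, I would invoke Young's inequality in the form $\|x\|^a\|y\|^b \le \tfrac{a}{a+b}\|x\|^{a+b} + \tfrac{b}{a+b}\|y\|^{a+b}$ (valid for $a,b>0$), followed by the trivial monotonicity bound $\|x\|^{c} \le 1 + \|x\|^{\kappa_{c'}+\kappa_{c''}-1}$ for every $0 \le c \le \kappa_{c'}+\kappa_{c''}-1$, to absorb every monomial into a multiple of $1 + \|x\|^{\kappa_{c'}+\kappa_{c''}-1} + \|y\|^{\kappa_{c'}+\kappa_{c''}-1}$.

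Adding the two bounds via the triangle inequality then yields a constant $L$ (depending only on $L_{c'}, L_{c''}, L'_{c'}, L'_{c''}$ and the orders) such that
\[
|\phi_{c'}(x)\phi_{c''}(x) - \phi_{c'}(y)\phi_{c''}(y)| \le L\bigl(1 + \|x\|^{\kappa_{c'}+\kappa_{c''}-1} + \|y\|^{\kappa_{c'}+\kappa_{c''}-1}\bigr)\|x-y\|,
\]
which is precisely the pseudo-Lipschitz condition of order $\kappa_{c'}+\kappa_{c''}$. There is no substantive obstacle: the only care needed is the bookkeeping in the Young-type step, and the handling of the edge cases $\kappa_{c'}=1$ or $\kappa_{c''}=1$ (where the corresponding exponent becomes zero and the associated term is simply a constant, which is still dominated by $1 + \|x\|^{\kappa_{c'}+\kappa_{c''}-1} + \|y\|^{\kappa_{c'}+\kappa_{c''}-1}$).
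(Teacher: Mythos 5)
Your proposal is correct and follows essentially the same route as the paper's proof: the identical add-and-subtract decomposition, with the pseudo-Lipschitz bound applied to one factor and the growth bound $|\phi(z)|\le L'(1+\|z\|^{\kappa})$ to the other. The only (cosmetic) difference is in the final absorption of cross terms, where you invoke Young's inequality $\|x\|^{a}\|y\|^{b}\le \tfrac{a}{a+b}\|x\|^{a+b}+\tfrac{b}{a+b}\|y\|^{a+b}$ plus monotonicity, while the paper collects everything into $(1+\|x\|+\|y\|)^{\kappa_{c'}+\kappa_{c''}-1}$ and applies the power-sum inequality of Lemma~\ref{lemma: polynomial_ineq}; both yield the same constant-tracking conclusion.
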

\subsection{Proof of Lemma~3}
\begin{proof}
For any pairs $x, y \in \mathbbm R^m$,
   \begin{eqnarray*}
  \lefteqn{|\phi_{c'}(x)\phi_{c''}(x) - \phi_{c'}(y)\phi_{c''}(y)|}&& \\
   &=& |\phi_{c'}(x)\phi_{c''}(x) - \phi_{c'}(x)\phi_{c''}(y) + \phi_{c'}(x)\phi_{c''}(y) - \phi_{c'}(y)\phi_{c''}(y)|\\
   &\leq& |\phi_{c'}(x)\phi_{c''}(x) - \phi_{c'}(x)\phi_{c''}(y)|+ 
   |\phi_{c'}(x)\phi_{c''}(y) - \phi_{c'}(y)\phi_{c''}(y)|\\
   &\leq& |\phi_{c'}(x)| |\phi_{c''}(x) - \phi_{c''}(y)| + 
   |\phi_{c''}(y)| |\phi_{c'}(x) - \phi_{c'}(y)| \\
   &\leq& 2\max\Big(L_1 \|x -y \|(1 + \|x\|^{\kappa_{c'}})(1 + \|x\|^{\kappa_{c''}-1} + \|y\|^{\kappa_{c''} - 1}), \\
   && L_2 \|x -y \|(1 + \|x\|^{\kappa_{c''}})(1 + \|x\|^{\kappa_{c'}-1} + \|y\|^{\kappa_{c'} - 1})
   \Big)\\
   &\leq & \max\Big(L'_1 \|x -y \|(1 + \|x\|^{\kappa_{c''}} + \|y\|^{\kappa_{c''}})(1 + \|x\|^{\kappa_{c'}-1} + \|y\|^{\kappa_{c'} - 1}),\\
  &&L'_2 \|x -y \|(1 + \|x\|^{\kappa_{c''}-1} + \|y\|^{\kappa_{c''}-1})(1 + \|x\|^{\kappa_{c'}} + \|y\|^{\kappa_{c'}})
  \Big)   
   \\
   &\leq& \max\Big(L'_1 \|x -y \| (1 + \|x\| + \|y\|)^{\kappa_{c''}} (1 + \|x\| + \|y\|)^{\kappa_{c'} - 1}, \\
   && L'_2 \|x -y \|(1 + \|x\| + \|y\|)^{\kappa_{c''} - 1} \Big)\\
   &\leq& L''\|x -y \| (1 + \|x\| + \|y\|)^{\kappa_{c'} + \kappa_{c''} - 1} \\
   &\leq& L'' 3^{\kappa_{c'} + \kappa_{c''} - 1} \|x -y \| (1 + \|x\|^{\kappa_{c'} + \kappa_{c''} - 1} + \|y\|^{\kappa_{c'} + \kappa_{c''} - 1} ) \tcr{.}
   \end{eqnarray*}
   The 5th and the last inequalities hold by Lemma~5.
\end{proof}

\begin{lemma}{(\citet[Lemma~2]{bayati2011dynamics})}\label{lemma: random matrix}
    For any deterministic unit vectors $u \in \mathbbm R^p$ and $v \in \mathbbm R^n$, and a random matrix $\tilde{\bm{X}}$ distributed as $\bm X \in \mathbbm{R}^{n\times p}$, the following conclusions hold
    \begin{enumerate}
        \item $v^\top \tilde{\bm{X}} u \stackrel{d}{=} Z/\sqrt{n}$, where $Z \sim N(0,1)$.
        \item $\lim_{n\to\infty} \|\tilde{\bm X} u\|^2 = 1$ almost surely.
        \item Consider a $d$-dimensional subspace $\mathbbm{S} \in \mathbbm{R}^n$ with orthonormal basis $s_1, \ldots, s_d$ and $\|s_i\|^2 = n, i = 1, \ldots, d$. Denote the matrix $\bm{D} = (s_1, \ldots, s_d)$ and the orthogonal projection onto $\mathbbm S$ as $P_{\mathbbm S}$. Then, we have $P_{\mathbbm S} \tilde{\bm{X}} u \stackrel{d}{=} \bm{D} x$ with $x \in \mathbbm R^d$ satisfying $\lim_{n\to \infty}\|x\| \stackrel{a.s.}{=}0$ and $x = o_d(1)$.
    \end{enumerate}
\end{lemma}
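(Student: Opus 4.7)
The plan is to exploit two elementary consequences of Assumption~\ref{cond:design}: (a)~the vector $\tilde{\bm{X}}u \in \mathbbm{R}^n$ is spherical Gaussian with covariance $\|u\|^2 I_n/n = I_n/n$ (rows of $\tilde{\bm{X}}$ are independent Gaussians), and (b)~linear functionals of a spherical Gaussian along orthogonal directions are independent Gaussians. Part~(1) is then immediate: $v^\top \tilde{\bm{X}}u = \sum_{i,j} v_i \tilde{X}_{ij} u_j$ is a centered Gaussian whose variance equals $\sum_{i,j} v_i^2 u_j^2 / n = \|v\|^2 \|u\|^2 / n = 1/n$, so $v^\top \tilde{\bm{X}}u \stackrel{d}{=} Z/\sqrt{n}$.

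For part~(2), I would set $W_i = e_i^\top \tilde{\bm{X}} u$; by (a) and the row-independence of $\tilde{\bm{X}}$, the $W_i$ are i.i.d.\ $N(0,1/n)$, so $n\|\tilde{\bm{X}}u\|^2 = \sum_{i=1}^n (\sqrt{n}\,W_i)^2$ is $\chi^2_n$. Applying the strong law of large numbers to the i.i.d.\ sequence $\{(\sqrt{n}\,W_i)^2\}_{i\ge 1}$, each with mean $\E[Z^2]=1$, yields $\|\tilde{\bm{X}}u\|^2 \to 1$ almost surely.

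For part~(3), I would first observe that $\{s_i/\sqrt{n}\}_{i=1}^d$ is a genuine orthonormal basis of $\mathbbm{S}$ (given orthogonality and $\|s_i\|^2 = n$). Decomposing the projection gives
\[
P_{\mathbbm{S}} \tilde{\bm{X}} u \;=\; \sum_{i=1}^d \big(s_i^\top \tilde{\bm{X}} u/n\big)\, s_i \;=\; \bm{D} x, \qquad x_i = s_i^\top \tilde{\bm{X}} u / n .
\]
Because $\tilde{\bm{X}} u \sim N(0, I_n/n)$ and the $s_i$ are orthogonal with $\|s_i\|^2 = n$, the scalars $s_i^\top \tilde{\bm{X}} u$ are jointly Gaussian with $\Cov(s_i^\top \tilde{\bm{X}} u,\, s_j^\top \tilde{\bm{X}} u) = s_i^\top s_j / n = \delta_{ij}$, hence i.i.d.\ $N(0,1)$. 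Writing these as $Z_1,\ldots,Z_d$ gives $x_i = Z_i/n$ and $\|x\|^2 = n^{-2}\sum_{i=1}^d Z_i^2$, which is a finite $\chi^2_d$ random variable scaled by $n^{-2}$ and therefore converges to $0$ almost surely as $n\to\infty$ with $d$ fixed.

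The argument is essentially bookkeeping on Gaussian structure and presents no serious obstacle; the only subtlety worth flagging is that part~(3) requires $d$ to be small relative to $n$ (even $d = o(n^2)$ suffices for the a.s.\ statement), which is benign in AMP applications since $d$ indexes a finite iteration count. No assumption beyond~\ref{cond:design} is needed, and the three parts are proved in order without inter-dependence.
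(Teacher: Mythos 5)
The paper does not prove this lemma at all: it is imported verbatim as Lemma~2 of \citet{bayati2011dynamics}, so there is no in-paper argument to compare against. Your computations are the standard ones (and essentially those in the cited source): under Assumption~(A4) the entries of $\tilde{\bm X}$ are i.i.d.\ $N(0,1/n)$, so $v^\top\tilde{\bm X}u$ is centered Gaussian with variance $\|v\|^2\|u\|^2/n=1/n$, the components of $\tilde{\bm X}u$ are i.i.d.\ $N(0,1/n)$ giving $n\|\tilde{\bm X}u\|^2\sim\chi^2_n$, and the coordinates $x_i=s_i^\top\tilde{\bm X}u/n$ of the projection are i.i.d.\ $N(0,1/n^2)$, so $\|x\|^2\stackrel{d}{=}\chi^2_d/n^2$. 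All of this is correct.

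The one place your wording needs repair is the invocation of the strong law in parts~(2) and~(3). The matrix $\tilde{\bm X}=\tilde{\bm X}(n)$ changes with $n$ (its entries have variance $1/n$), so $\{(\sqrt{n}W_i)^2\}$ is a triangular array rather than a single i.i.d.\ sequence, and the classical SLLN does not literally apply. The conclusion still holds, but the clean route is distributional identification plus Borel--Cantelli: $\Pr(|\chi^2_n/n-1|>\epsilon)\le 2e^{-cn\epsilon^2}$ is summable for part~(2), and $\Pr(\chi^2_d/n^2>\epsilon)\le d/(\epsilon n^2)$ is summable for part~(3) (so your remark that even growing $d=o(n^2)$ would suffice is fine, though $d$ is fixed here). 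Alternatively one can cite the triangular-array SLLN that the paper already uses elsewhere (Theorem~3 of \citealp{bayati2011dynamics}). With that substitution the proof is complete.
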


\begin{lemma}{(\citet[Theorem~1]{jameson2014some})}\label{lemma: polynomial_ineq}
    If $x_{i} \ge 0$ where $i = 1, \ldots, n$ and $p \ge 1$,  then
      $ \sum_{i = 1}^n x_{i}^p \leq (\sum_{i = 1}^n x_{i})^p \leq n^{p-1} \sum_{i = 1}^n x_{i}^p$.
    The reversed inequality holds for $p \in (0, 1)$
\end{lemma}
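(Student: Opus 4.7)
The plan is to prove both inequalities by reducing to the one-dimensional behaviour of the map $t \mapsto t^p$ on $[0,1]$, distinguishing the convex ($p\ge 1$) and concave ($p\in(0,1)$) regimes. Throughout, I would set $s = \sum_{i=1}^n x_i$ and dispose of the trivial case $s=0$ (where all three quantities vanish) immediately, so that I may assume $s>0$ and consider the normalized weights $y_i = x_i/s \in [0,1]$ with $\sum y_i = 1$.

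For the left inequality $\sum x_i^p \le s^p$ when $p\ge 1$, the plan is to observe that $y_i \in [0,1]$ together with $p \ge 1$ gives $y_i^p \le y_i$ for every $i$; summing over $i$ yields $\sum y_i^p \le \sum y_i = 1$, and multiplying through by $s^p$ produces exactly $\sum x_i^p \le s^p$. For the right inequality $s^p \le n^{p-1}\sum x_i^p$, I would invoke Jensen's inequality applied to the convex function $f(t) = t^p$ with uniform weights $1/n$, giving $\left(\tfrac{1}{n}\sum x_i\right)^p \le \tfrac{1}{n}\sum x_i^p$; rearranging yields the claim. (Equivalently, one could apply H\"older's inequality in the form $\sum x_i \cdot 1 \le \left(\sum x_i^p\right)^{1/p}\left(\sum 1^q\right)^{1/q}$ with $q = p/(p-1)$, then raise both sides to the $p$th power.)

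For the reversed inequalities when $p\in(0,1)$, the same two arguments transfer directly, with each step flipping sign. Now $y_i \in [0,1]$ combined with $p\in(0,1)$ gives $y_i^p \ge y_i$, so $\sum y_i^p \ge 1$, i.e., $\sum x_i^p \ge s^p$. For the companion bound, $f(t) = t^p$ is concave on $[0,\infty)$, so Jensen's inequality now runs in the opposite direction, yielding $\left(\tfrac{1}{n}\sum x_i\right)^p \ge \tfrac{1}{n}\sum x_i^p$ and hence $s^p \ge n^{p-1}\sum x_i^p$.

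There is no serious obstacle: the only subtlety worth flagging is the boundary behaviour at $p = 1$, where all three quantities coincide and both chains of inequalities collapse to equalities, and the edge case where some $x_i = 0$, handled uniformly by the $[0,1]$-normalization. The proof is essentially a two-line consequence of the monotonicity of $t \mapsto t^p$ on $[0,1]$ together with Jensen's inequality, and no AMP machinery or heavier tools are required.
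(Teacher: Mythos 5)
Your proof is correct: the normalization $y_i = x_i/s$ reduces the left inequality to the elementary fact that $t^p \le t$ on $[0,1]$ for $p \ge 1$ (with the inequality reversing for $p \in (0,1)$), and the right inequality is exactly Jensen's (equivalently H\"older's) inequality for the convex, respectively concave, map $t \mapsto t^p$; the degenerate case $s = 0$ is handled. The paper itself supplies no proof of this lemma --- it is imported verbatim as Theorem~1 of \citet{jameson2014some} --- so there is nothing to contrast with; your argument is the standard self-contained one and could serve as a proof of the cited result.
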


\begin{lemma}{\citet[Lemma~4]{bayati2011dynamics}}\label{lemma:lemma_1_coef_converge}
    Let $\kappa \ge 2$ and a sequence of vectors $\{v(N)\}_{N\ge 0}$ whose empirical distribution converges weakly to probability measure $f_{V}$ on $\mathbbm R$ with bounded $\kappa$th moment; additionally, assume that $\lim_{p \to \infty} \E_{\widehat f_{v}}(V^\kappa)  = \E_{f_{V}}(V^\kappa)$. Then for any pseudo Lipschitz function $\psi: \mathbbm R \to \mathbbm R$ of order $\kappa$:
    \begin{eqnarray*}
        \lim_{N\to \infty} \frac{1}{N} \sum_{j = 1}^N \psi(v_j) \stackrel{\rm a.s.}{=} \E[\psi(V)].  
    \end{eqnarray*}
\end{lemma}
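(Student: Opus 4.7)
The plan is to reduce this to the classical fact that weak convergence of probability measures combined with convergence of the $\kappa$th absolute moment implies convergence of integrals against continuous functions of polynomial growth of order $\kappa$. Since $\psi$ is pseudo-Lipschitz of order $\kappa$, the immediate corollary of Definition~\ref{def:pseudo_lipschitz} (stated right after the definition) gives a constant $L'$ with $|\psi(x)| \le L'(1 + |x|^\kappa)$, and $\psi$ is continuous (in fact locally Lipschitz). So the task is to pass from weak convergence tested against bounded continuous functions to integration against a function that grows like $|x|^\kappa$.

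First I would introduce a continuous cutoff $\chi_M:\mathbbm R \to [0,1]$ with $\chi_M(x) = 1$ for $|x| \le M$ and $\chi_M(x) = 0$ for $|x| \ge M+1$, and decompose
\begin{equation*}
\frac{1}{N}\sum_{j=1}^N \psi(v_j) \;=\; \frac{1}{N}\sum_{j=1}^N \psi(v_j)\chi_M(v_j) \;+\; \frac{1}{N}\sum_{j=1}^N \psi(v_j)\bigl(1-\chi_M(v_j)\bigr).
\end{equation*}
The first sum involves a bounded continuous function, so for each fixed $M$ the weak convergence $\widehat f_{v(N)} \Rightarrow f_V$ yields
\begin{equation*}
\lim_{N\to\infty}\frac{1}{N}\sum_{j=1}^N \psi(v_j)\chi_M(v_j) \;=\; \E\bigl[\psi(V)\chi_M(V)\bigr].
\end{equation*}
For the remainder, the pseudo-Lipschitz bound gives
\begin{equation*}
\Bigl|\frac{1}{N}\sum_{j=1}^N \psi(v_j)\bigl(1-\chi_M(v_j)\bigr)\Bigr| \;\le\; L'\,\frac{1}{N}\sum_{j=1}^N (1+|v_j|^\kappa)\,\mathbbm 1\{|v_j|>M\},
\end{equation*}
and the corresponding bound for the limit measure is $L'\,\E[(1+|V|^\kappa)\mathbbm 1\{|V|>M\}]$, which tends to $0$ as $M\to\infty$ by dominated convergence since $\E[|V|^\kappa] < \infty$.

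The main obstacle is controlling the tail sum uniformly in $N$, which is precisely what the extra assumption $\lim_{N\to\infty}\E_{\widehat f_{v(N)}}[|V|^\kappa] = \E_{f_V}[|V|^\kappa]$ buys us. To exploit it I would use the standard fact that weak convergence together with convergence of the $\kappa$th absolute moments implies uniform integrability of $\{|v(N)|^\kappa\}$ when viewed as random variables with laws $\widehat f_{v(N)}$ (this is a Vitali/Scheffé-type statement). Concretely, pick an auxiliary continuous truncation $\phi_M(x) = (|x|^\kappa \wedge M^\kappa)$ and write
\begin{equation*}
\frac{1}{N}\sum_{j=1}^N |v_j|^\kappa \;=\; \frac{1}{N}\sum_{j=1}^N \phi_M(v_j) \;+\; \frac{1}{N}\sum_{j=1}^N (|v_j|^\kappa - \phi_M(v_j));
\end{equation*}
weak convergence controls the first piece, and the assumed convergence of the $\kappa$th moment together with the analogous identity for $V$ forces the second piece to vanish in $N$ as $M\to\infty$. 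This provides the required uniform tail bound and closes the argument by taking $N\to\infty$ and then $M\to\infty$ in the decomposition above.

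Finally, I would note that the almost-sure qualifier in the statement is relevant only when the sequence $\{v(N)\}$ is itself random (as it will be in applications where $v(N)$ depends on $\bm X$ and $\bm\varepsilon$): the deterministic analytic conclusion applies on the full-measure event on which the hypothesis of weak convergence plus moment convergence holds. Thus no genuinely probabilistic ingredient enters the proof beyond this conditioning.
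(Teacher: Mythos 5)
Your proof is correct. Note, however, that the paper does not prove this statement at all: it is imported verbatim as Lemma~4 of \citet{bayati2011dynamics} and used as a black box, so there is no internal proof to compare against. Your truncation argument (split $\psi$ with a continuous cutoff $\chi_M$, use weak convergence on the bounded continuous piece, and control the tail uniformly in $N$ via the convergence of the $\kappa$th absolute moment, i.e.\ the uniform integrability of $\{|v_j|^\kappa\}$ under the empirical measures) is the standard route and is essentially the argument in the cited reference. The only step left implicit is the elementary bound converting $\frac{1}{N}\sum_j |v_j|^\kappa \mathbbm 1\{|v_j|>M\}$ into a multiple of $\frac{1}{N}\sum_j\bigl(|v_j|^\kappa - \phi_{M'}(v_j)\bigr)$ for a slightly smaller truncation level $M'$ (e.g.\ $M'=M/2^{1/\kappa}$, using that $\phi_{M'}(x)\le |x|^\kappa/2$ on $\{|x|>M\}$), which is routine; and you correctly read the hypothesis as convergence of the $\kappa$th \emph{absolute} moment, which is what the cited lemma actually assumes. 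Your closing remark on where the almost-sure qualifier enters is also the right reading of how the lemma is deployed in the AMP induction.
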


\begin{lemma}\label{lemma: stability of Gaussian matrices}
   For any fixed unit vectors $u \in \mathbbm R^p$, $v \in \mathbbm R^n$ and random matrices $\tilde{\bm X}$ distributed as $\bm X \in \mathbbm R^{n\times p}$, it holds that 
 $\tilde{\bm X} u = 
       \Big(\sum_{j = 1}^p \tilde{\bm X}_{1, j} u_j, \cdots, 
      \sum_{j = 1}^p \tilde{\bm X}_{n, j} u_j \Big)^\top
  \stackrel{d}{=} \frac{1}{\sqrt{n}}( Z_{1}, \ldots, Z_{n})^\top \sim N(0, \frac{1}{n}I_n)$, 
where $Z_{i}$'s are i.i.d. samples of a standard normal $N(0,1)$ distributed random variable $Z$. 
Similarly, 
 $\tilde{\bm X}^\top v = 
       \Big(\sum_{i = 1}^n \tilde{\bm X}_{i, 1} u_i, \cdots, 
      \sum_{i = 1}^n \tilde{\bm X}_{i, p} u_i \Big)^\top
      \\
  \stackrel{d}{=} \frac{1}{\sqrt{n}}( Z'_{1}, \ldots, Z'_{p})^\top \sim N(0, \frac{1}{n}I_p)$, 
 where $Z'_{j}$'s are i.i.d. samples of a standard normal $N(0,1)$ distributed random variable $Z'$. 
\end{lemma}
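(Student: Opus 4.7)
The plan is to verify this rotational-invariance-type statement directly from the standard Gaussian design hypothesis \ref{cond:design}, which postulates that $\tilde{\bm X}$ has i.i.d. $N(0, 1/n)$ entries. Both displays are linear functions of jointly Gaussian entries, so everything reduces to computing a mean and a covariance and invoking the fact that uncorrelated jointly Gaussian coordinates are independent.

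\textbf{Step 1 (first display).} I would first fix the unit vector $u \in \mathbbm R^p$ and examine the $i$th coordinate $(\tilde{\bm X} u)_i = \sum_{j=1}^p \tilde{\bm X}_{i,j} u_j$. Since this is a linear combination of the independent Gaussians $\{\tilde{\bm X}_{i,j}\}_{j=1}^p$, each $\sim N(0,1/n)$, it is itself Gaussian with mean zero and variance $\sum_{j=1}^p u_j^2 \cdot (1/n) = \|u\|^2/n = 1/n$. Next, I would check joint independence across $i$: for $i \neq i'$, $(\tilde{\bm X} u)_i$ and $(\tilde{\bm X} u)_{i'}$ are functions of disjoint sets of independent entries ($i$th row versus $i'$th row), so they are independent Gaussians. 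Stacking the coordinates yields $\tilde{\bm X} u \stackrel{d}{=} n^{-1/2}(Z_1,\ldots,Z_n)^\top$ with $Z_i \stackrel{\rm i.i.d.}{\sim} N(0,1)$.

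\textbf{Step 2 (second display).} The same argument transposes verbatim. Fix the unit vector $v \in \mathbbm R^n$ and look at the $j$th coordinate $(\tilde{\bm X}^\top v)_j = \sum_{i=1}^n \tilde{\bm X}_{i,j} v_i$, a Gaussian linear combination with mean zero and variance $\|v\|^2/n = 1/n$. For distinct $j \neq j'$, the two sums involve disjoint columns of $\tilde{\bm X}$ and hence are independent. This gives $\tilde{\bm X}^\top v \stackrel{d}{=} n^{-1/2}(Z_1',\ldots,Z_p')^\top$ with $Z_j' \stackrel{\rm i.i.d.}{\sim} N(0,1)$.

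\textbf{Main obstacle.} There is essentially no mathematical obstacle: the statement is a direct consequence of the closure of the Gaussian family under linear operations combined with the fact that the rows (respectively columns) of $\tilde{\bm X}$ are mutually independent under Assumption \ref{cond:design}. The only care needed is bookkeeping the variance factor $1/n$ correctly and noting that the unit-norm hypothesis is exactly what absorbs $\|u\|^2$ (respectively $\|v\|^2$) to give the stated covariance $n^{-1} I$. This lemma is in fact a restatement of parts (1)--(2) of \citet[Lemma~2]{bayati2011dynamics}, so an alternative one-line route is simply to cite that result and collect the coordinates into a vector.
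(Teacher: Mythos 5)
Your proof is correct. The paper states this lemma without proof (it is listed among the auxiliary lemmas as a standard fact, essentially a vectorized restatement of parts (1)--(2) of Lemma~2 in \citet{bayati2011dynamics}), and your direct computation --- Gaussianity of linear combinations, variance $\|u\|^2/n = 1/n$ absorbed by the unit-norm hypothesis, and independence across coordinates because distinct rows (respectively columns) of $\tilde{\bm X}$ involve disjoint sets of independent entries --- is exactly the argument one would supply. As a minor point, you also implicitly correct a typo in the lemma's second display, where the paper writes $u_i$ in place of $v_i$ inside the sums defining $\tilde{\bm X}^\top v$.
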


\begin{lemma}{(\citet[Theorem~5]{shi2012margin})}\label{lemma: angle preserve}
Assume any $m_1, m_2 \in \mathbbm R^n$ and any random Gaussian matrix $\tilde{\bm X}$ distributed as $\tilde{\bm X}\in \mathbbm R^{n\times p}$. For any $\epsilon\in (0,1)$, if the inner product $\langle m_1, m_2 \rangle > 0$, then with probability at least 
   $1 - 6 \exp (-\frac{n}{2}(\frac{\epsilon^2}{2} - \frac{\epsilon^3}{3})) $,
the following holds
\begin{eqnarray*}
  \lefteqn{\frac{1+\epsilon}{1-\epsilon} \frac{\langle m_1 , m_2 \rangle}{\|m_1\| \|m_2\|} - \frac{2\epsilon}{1-\epsilon} \leq 
\frac{\langle \tilde{\bm X}^\top m_1, \tilde{\bm X}^\top m_2 \rangle}{\|\tilde{\bm X}^\top m_1\| \|\tilde{\bm X}^\top m_2\|}} &&\\
&&\qquad \qquad \leq
1 - \frac{\sqrt{1 - \epsilon^2}}{1 + \epsilon} + \frac{\epsilon}{1 + \epsilon} +
\frac{1-\epsilon}{1+\epsilon}\frac{\langle m_1 , m_2 \rangle}{\|m_1\| \|m_2\|} .
\end{eqnarray*}
\end{lemma}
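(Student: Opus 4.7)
\textbf{Proof plan for Lemma~\ref{lemma: angle preserve}.}

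The plan is to prove this as a Johnson--Lindenstrauss-type angle-preservation result driven by $\chi^2$ concentration for standard Gaussian matrices. First I would reduce to the canonical setting: since both numerator and denominator of $\rho_A := \langle \tilde{\bm X}^\top m_1, \tilde{\bm X}^\top m_2\rangle / (\|\tilde{\bm X}^\top m_1\|\|\tilde{\bm X}^\top m_2\|)$ are homogeneous of matching degree in $\tilde{\bm X}$ and in the norms of $m_1, m_2$, we may assume that $\tilde{\bm X}$ has i.i.d.\ $N(0,1)$ entries (absorbing the Assumption~\ref{cond:design} scaling $1/\sqrt{n}$) and $\|m_1\|=\|m_2\|=1$. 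Write $\rho := \langle m_1, m_2\rangle \in (0,1]$, so the cosine similarity of the input pair is exactly $\rho$.

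The central analytical fact is $\chi^2$ concentration: for any fixed unit $v \in \mathbbm{R}^n$, the components of $\tilde{\bm X}^\top v$ are i.i.d.\ $N(0,1)$, hence $\|\tilde{\bm X}^\top v\|^2 \sim \chi^2_p$, and the Laurent--Massart bound gives
\[
\Pr\!\left[\,\left|\|\tilde{\bm X}^\top v\|^2/p - 1\right| > \epsilon\,\right] \leq 2\exp\!\left(-\tfrac{p}{2}\left(\tfrac{\epsilon^2}{2} - \tfrac{\epsilon^3}{3}\right)\right).
\]
Since the AMP regime has $n \leq p$, replacing $p$ by $n$ in the exponent only weakens the tail and yields the form advertised in the lemma.

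Apply this to three fixed unit vectors, $m_1$, $m_2$, and $(m_1+m_2)/\|m_1+m_2\|$ (well-defined because $\|m_1+m_2\|^2 = 2(1+\rho) > 0$ by the hypothesis $\rho > 0$). A union bound delivers, with probability at least $1 - 6\exp(-\tfrac{n}{2}(\tfrac{\epsilon^2}{2}-\tfrac{\epsilon^3}{3}))$, simultaneous $(1\pm\epsilon)$-multiplicative control of $\|\tilde{\bm X}^\top m_1\|^2$, $\|\tilde{\bm X}^\top m_2\|^2$ and $\|\tilde{\bm X}^\top(m_1+m_2)\|^2$ relative to $p$, $p$, and $2(1+\rho)p$. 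On this event the polarization identity
\[
\langle \tilde{\bm X}^\top m_1, \tilde{\bm X}^\top m_2\rangle = \tfrac{1}{2}\left(\|\tilde{\bm X}^\top(m_1+m_2)\|^2 - \|\tilde{\bm X}^\top m_1\|^2 - \|\tilde{\bm X}^\top m_2\|^2\right)
\]
converts the three norm bounds into two-sided affine-in-$\rho$ bounds on the numerator $\langle \tilde{\bm X}^\top m_1, \tilde{\bm X}^\top m_2\rangle$, after which dividing by the matching bounds on $\|\tilde{\bm X}^\top m_1\|\|\tilde{\bm X}^\top m_2\|$ yields the claimed two-sided control of $\rho_A$.

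The main obstacle is extracting exactly the algebraic form in the lemma, especially the coefficient $\tfrac{1+\epsilon}{1-\epsilon}$ multiplying $\rho$ in the lower bound and the $\tfrac{\sqrt{1-\epsilon^2}}{1+\epsilon}$ term in the upper bound. A naive polarization yields only the symmetric coefficient $\tfrac{1-\epsilon}{1+\epsilon}$ on $\rho$ in both directions, which is strictly weaker. The appearance of $\sqrt{1-\epsilon^2}$ strongly suggests that Shi et al.\ proceed trigonometrically, interpreting the $\epsilon$-relative norm distortion as a rotation of magnitude $\arcsin\epsilon$ on the unit sphere (whose cosine is $\sqrt{1-\epsilon^2}$) and then combining two such rotations via the $\cos(\alpha\pm\beta)$ angle-addition formula. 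Replicating this refinement, rather than the crude direct polarization bound, is the delicate step; everything else is routine inequality bookkeeping on the three concentration events.
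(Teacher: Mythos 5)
The paper does not prove Lemma~\ref{lemma: angle preserve} at all: it is quoted verbatim from \citet[Theorem~5]{shi2012margin} and used as an imported auxiliary fact, so there is no in-paper proof to compare against. Judged on its own terms, your plan assembles the right machinery --- $\chi^2$/Johnson--Lindenstrauss concentration applied to a fixed triple of unit vectors, a union bound over three two-sided events producing the factor $6$, scale-invariance of the cosine ratio, and the observation that the exponent in $n$ is only a weakening of the natural exponent in $p$ since $n\le p$ here. That is indeed the skeleton of the Shi et al.\ argument.

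However, the proof is not closed at exactly the step that constitutes the content of the lemma, and the specific route you sketch provably cannot produce the stated constants. Polarizing with the sum vector, $2\langle a,b\rangle=\|a+b\|^2-\|a\|^2-\|b\|^2$, and then dividing the resulting affine-in-$\rho$ bounds by the worst-case norm product yields a lower bound of the form $\tfrac{1-\epsilon}{1+\epsilon}\rho-\tfrac{2\epsilon}{1+\epsilon}$, which does not dominate (and is not dominated by) the stated $\tfrac{1+\epsilon}{1-\epsilon}\rho-\tfrac{2\epsilon}{1-\epsilon}$; no amount of bookkeeping on those three events in that arrangement recovers the claim. The correct device is to polarize with the \emph{difference} of the normalized inputs and keep the ratio intact: writing $a=\tilde{\bm X}^\top\hat m_1$, $b=\tilde{\bm X}^\top\hat m_2$ with $\hat m_i=m_i/\|m_i\|$, one uses
\begin{equation*}
\frac{\langle a,b\rangle}{\|a\|\,\|b\|}\;\ge\;1-\frac{\|a-b\|^2}{2\|a\|\,\|b\|},
\end{equation*}
then bounds $\|a-b\|^2\le(1+\epsilon)\|\hat m_1-\hat m_2\|^2=(1+\epsilon)(2-2\rho)$ and $\|a\|\,\|b\|\ge 1-\epsilon$ on the concentration event; this gives exactly $\tfrac{1+\epsilon}{1-\epsilon}\rho-\tfrac{2\epsilon}{1-\epsilon}$. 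The upper bound is obtained from the companion identity $\tfrac{\langle a,b\rangle}{\|a\|\|b\|}=\tfrac{\|a\|^2+\|b\|^2}{2\|a\|\|b\|}-\tfrac{\|a-b\|^2}{2\|a\|\|b\|}$, where maximizing $\tfrac{x+y}{2\sqrt{xy}}$ over $x,y\in[1-\epsilon,1+\epsilon]$ is the source of the $\sqrt{1-\epsilon^2}$ term --- no trigonometric rotation argument is needed. You correctly flag the constant-matching as ``the delicate step,'' but since that step is the lemma, deferring it leaves the proposal as a plausible outline rather than a proof; I would also note that the crucial third concentration event should be placed on $\hat m_1-\hat m_2$ rather than on $(m_1+m_2)/\|m_1+m_2\|$ (which would additionally make your appeal to $\langle m_1,m_2\rangle>0$ for well-definedness unnecessary, since $\|\hat m_1-\hat m_2\|$ never degenerates unless $m_1\propto m_2$).
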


\subsection{Numerical performance of the test using the estimator in (2.8)}\label{ssec: test using alternative expectile estimator}
Since estimation is more challenging when the error follows mixed normal distributions, we test the performance of the test with $u_\tau$ estimated by (2.8) in such settings. The estimator in (2.8) is only valid under the null. The performance of the test is reported in Table~\ref{table: simulation using alternative estimator of u_tau}.

 \begin{table}[H]
	\centering
	\bgroup
	\def\arraystretch{1.2}
	\begin{adjustbox}{max width=\textwidth}
	\begin{tabular}{c c c c | c c c } \Xhline{.8pt}
 \rowcolor[gray]{.9}
 $\alpha = 0.05$ & \multicolumn{3}{c}{$ \tau_1 = 0.2, \tau_2 = 0.8$} & \multicolumn{3}{c}{$\tau_1 = 0.6, \tau_2 = 0.8$} \\ \hline
  $\varepsilon$  & $N(0,1)$ & $t_3$ & mixNormal 1  & $N(0,1)$ & $t_3$ & mixnormal 1 \\ \hline
   High-sparsity & 0.05 & 0.05 &  0.06 & 0.05 & 0.05 & 0.05\\
  Medium-sparsity & 0.05 & 0.05 & 0.06 & 0.05 & 0.05 & 0.06 \\ \hline 
     \end{tabular}
     \end{adjustbox}
    \caption{Test results for $\varepsilon$ following $N(0,1)$, $t_3$, $0.9N(-0.2, 0.25) + 0.1N(1.8, 0.01)$ (mixNormal 1) under homoscedastic variance. The nominal significance level $\alpha = 0.05$. The pairs of expectile levels are $(\tau_1, \tau_2)= (0.2,0.8)$ on the left and $(\tau_1, \tau_2)= (0.6,0.8)$ on the right. Each simulation setting is replicated for $R = 400$ times, and the averaged FP proportions are calculated.
	}
	\label{table: simulation using alternative estimator of u_tau}
	\egroup
 \end{table}

\subsection{Figure~1}
\begin{figure}[hptb!]
    \centering
    \begin{minipage}[b]{.32\textwidth}
    \includegraphics[width = \textwidth, height = 3.8cm]{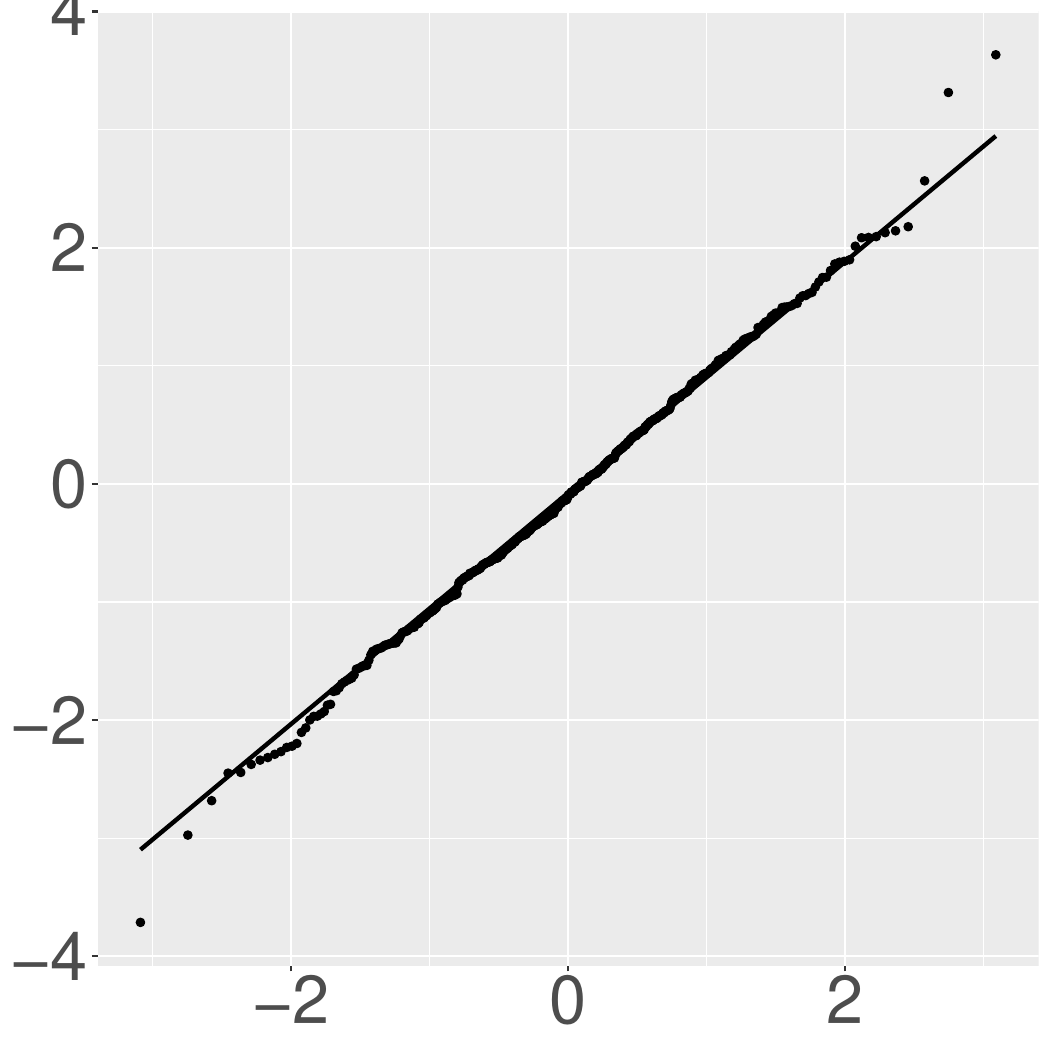}
    \subcaption{High-sparsity $s = 5$; $\tau_1 = 0.1$.}
    \end{minipage}   
    \begin{minipage}[b]{.32\textwidth}
    \includegraphics[width = \textwidth, height = 3.8cm]{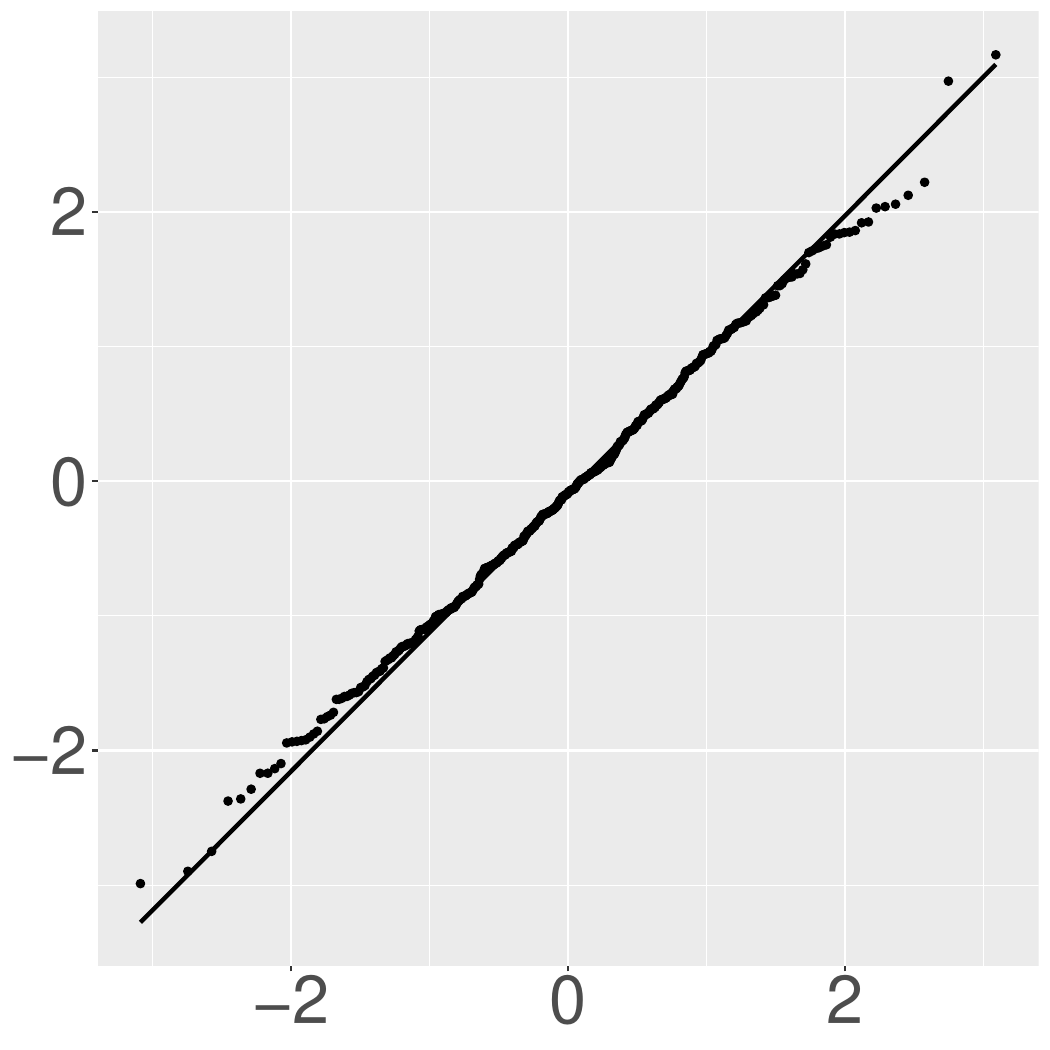}
    \subcaption{High-sparsity $s = 5$; $\tau_1 = 0.2$.}
    \end{minipage}     
    \begin{minipage}[b]{.32\textwidth}
    \includegraphics[width = \textwidth, height = 3.8cm]{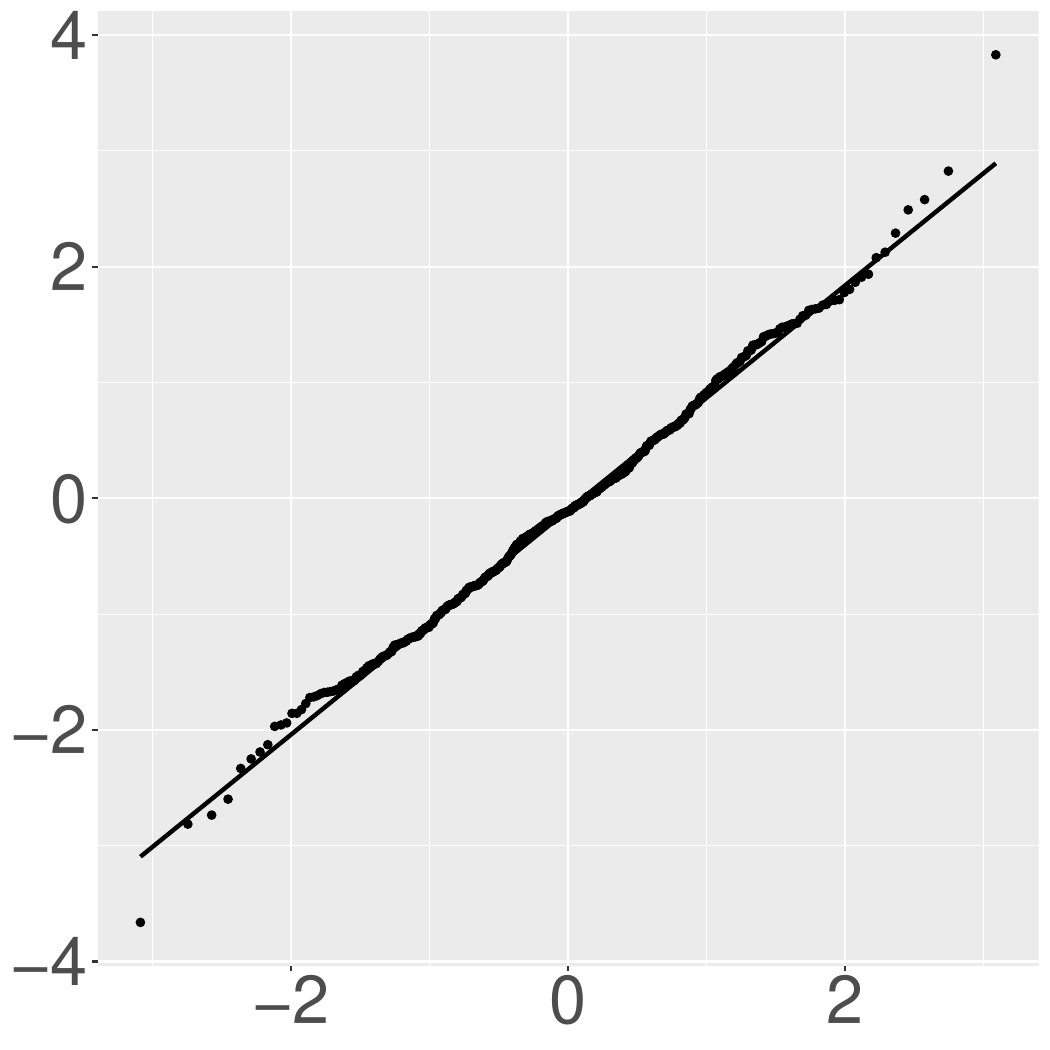}
    \subcaption{High-sparsity $s = 5$; $\tau_1 = 0.6$.}
    \end{minipage}     
    \begin{minipage}[b]{.32\textwidth}
    \includegraphics[width = \textwidth, height = 3.8cm]{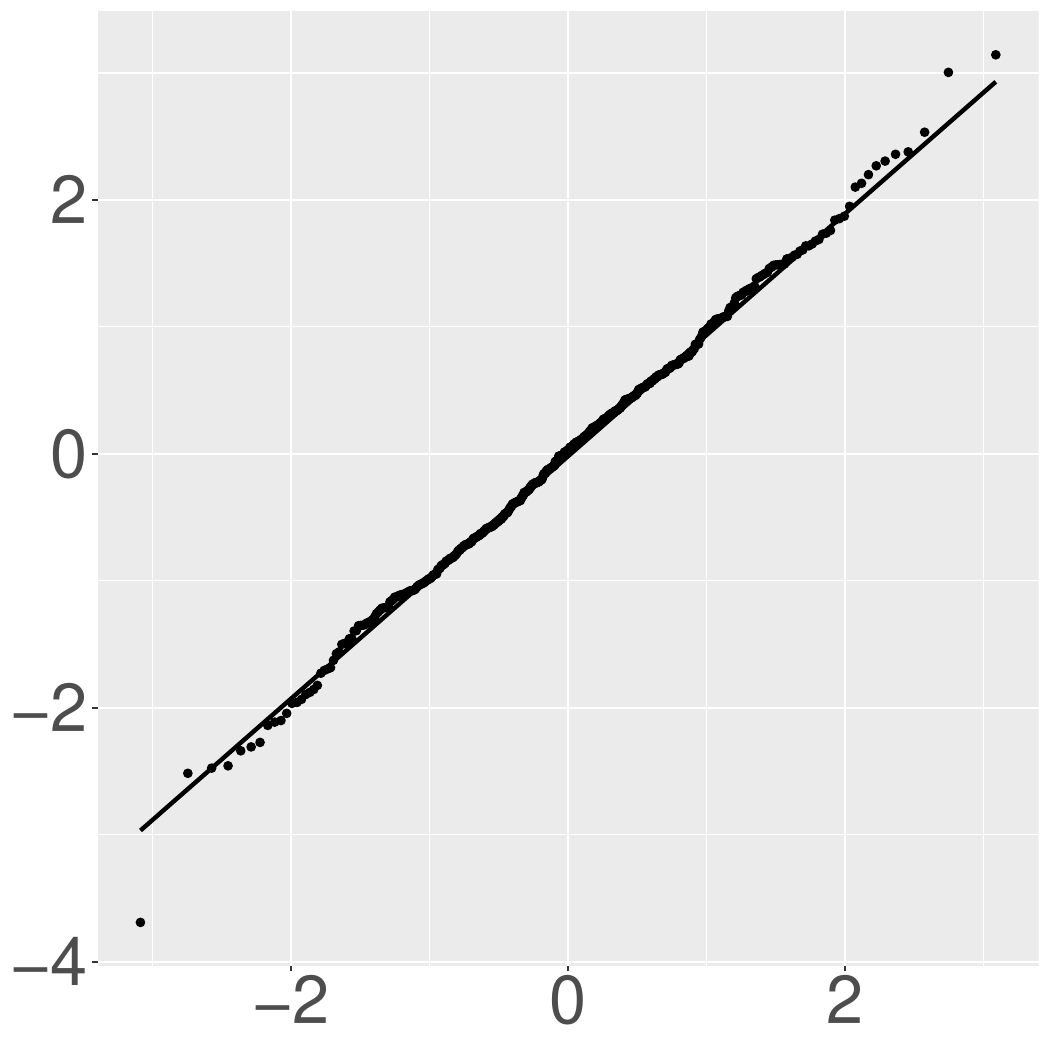}
    \subcaption{Medium-sparsity $s= 40$; $\tau_1 = 0.1$.}
    \end{minipage} 
    \begin{minipage}[b]{.32\textwidth}
    \includegraphics[width = \textwidth, height = 3.8cm]{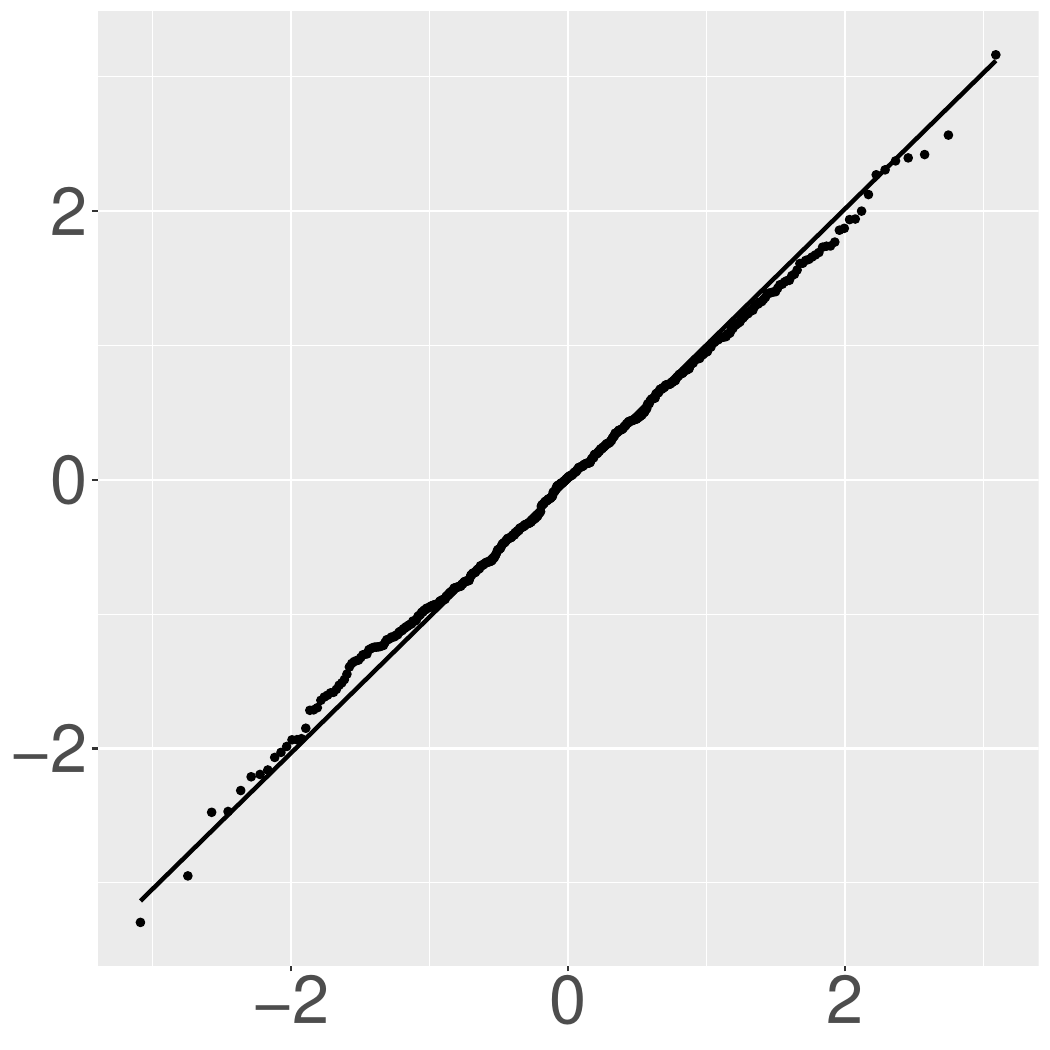}
    \subcaption{Medium-sparsity $s= 40$; $\tau_1 = 0.2$.}
    \end{minipage} 
    \begin{minipage}[b]{.32\textwidth}
    \includegraphics[width = \textwidth, height = 3.8cm]{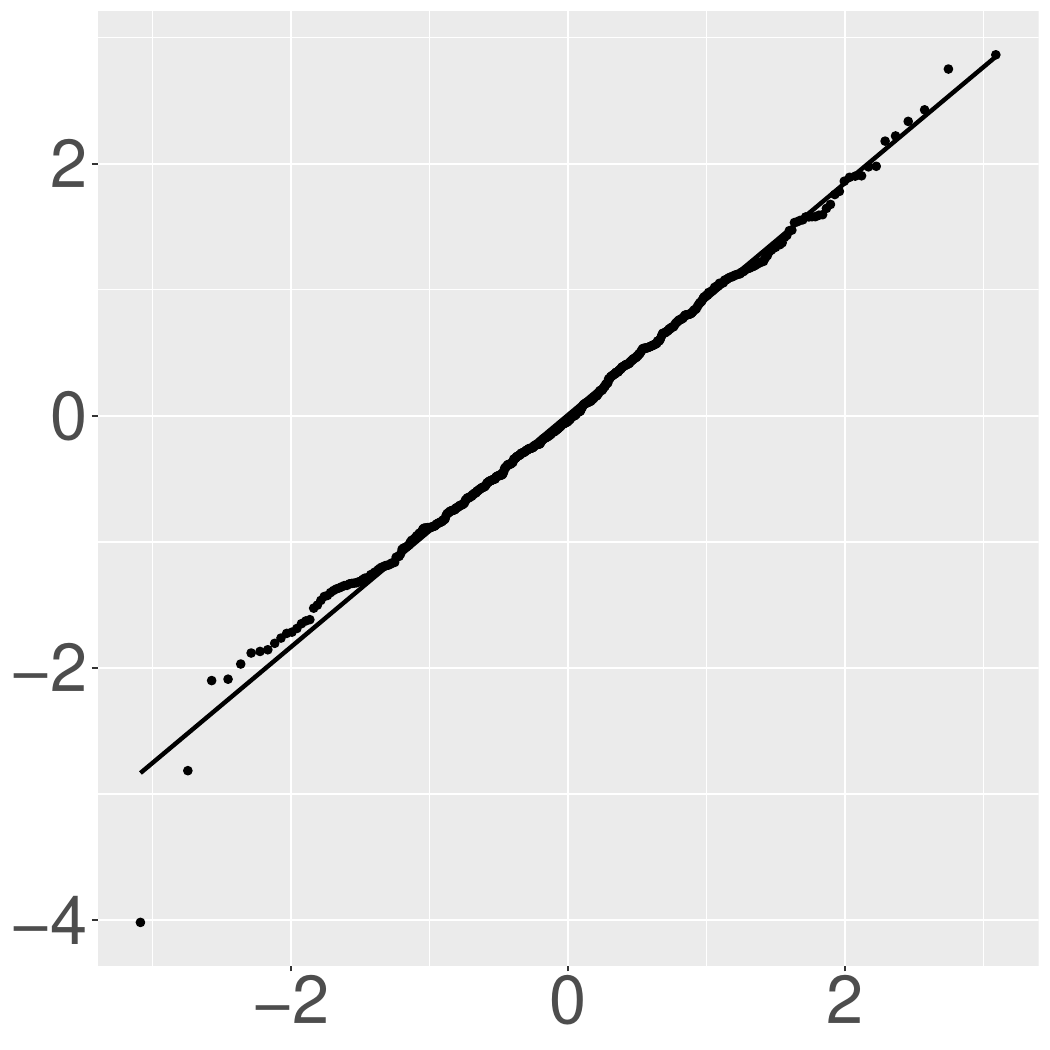}
    \subcaption{Medium-sparsity $s= 40$; $\tau_1 = 0.6$.}
    \end{minipage}
    \caption{Normal QQ-plot of test statistics $T_j$'s based on $\widetilde\beta$ at different expectile levels for three simulation settings. The test statistics $T_j$'s should approximately follow a standard normal distribution $N(0,1)$ under homoscedasticity of $\bm\varepsilon$. The expectile levels fix $\tau_2 = 0.8$ and vary $\tau_1 = 0.1, 0.2, 0.6$; The left, middle, and right columns are plots for $\tau_1= 0.1, 0.2, 0.6$, respectively. The high-sparsity and medium-sparsity plots are on the top and bottom rows, respectively.}
    \label{figure: qq plot}
\end{figure}



\bibliographystyle{apalike}

\bibliography{Expectile}   

@article{cattaneo2018inference,
  title={Inference in linear regression models with many covariates and heteroscedasticity},
  author={Cattaneo, Matias D and Jansson, Michael and Newey, Whitney K},
  journal={Journal of the American Statistical Association},
  volume={113},
  number={523},
  pages={1350--1361},
  year={2018},
  publisher={Taylor \& Francis}
}

@article{javanmard2013state,
  title={State evolution for general approximate message passing algorithms, with applications to spatial coupling},
  author={Javanmard, Adel and Montanari, Andrea},
  journal={Information and Inference: A Journal of the IMA},
  volume={2},
  number={2},
  pages={115--144},
  year={2013},
  publisher={OUP}
}

@article{jochmans2022heteroscedasticity,
  title={Heteroscedasticity-{R}obust {I}nference in {L}inear {R}egression {M}odels with {M}any {C}ovariates},
  author={Jochmans, Koen},
  journal={Journal of the American Statistical Association},
  volume={117},
  number={538},
  pages={887--896},
  year={2022},
  publisher={Taylor \& Francis}
}

@article{el2018impact,
  title={On the impact of predictor geometry on the performance on high-dimensional ridge-regularized generalized robust regression estimators},
  author={El Karoui, Noureddine},
  journal={Probability Theory and Related Fields},
  volume={170},
  pages={95--175},
  year={2018},
  publisher={Springer}
}

@article{lan2016testing,
  title={Testing a single regression coefficient in high dimensional linear models},
  author={Lan, Wei and Zhong, Ping-Shou and Li, Runze and Wang, Hansheng and Tsai, Chih-Ling},
  journal={Journal of Econometrics},
  volume={195},
  number={1},
  pages={154--168},
  year={2016},
  publisher={Elsevier}
}

@article{zhao2022asymptotic,
  title={The asymptotic distribution of the {MLE} in high-dimensional logistic models: {A}rbitrary covariance},
  author={Zhao, Qian and Sur, Pragya and Cand{\`e}s, Emmanuel},
  journal={Bernoulli},
  volume={28},
  number={3},
  pages={1835--1861},
  year={2022},
  publisher={Bernoulli Society for Mathematical Statistics and Probability}
}

@article{feng2022unifying,
  title={A unifying tutorial on {A}pproximate {M}essage {P}assing},
  author={Feng, Oliver and Venkataramanan, Ramji and Rush, Cynthia and Samworth, Richard J},
  journal={Foundations and Trends{\textregistered} in Machine Learning},
  volume={15},
  number={4},
  pages={335--536},
  year={2022},
  publisher={Now Publishers, Inc.}
}

@article{fan2022approximate,
  title={Approximate {M}essage {P}assing algorithms for rotationally invariant matrices},
  author={Fan, Zhou},
  journal={The Annals of Statistics},
  volume={50},
  number={1},
  pages={197--224},
  year={2022},
  publisher={Institute of Mathematical Statistics}
}

@article{belloni2011l1,
  title={$\ell_1$-penalized quantile regression in high-dimensional sparse models},
  author={Belloni, Alexandre and Chernozhukov, Victor},
  journal={The Annals of Statistics},
  volume={39},
  number={1},
  pages={82--130},
  year={2011},
  publisher={Institute of Mathematical Statistics}
}

@article{zheng2013adaptive,
  title={Adaptive penalized quantile regression for high dimensional data},
  author={Zheng, Qi and Gallagher, Colin and Kulasekera, KB},
  journal={Journal of Statistical Planning and Inference},
  volume={143},
  number={6},
  pages={1029--1038},
  year={2013},
  publisher={Elsevier}
}

@article{belloni2011inference,
  title={Inference for {H}igh-{D}imensional {S}parse {E}conometric {M}odels},
  author={Belloni, Alexandre and Chernozhukov, Victor and Hansen, Christian},
  journal={Advances in Economics and Econometrics. 10th World Congress of Econometric Society, Shanghai, 2010},
  year={2011}
}

@article{wu2009variable,
  title={Variable selection in quantile regression},
  author={Wu, Yichao and Liu, Yufeng},
  journal={Statistica Sinica},
  pages={801--817},
  year={2009},
  publisher={JSTOR}
}

@article{koenker2017quantile,
  title={Quantile regression: 40 years on},
  author={Koenker, Roger},
  journal={Annual review of economics},
  volume={9},
  pages={155--176},
  year={2017},
  publisher={Annual Reviews}
}

@article{gu2016high,
  title={High-dimensional generalizations of asymmetric least squares regression and their applications},
  author={Gu, Yuwen and Zou, Hui},
  journal={The Annals of Statistics},
  volume={44},
  number={6},
  pages={2661--2694},
  year={2016},
  publisher={Institute of Mathematical Statistics}
}

@article{breusch1979simple,
  title={A simple test for heteroscedasticity and random coefficient variation},
  author={Breusch, Trevor S and Pagan, Adrian R},
  journal={Econometrica: Journal of the Econometric Society},
  pages={1287--1294},
  year={1979},
  publisher={JSTOR}
}

@article{white1980heteroskedasticity,
  title={A heteroskedasticity-consistent covariance matrix estimator and a direct test for heteroskedasticity},
  author={White, Halbert},
  journal={Econometrica: Journal of the Econometric Society},
  pages={817--838},
  year={1980},
  publisher={JSTOR}
}

@article{belloni2014pivotal,
  title={Pivotal estimation via square-root lasso in nonparametric regression},
  author={Belloni, Alexandre and Chernozhukov, Victor and Wang, Lie},
  journal={The Annals of Statistics},
  volume={42},
  number={2},
  pages={757--788},
  year={2014},
  publisher={Institute of Mathematical Statistics}
}

@article{wang2012quantile,
  title={Quantile regression for analyzing heterogeneity in ultra-high dimension},
  author={Wang, Lan and Wu, Yichao and Li, Runze},
  journal={Journal of the American Statistical Association},
  volume={107},
  number={497},
  pages={214--222},
  year={2012},
  publisher={Taylor \& Francis}
}

@article{zhou2021cross,
  title={Cross-{F}itted {R}esidual {R}egression for {H}igh-{D}imensional {H}eteroscedasticity {P}ursuit},
  author={Zhou, Le and Zou, Hui},
  journal={Journal of the American Statistical Association},
  pages={1--10},
  year={2021},
  publisher={Taylor \& Francis}
}

@article{tibshirani1996regression,
  title={Regression shrinkage and selection via the lasso},
  author={Tibshirani, Robert},
  journal={Journal of the Royal Statistical Society: Series B (Methodological)},
  volume={58},
  number={1},
  pages={267--288},
  year={1996},
  publisher={Wiley Online Library}
}

@article{candes2007dantzig,
  title={The {D}antzig selector: {S}tatistical estimation when $p$ is much larger than $n$},
  author={Cand{\`e}s, Emmanuel and Tao, Terence},
  journal={The Annals of Statistics},
  volume={35},
  number={6},
  pages={2313--2351},
  year={2007},
  publisher={Institute of Mathematical Statistics}
}

@article{universality2015Bayati,
 ISSN = {10505164},
 URL = {http://www.jstor.org/stable/24519934},
 abstract = {We consider a class of nonlinear mappings FA,N in ℝN indexed by symmetric random matrices A ∈ ℝN×N with independent entries. Within spin glass theory, special cases of these mappings correspond to iterating the TAP equations and were studied by Bolthausen [Comm. Math. Phys. 325 (2014) 333–366]. Within information theory, they are known as "approximate message passing" algorithms. We study the high-dimensional (large N) behavior of the iterates of F for polynomial functions F, and prove that it is universal; that is, it depends only on the first two moments of the entries of A, under a sub-Gaussian tail condition. As an application, we prove the universality of a certain phase transition arising in polytope geometry and compressed sensing. This solves, for a broad class of random projections, a conjecture by David Donoho and Jared Tanner.},
 author = {Mohsen Bayati and Marc Lelarge and Andrea Montanari},
 journal = {The Annals of Applied Probability},
 number = {2},
 pages = {753--822},
 publisher = {Institute of Mathematical Statistics},
 title = {UNIVERSALITY IN POLYTOPE PHASE TRANSITIONS AND MESSAGE PASSING ALGORITHMS},
 urldate = {2024-10-31},
 volume = {25},
 year = {2015}
}

@article{chen2021universality,
author = {Wei-Kuo Chen and Wai-Kit Lam},
title = {{Universality of approximate message passing algorithms}},
volume = {26},
journal = {Electronic Journal of Probability},
publisher = {Institute of Mathematical Statistics and Bernoulli Society},
pages = {1 - 44},
keywords = {message passing, spike recovery, spiked random matrix, Universality},
year = {2021},
doi = {10.1214/21-EJP604},
URL = {https://doi.org/10.1214/21-EJP604}
}

@article{rangan2019vector,
  title={Vector approximate message passing},
  author={Rangan, Sundeep and Schniter, Philip and Fletcher, Alyson K},
  journal={IEEE Transactions on Information Theory},
  volume={65},
  number={10},
  pages={6664--6684},
  year={2019},
  publisher={IEEE}
}

@article{li2023spectrum,
  title={Spectrum-aware adjustment: A new debiasing framework with applications to principal components regression},
  author={Li, Yufan and Sur, Pragya},
  journal={arXiv:2309.07810},
  year={2023}
}

@article{huang2022lasso,
  title={{LASSO} risk and phase transition under dependence},
  author={Huang, Hanwen},
  journal={Electronic Journal of Statistics},
  volume={16},
  number={2},
  pages={6512--6552},
  year={2022},
  publisher={The Institute of Mathematical Statistics and the Bernoulli Society}
}

@article{zhang2010nearly,
  title={Nearly unbiased variable selection under minimax concave penalty},
  author={Zhang, Cun-Hui},
  journal={The Annals of Statistics},
  volume={38},
  number={2},
  pages={894--942},
  year={2010},
  publisher={Institute of Mathematical Statistics}
}

@book{fan2020statistical,
  title={Statistical {F}oundations of {D}ata {S}cience},
  author={Fan, Jianqing and Li, Runze and Zhang, Cun-Hui and Zou, Hui},
  year={2020},
  publisher={Chapman and Hall/CRC}
}

@book{buhlmann2011statistics,
  title={Statistics for high-dimensional data: methods, theory and applications},
  author={B{\"u}hlmann, Peter and van De Geer, Sara},
  year={2011},
  publisher={Springer Science \& Business Media}
}

@article{daye2012high,
  title={High-{D}imensional {H}eteroscedastic {R}egression with an {A}pplication to e{QTL} {D}ata {A}nalysis},
  author={Daye, Z John and Chen, Jinbo and Li, Hongzhe},
  journal={Biometrics},
  volume={68},
  number={1},
  pages={316--326},
  year={2012},
  publisher={Wiley Online Library}
}

@article{van2014asymptotically,
  title={On asymptotically optimal confidence regions and tests for high-dimensional models},
  author={van de Geer, Sara and B{\"u}hlmann, Peter and Ritov, Ya’acov and Dezeure, Ruben and others},
  journal={The Annals of Statistics},
  volume={42},
  number={3},
  pages={1166--1202},
  year={2014},
  publisher={Institute of Mathematical Statistics}
}

@article{zou2006adaptive,
  title={The {A}daptive {L}asso and {I}ts {O}racle {P}roperties},
  author={Zou, Hui},
  journal={Journal of the American Statistical Association},
  volume={101},
  number={476},
  pages={1418--1429},
  year={2006},
  publisher={Taylor \& Francis}
}

@article{fan2008sure,
  title={Sure {I}ndependence {S}creening for {U}ltrahigh {D}imensional {F}eature {S}pace},
  author={Fan, Jianqing and Lv, Jinchi},
  journal={Journal of the Royal Statistical Society: Series B (Statistical Methodology)},
  volume={70},
  number={5},
  pages={849--911},
  year={2008},
  publisher={Wiley Online Library}
}

@article{mai2015fused,
  title={The fused {K}olmogorov filter: A nonparametric model-free screening method},
  author={Mai, Qing and Zou, Hui},
  journal={The Annals of Statistics},
  volume={43},
  number={4},
  pages={1471--1497},
  year={2015},
  publisher={Institute of Mathematical Statistics}
}

@article{sur2019likelihood,
  title={The likelihood ratio test in high-dimensional logistic regression is asymptotically a rescaled chi-square},
  author={Sur, Pragya and Chen, Yuxin and Cand{\`e}s, Emmanuel},
  journal={Probability Theory and Related Fields},
  volume={175},
  number={1},
  pages={487--558},
  year={2019},
  publisher={Springer}
}

@article{sur2019modern,
  title={A modern maximum-likelihood theory for high-dimensional logistic regression},
  author={Sur, Pragya and Cand{\`e}s, Emmanuel},
  journal={Proceedings of the National Academy of Sciences},
  volume={116},
  number={29},
  pages={14516--14525},
  year={2019},
  publisher={National Acad Sciences}
}

@article{bradic2016robustness,
  title={Robustness in sparse high-dimensional linear models: Relative efficiency and robust approximate message passing},
  author={Bradic, Jelena},
  journal={Electronic Journal of Statistics},
  volume={10},
  number={2},
  pages={3894--3944},
  year={2016},
  publisher={The Institute of Mathematical Statistics and the Bernoulli Society}
}

@article{bayati2011dynamics,
  title={The {D}ynamics of {M}essage {P}assing on {D}ense {G}raphs, with {A}pplications to {C}ompressed {S}ensing},
  author={Bayati, Mohsen and Montanari, Andrea},
  journal={IEEE Transactions on Information Theory},
  volume={57},
  number={2},
  pages={764--785},
  year={2011},
  publisher={IEEE}
}

@article{su2017false,
  title={False discoveries occur early on the lasso path},
  author={Su, Weijie and Bogdan, Ma{\l}gorzata and Cand{\`e}s, Emmanuel},
  journal={The Annals of Statistics},
  volume={45},
  number={5},
  pages={2133--2150},
  year={2017},
  publisher={Institute of Mathematical Statistics}
}

@article{weinstein2020power,
  title={A power analysis for knockoffs with the lasso coefficient-difference statistic},
  author={Weinstein, Asaf and Su, Weijie J and Bogdan, Ma{\l}gorzata and Barber, Rina F and Cand{\`e}s, Emmanuel},
  journal={arXiv:2007.15346},
  year={2020}
}

@article{weinstein2017power,
  title={A power and prediction analysis for knockoffs with lasso statistics},
  author={Weinstein, Asaf and Barber, Rina and Cand{\`e}s, Emmanuel},
  journal={arXiv:1712.06465},
  year={2017}
}

@article{zhou2020detangling,
  title={Detangling robustness in high dimensions: composite versus model-averaged estimation},
  author={Zhou, Jing and Claeskens, Gerda and Bradic, Jelena},
  journal={Electronic Journal of Statistics},
  volume={14},
  number={2},
  pages={2551--2599},
  year={2020},
  publisher={The Institute of Mathematical Statistics and the Bernoulli Society}
}

@book{koenker_2005, 
      place={Cambridge}, 
      series={Econometric Society Monographs}, 
      title={Quantile Regression}, 
      DOI={10.1017/CBO9780511754098}, 
      publisher={Cambridge University Press}, 
      author={Koenker, Roger}, 
      year={2005}, 
      collection={Econometric Society Monographs}}

@article{koenker1978regression,
  title={Regression quantiles},
  author={Koenker, Roger and Bassett, Gilbert},
  journal={Econometrica: Journal of the Econometric Society},
  pages={33--50},
  year={1978},
  publisher={JSTOR}
}

@article{newey1987asymmetric,
  title={Asymmetric least squares estimation and testing},
  author={Newey, Whitney K and Powell, James L},
  journal={Econometrica: Journal of the Econometric Society},
  pages={819--847},
  year={1987},
  publisher={JSTOR}
}

@article{donoho2016high,
  title={High dimensional robust {M}-estimation: Asymptotic variance via approximate message passing},
  author={Donoho, David and Montanari, Andrea},
  journal={Probability Theory and Related Fields},
  volume={166},
  number={3},
  pages={935--969},
  year={2016},
  publisher={Springer}
}

@article{koenker1982robust,
  title={Robust tests for heteroscedasticity based on regression quantiles},
  author={Koenker, Roger and Bassett, Gilbert},
  journal={Econometrica: Journal of the Econometric Society},
  pages={43--61},
  year={1982},
  publisher={JSTOR}
}

@article{bayati2011lasso,
  title={The {LASSO} risk for {G}aussian matrices},
  author={Bayati, Mohsen and Montanari, Andrea},
  journal={IEEE Transactions on Information Theory},
  volume={58},
  number={4},
  pages={1997--2017},
  year={2011},
  publisher={IEEE}
}

@article{donoho2009message,
  title={Message-passing algorithms for compressed sensing},
  author={Donoho, David and Maleki, Arian and Montanari, Andrea},
  journal={Proceedings of the National Academy of Sciences},
  volume={106},
  number={45},
  pages={18914--18919},
  year={2009},
  publisher={National Acad Sciences}
}

@inproceedings{bayati2013estimating,
  title={Estimating {L}asso {R}isk and {N}oise {L}evel},
  author={Bayati, Mohsen and Erdogdu, Murat and Montanari, Andrea},
  booktitle={Advances in Neural Information Processing Systems},
  pages={944--952},
  year={2013}
}

@article{el2013robust,
  title={On robust regression with high-dimensional predictors},
  author={El Karoui, Noureddine and Bean, Derek and Bickel, Peter and Lim, Chinghway and Yu, Bin},
  journal={Proceedings of the National Academy of Sciences},
  volume={110},
  number={36},
  pages={14557--14562},
  year={2013},
  publisher={National Acad Sciences}
}

@article{javanmard2014hypothesis,
  title={Hypothesis {T}esting in {H}igh-{D}imensional {R}egression under the {G}aussian {R}andom {D}esign {M}odel: {A}symptotic {T}heory},
  author={Javanmard, Adel and Montanari, Andrea},
  journal={IEEE Transactions on Information Theory},
  volume={60},
  number={10},
  pages={6522--6554},
  year={2014},
  publisher={IEEE}
}

@article{donoho1994minimax,
   title={Minimax {R}isk over $\ell_p$-balls for $\ell_q$-error},
   author={Donoho, David and Johnstone, Iain},
   journal={Probability Theory and Related Fields},
   volume={99},
   number={2},
   pages={277--303},
   year={1994},
   publisher={Springer-Verlag}
}

@article{donoho1998minimax,
     title={Minimax estimation via wavelet shrinkage},
     author={Donoho, David and Johnstone, Iain},
     journal={The Annals of Statistics},
     year={1998},
     volume={26},
     number={3},
     pages={879-921}
}

@inproceedings{donoho2010message,
  title={Message passing algorithms for compressed sensing: I. motivation and construction},
  author={Donoho, David and Maleki, Arian and Montanari, Andrea},
  booktitle={2010 IEEE information theory workshop on information theory (ITW 2010, Cairo)},
  pages={1--5},
  year={2010},
  organization={IEEE}
}

@article{candes2020phase,
  title={The phase transition for the existence of the maximum likelihood estimate in high-dimensional logistic regression},
  author={Cand{\`e}s, Emmanuel and Sur, Pragya},
  journal={The Annals of Statistics},
  volume={48},
  number={1},
  pages={27--42},
  year={2020},
  publisher={JSTOR}
}

@article{huang2020asymptotic,
  title={Asymptotic risk and phase transition of $l_1$-penalized robust estimator},
  author={Huang, Hanwen},
  journal={The Annals of Statistics},
  volume={48},
  number={5},
  pages={3090--3111},
  year={2020},
  publisher={Institute of Mathematical Statistics}
}

@inproceedings{shi2012margin,
  title={Is margin preserved after random projection?},
  author={Shi, Qinfeng and Shen, Chunhua and Hill, Rhys and Van Den Hengel, Anton},
  booktitle={Proceedings of the 29th International Conference on International Conference on Machine Learning},
  pages={643--650},
  year={2012}
}

@article{jameson2014some,
  title={Some inequalities for $(a+ b) p$ and $(a+ b) p+(a- b) p$},
  author={Jameson, GJO},
  journal={The Mathematical Gazette},
  volume={98},
  number={541},
  pages={96--103},
  year={2014},
  publisher={Cambridge University Press}
}

@article{zhou2023automatic,
  title={Automatic bias correction for testing in high-dimensional linear models},
  author={Zhou, Jing and Claeskens, Gerda},
  journal={Statistica Neerlandica},
  volume={77},
  number={1},
  pages={71--98},
  year={2023},
  publisher={Wiley Online Library}
}

@article{fhdqr,
  title={{ADMM} for {H}igh-dimensional {S}parse {P}enalized {Q}uantile {R}egression},
  author={Gu, Yuwen and Fan, Jun and Kong, Lingchen and Ma, Shiqian and Zou,Hui},
  journal={Technometrics},
  volume={60},
  number={3},
  pages={319--331},
  year={2018},
}

@article{smoothqr,
  title={Smoothed quantile regression with large-scale inference},
  author={He, Xuming and Pan, Xiaoou and  Tan, Kean Ming and Zhou, Wenxin },
  journal={Journal of Econometrics},
  volume={232},
  number={2},
  pages={367--388},
  year={2023},
}

@article{jia2015preconditioning,
  title={Preconditioning the {L}asso for sign consistency},
  author={Jia, Jinzhu and Rohe, Karl},
  year={2015},
  journal={Electronic Journal of Statistics},
  volumn={9},
  number={1},
  pages={1150-1172},
}



\end{document}